\definecolor{coolblack}{rgb}{0.0, 0.18, 0.39}
\newtheorem{theorem}{Theorem}
\newtheorem{proposition}{Proposition}
\newtheorem{Acorollary}{Corollary}[section] 
\newtheorem{Alemma}{Lemma}[section] 
\newtheorem{definition}{Definition}
\newtheorem{step}{Step}
\newcommand{\p}{\text{{\footnotesize \textcircled{+}}}}
\newcommand{\n}{\text{{\footnotesize \textcircled{-}}}}
\newcommand{\dm}{dm} 
\newcommand{\ubar}[1]{\text{\b{$#1$}}}
\DeclareMathOperator*{\argmax}{arg\,max}
\begin{document}
	
	\pagestyle{plain}
	
	\author{Federico Vaccari\thanks{Laboratory for the Analysis of Complex Economic Systems, IMT School of Advanced Studies, 55100, Lucca, Italy. E-mail: \href{mailto:vaccari.econ@gmail.com}{\sf vaccari.econ@gmail.com}. Santiago Oliveros provided invaluable guidance, support, and help. I am grateful to Luciano Andreozzi, Paulo Barelli, Ennio Bilancini, John Duggan, Srihari Govindan, Mathijs Janssen, Tasos Kalandrakis, David K. Levine, Elena Manzoni, Marco Ottaviani, and Andriy Zapechelnyuk. I thank seminar audiences at Collegio Carlo Alberto, Institute for Microeconomics at the University of Bonn, University of Florence, GRASS XV, 62nd Annual Conference of the Italian Economic Association, and Formal Theory Virtual Workshop. I also thank the Editor, the Associate Editor, and three referees for their constructive comments. Previous versions of this paper circulated under the title ``Competition in Signaling.'' All errors are mine. This project has received funding from the European Union's Horizon 2020 Research and Innovation Programme (Marie Sk\l odowska-Curie grant no.~843315-PEMB).}}
	\title{Competition in Costly Talk}
	\date{}
	\maketitle

	\begin{abstract}		
		\noindent This paper studies a communication game between an uninformed decision maker and two perfectly informed senders with conflicting interests. Senders can misreport information at a cost that increases with the size of the misrepresentation. The main results show that equilibria where the decision maker obtains the complete-information payoff hinge on beliefs with undesirable properties. The imposition of a minimal and sensible belief structure is sufficient to generate a robust and essentially unique equilibrium with partial information transmission. A complete characterization of this equilibrium unveils the language senders use to communicate.
	\end{abstract}
	
	\noindent {\bf JEL codes:} C72, D72, D82
	
	\noindent {\bf Keywords:} multiple senders, competition, communication, costly talk, signaling, lying
	
	\thispagestyle{empty}
	
	\clearpage

	\tableofcontents
	
	\pagebreak

\section{Introduction}

How and how much information is revealed when two equally informed senders with conflicting interests provide advice to a decision maker? When senders are well informed and misreporting is prohibitively expensive, the decision maker can ``rely on the information of the interested parties'' to always make the right choice.\footnote{See, e.g., \cite{milgrom1986}.} However, there are many situations where information is not fully verifiable and it is possible to misreport it at a reasonable cost.\footnote{Misreporting information is a costly activity due to the time and effort that is required to misrepresent the information, or due to the expected loss in reputation, credibility, and future influence, and more. Moreover, misreporting is more difficult, and hence more costly, when information is harder.} Intuition would suggest that, in these cases, the decision maker might obtain conflicting advice and make wrong choices as a result of being poorly informed.

This type of interaction is at the core of a large number of real-world scenarios: candidates competing for consensus during an electoral campaign may provide voters with different accounts of the same events; 
newspapers with opposed political leanings may deliver conflicting and inaccurate news; prosecutors and defendants trying to persuade a jury may tamper with evidence; co-workers competing for a promotion may exaggerate their stated contributions to a team project; and methods used in lobbying against public health may include ``industry-funded research that confuses the evidence and keeps the public in doubt'' \citep{chanwho}.

I address the above questions with a costly signaling game between an uninformed decision maker and two perfectly informed senders with conflicting interests. The two senders observe the realization of a random variable---the state---and then privately deliver a report to the decision maker. These reports are literal statements about the realized state. Senders can misreport such information, but incur misreporting costs that are increasing in the magnitude of misrepresentation. By contrast, reporting truthfully is costless. After observing the reports, the decision maker must select one of two alternatives. At the end of the game, each player obtains a payoff that depends on the realized state and on the alternative selected by the decision maker.

The paper focuses on the attainment of \emph{receiver-efficient} outcomes, whereby the decision maker achieves her full-information payoff. The main results concern the robustness of these outcomes and the characterization of sensible equilibria. First, I show that the existence of receiver-efficient equilibria hinges on beliefs with undesirable properties. In particular, the decision maker must believe that deviations from the equilibrium path occur because a sender purposefully ``shot himself in the foot.'' Second, I impose minimal and natural restrictions on the decision maker's beliefs to study sensible equilibria. In doing so, I study the amount of information that can be transmitted in equilibrium and the language used by senders to deliver such information. I find that receiver-efficiency is no longer an equilibrium outcome once we impose a sensible belief structure.

Two well-known refinements eliminate receiver-efficient and pure-strategy equilibria: unprejudiced beliefs \citep{bagwell1991} and $\varepsilon$-robustness \citep{battaglini2002}.\footnote{See Section~\ref{sec:fre} for a formal definition of unprejudiced beliefs and $\varepsilon$-robustness. I show that these two refinements are tightly connected: in this model, equilibria that are $\varepsilon$-robust must be supported by unprejudiced beliefs (Lemma~\ref{lemma:epsilonunprejudiced}). This result suggests a novel rationale for the use of $\varepsilon$-robustness in multi-sender communication games.} This result motivates the search for mixed-strategy equilibria that are robust to such refinements. To conduct this search, I proceed by imposing two natural restrictions on the decision maker's posterior beliefs. Roughly, the first restriction is a monotonicity condition under which the decision maker interprets some higher reports as originating from higher states. The second restriction is a dominance condition under which the decision maker rules out that senders deliver strictly dominated reports.\footnote{See Definition~\ref{def:directeqa} in Section~\ref{sec:solutions} for a complete and formal statement of these two conditions.} These restrictions are natural given that reports are literal and misreporting is costly. I refer to equilibria satisfying these two conditions as \emph{adversarial equilibria}.

I then provide a complete characterization of adversarial equilibria, and show that, in the game considered here, they possess desirable properties: they always exist, they are essentially unique, and they withstand the refinement criteria that break down receiver-efficient and pure-strategy equilibria. The two conditions imposed on adversarial equilibria, although relatively natural and mild, are sufficient to ensure robustness and uniqueness while preserving existence.

The senders' behavior in an adversarial equilibrium is mixed, as they report the truth with some probability and misreport otherwise. Upon observing two conflicting reports recommending different actions, the decision maker understands that ``the truth is somewhere in between'' and that at least one of the two senders is misreporting. In these cases, the decision maker allocates the burden of proof between the senders in a way that depends on their characteristics, such as their bias and cost structure. Because of the senders' mixed behavior, the decision maker cannot always glean the information she needs by inverting the senders' reports. As a result, information transmission is only partial and persuasion takes place with positive probability.

The setting studied in the main part of the paper is sufficiently rich to draw general conclusions about the main setup. To illustrate the takeaways, I also analyze the specific case where senders have a similar payoff and cost structure, and where the distribution of the state is such that no sender has an ex-ante advantage of any kind. In this symmetric environment, I provide a closed-form solution to adversarial equilibria and show that they naturally display symmetric strategies. The decision maker equally allocates the burden of proof between the senders by following the recommendation of the sender delivering the most extreme report. The senders' misreporting behavior depends on the shape of the common cost function: with convex costs, senders are more likely to convey large misrepresentations of the state rather than small lies, while the opposite is true for concave misreporting costs.

This paper shows that introducing misreporting costs in a model with multiple senders has crucial implications on information transmission. When ``talk is cheap'' and misreporting is costless, no information can be transmitted if senders have conflicting interests. When information is verifiable and misreporting is impossible, only fully revealing equilibria in truthful strategies ensue. By contrast, here there are neither babbling nor truthful equilibria. Adversarial equilibria are not receiver-efficient, although they feature a probabilistic revelation of almost every state. Competition between senders also impacts how communication takes place. In related single-sender models with costly talk, there are fully revealing equilibria where senders play pure strategies. By contrast, here pure-strategy and fully revealing equilibria are not robust. In this model, the transmission of information takes place differently than in comparable models of strategic communication.

The remainder of this article is organized as follows. In Section~\ref{sec:literature}, I discuss the related
literature. Section~\ref{sec:model} introduces the model, which I solve in Sections~\ref{sec:fre} and \ref{sec:mixed}. In Section~\ref{sec:example}, I
provide an example and discuss several extensions of the baseline model. Finally, Section~\ref{sec:conclusion} concludes.
Formal proofs are relegated to Appendix~\ref{sec:app} and \ref{sec:suppapp}.


\section{Related Literature}\label{sec:literature}

This paper relates to models of strategic communication with multiple senders. This line of work shows several channels through which full information revelation can be obtained \citep{battaglini2002,krishna2001exp,milgrom1986}. Papers in this literature typically assume that misreporting is either costless (cheap talk) or impossible (verifiable disclosure). By contrast, in this article misreporting is possible at a cost that depends on the magnitude of misrepresentation. Under this modeling specification, there are no babbling equilibria or equilibria in truthful strategies. I show that fully revealing and non-revealing equilibria co-exist, but only the latter survive natural restrictions in the decision maker's beliefs.

This paper also relates to models of strategic communication with misreporting costs. \cite{kartik2007} show that there exist fully revealing equilibria of single-sender settings with an unbounded state space. In these equilibria, the receiver can always infer the true state by inverting the sender's inflated language. By contrast, there are no fully revealing equilibria of single-sender settings with a bounded state space \citep{ottaviani2006,chen2008,kartik2009,chen2011}. I extend the analysis of communication with misreporting costs to a setup with two competing senders. In this setting, I show that non-revealing equilibria exist and are robust even when the state space is arbitrarily large. In these equilibria, the senders mix between truthful and exaggerated reports. Revelation is a probabilistic phenomenon in the sense that the decision maker fully learns almost every state with some positive probability.

Few other papers study communication with misreporting costs and multiple senders. Among these, \cite{emons2009accuracy} analyze a model with two equally informed senders with opposed interests. The state, report, and action space is the real line. They find a robust fully revealing equilibrium where senders' inflate their reports in opposite directions. As in \cite{kartik2007}, the receiver can invert the senders' reports to fully learn their private information. By contrast, only non-revealing equilibria are robust in my setting where the state space can be unbounded but receiver's action space is discrete.\footnote{Fully revealing outcomes do not survive refinements even in single-sender settings where the receiver's action space is discrete and the state space is continuous and unbounded \citep{vaccari2021influential}.} \cite{kartik2021} consider a model where each sender gets a conditionally independent signal about the state.\footnote{See Section IIID therein. Since senders have different and noisy information, there cannot be equilibria where the state is fully revealed.} They show that senders' strategies are strategic complements: as the misreporting costs of one sender increase, all other senders reveal more information. I obtain an analogous result in my model with perfectly informed senders.\footnote{Such a result follows by performing comparative statics with respect to $k_j$ on equation~\eqref{eq:swing} and looking at its implications on the adversarial equilibrium strategies as in Appendix~\ref{app:aestrat}.}

The combination of multiple senders and signaling costs generates in this paper a framework that resembles those used in all-pay contests.\footnote{This \emph{all-pay} feature is missing in related multi-sender signaling models of electoral competition \citep{banks1990,callander2007}, where only the elected candidate incurs the signaling cost.} Many applications in different areas study persuasion by contending parties as an all-pay contest \citep{skaperdas2012persuasion}. The use of an exogenous ``success function'' eliminates three potential problems. First, it can discard fully revealing outcomes in settings where information asymmetries have relevant consequences (such as in lobbying, campaigning, or litigation). Second, it can circumvent the problem of multiplicity and robustness of equilibria. Third, it can provide tractability.  The analysis of adversarial equilibria shows that all these features are obtained in a setting where the decision maker's behavior is endogenously determined rather than exogenously fixed.

\section{The Model}\label{sec:model}

\noindent {\bf Setup and timeline.} There are three players: two informed senders (1 and 2) and one uninformed decision maker ($\dm$). Let $\theta\in\Theta\subseteq \mathbb{R}$ be the underlying state, distributed according to the full support probability density function $f$. After observing the realized state $\theta$, two senders privately deliver to the decision maker a report $r_j\in \Theta$, where $r_j$ is the report of sender $j$ (he). The decision maker (she), after observing the pair of reports $(r_1,r_2)$ but not the state $\theta$, selects an alternative $a\in\left\{\p,\n\right\}$.

\noindent {\bf Payoffs.} Player $i\in\{1,2,\dm\}$ obtains a payoff of $u_i(a,\theta)$ if the decision maker selects alternative $a$ in state $\theta$. I normalize $u_i\left(\n,\theta\right)=0$ for all $\theta\in\Theta$ and let $u_i(\theta)\equiv u_i(\p,\theta)$, where $u_i(\theta)$ is weakly increasing in $\theta$. The state is a valence or vertical differentiation score, and it is interpreted as the relative quality of alternative $\p$ with respect to alternative $\n$. The decision maker's expected utility from selecting $\p$ given the senders' reports is $U_\dm(r_1,r_2)$.

\noindent {\bf Misreporting costs.} Sender $j$ bears a cost $k_jC_j(r_j,\theta)$ for delivering report $r_j$ when the state is $\theta$. The cost function $C_j$ is continuous, strictly increasing in $|r_j-\theta|$, differentiable for all $r_j\neq\theta$, and satisfies $C_j(\theta,\theta)=0$ for every $\theta\in\Theta$. That is, misreporting is increasingly costly in the magnitude of misrepresentation, while truthful reporting is always costless. Moreover, for every $r_j\in\Theta$, $C_j(r_j,\theta)>C_j(r_j,\theta')$ if $|r_j-\theta|>|r_j-\theta'|$. That is, reports are cheaper when delivered from closer states. The scalar $k_j>0$ is a finite parameter measuring the intensity of misreporting costs.  Sender $j$'s total utility is
\[
w_j(r_j,a,\theta)= \mathds{1}{\{ a=\p \}} u_j(\theta) - k_jC_j(r_j,\theta),
\] 
where $\mathds{1}{\{\cdot\}}$ is the indicator function. It follows that, conditional on the decision maker's eventual choice, both senders prefer to deliver reports that are closer to the truth.

\noindent {\bf Definitions and assumptions.} For simplicity, I assume that the state and report space coincide with the real line, that is, $\Theta=\mathbb{R}$. All results carry through provided that $\Theta$ is large enough (see Appendix~\ref{sec:app}). A generic report $r$ has the literal or exogenous meaning of ``The state is $\theta=r$.'' I say that sender $j$ reports truthfully when $r_j=\theta$, and misreports otherwise. Sometimes, I use $-j$ to denote the sender who is not sender~$j$. 

I define the \emph{threshold} $\tau_i$ as the state in which player $i$ is indifferent between the two alternatives. Formally, $\tau_i:=\{\theta\in\Theta \,|\, u_i(\theta)=0\}$. I assume that utilities $u_i(\theta)$ are such that $\tau_i$ exists and is unique\footnote{These assumptions are for notational convenience. The model can accommodate for senders that always strictly prefer one alternative over the other and for utility functions such that  $u_i(\theta)\neq 0$ for every $\theta\in \Theta$, including step utility functions.} for every $i\in\{1,2,\dm\}$. The threshold $\tau_i$ tells us that player $i$ prefers $\p$ to $\n$ when the state $\theta$ is greater than $\tau_i$. Throughout the paper, I consider the case where senders have opposing biases, i.e., $\tau_1<\tau_\dm<\tau_2$. To make the problem non-trivial I let $\tau_\dm\in\Theta$, and I normalize $\tau_\dm=0$. Under this configuration, the decision maker prefers to select the positive alternative $\p$ when the state $\theta$ takes positive values, and prefers to select the negative alternative $\n$ when the state is negative. I assume that when the decision maker is indifferent between the two alternatives at given beliefs, she selects $\p$.

\begin{figure}
	\centering
	\includegraphics[width=0.65\linewidth]{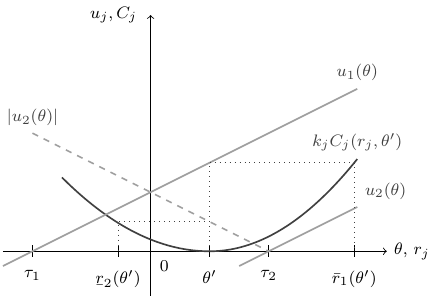}
	\caption{The senders' reaches in state $\theta'$ are marked in the horizontal axis as $\bar r_1(\theta')$ and $\ubar r_2(\theta')$. The misreporting cost function $k_jC_j$, here common to both senders and depicted in full black, is increasing in the magnitude of misreporting. The senders' utilities from the selection of the positive alternative, in full grey, are increasing in the realized state.}
	\label{fig:reach}
\end{figure}

I define the \emph{reach} of sender $j$ in state $\theta$ as the report whose associated misreporting costs offset $j$'s gains from having his own preferred alternative eventually selected. In other words, reports more expensive than the reach are strictly dominated by truthful reporting. As a result, in every equilibrium senders deliver only reports that are within their own reach. Formally, the reach of sender~$1$ in state $\theta$ is defined as\footnote{The definition of reach is sender-dependent, as senders play monotonic strategies in the equilibria we consider. Specifically, sender~1 inflates the realized state, whereas sender~2 belittles it (see Lemma~\ref{lemma:monot}).}
\begin{equation}\label{eq:reach1}
\bar r_1(\theta) := \max\left\{r\in\mathbb{R} \; \text{s.t.} \;  |u_1\left(\theta\right)|= k_1C_1\left(r,\theta\right) \right\}. 
\end{equation}
Similarly, the reach of sender~$2$ in state $\theta$ is defined as
\begin{equation}\label{eq:reach2}
\ubar r_2(\theta) := \min \left\{r\in\mathbb{R} \; \text{s.t.} \;   |u_2\left(\theta\right)|= k_2C_2\left(r,\theta\right) \right\}. 
\end{equation}
I will sometimes use the \emph{inverse reaches} $\bar r_1^{-1}(r_1)$ and $\ubar r_2^{-1}(r_2)$, where $\bar r_1^{-1}(\cdot)$ and $\ubar r_2^{-1}(\cdot)$ map from $\Theta$ to $\Theta$, and are defined as the inverse functions of $\bar r_1(\theta)$ and $\ubar r_2(\theta)$, respectively. Figure~\ref{fig:reach} illustrates the senders' reaches in a state $\theta'\in(\tau_1,\tau_2)$.

\noindent {\bf Strategies.} A pure strategy for sender $j$ is a function $\rho_j:\Theta \to \Theta$ such that $\rho_j(\theta)$ is the report delivered by sender $j$ in state $\theta$. A mixed strategy for sender $j$ is a mixed probability measure $\phi_j: \Theta \to \Delta(\Theta)$, where $\phi_j(r_j,\theta)$ is the mixed probability density that $\phi_j(\theta)$ assigns to a report $r_j\in \Theta$. The cumulative distribution function (CDF) of $\phi_j$ is $\Phi_j$. I denote by $S_j(\theta)$ the support of sender $j$'s strategy in state $\theta$. Appendix~\ref{app:notation} introduces additional notation that is required to study equilibria in mixed strategies.

I say that a pair of reports $(r_1,r_2)$ is off path if, given the senders' strategies, $(r_1,r_2)$ will never be observed by the decision maker. Otherwise, I say that the pair $(r_1,r_2)$ is on path. A posterior belief function for the decision maker is a mapping $p:\Theta^2\to\Delta(\Theta)$ that, given any pair of reports $(r_1,r_2)$, generates posterior beliefs $p(\theta \,|\, r_1,r_2)$ with CDF $P(\theta\,|\,r_1,r_2)$. Given a pair of reports $(r_1,r_2)$ and posterior beliefs $p(\theta \,|\, r_1,r_2)$, the decision maker selects an alternative in the sequentially rational set $\beta(r_1,r_2)$, where
\[
\beta(r_1,r_2)=\argmax_{a\in\{\text{\scriptsize \textcircled{+}},\text{ \scriptsize \textcircled{-}}\}}\mathbb{E}_p\left[u_\dm (a,\theta) \,|\, r_1,r_2\right].
\]
As mentioned above, if $p(\theta\,|\,r_1,r_2)$ is such that $U_\dm(r_1,r_2)=0$, then $\beta(r_1,r_2)=\p$.

\subsection{Solution Concepts}\label{sec:solutions}

The solution concept is perfect Bayesian equilibrium (PBE).\footnote{For a textbook definition of perfect Bayesian equilibrium, see \cite{fudenberg1991game}.} To avoid outcomes that are sustained by an excessively arbitrary interpretation of the senders' reports, I restrict attention to equilibria where posterior beliefs $p$ satisfy the following weak monotonicity condition: for every $r_j\geq r_j'$ and $j\in\{1,2\}$,
\begin{displaymath}\tag{wM}\label{eq:m}
U_\dm(r_1,r_2)\geq U_\dm(r_1',r_2').
\end{displaymath}
The \eqref{eq:m} condition\footnote{Posterior beliefs $p(\theta\,|\,r_1,r_2)$ first-order stochastically dominate $p(\theta\,|\,r_1',r_2')$ for $r_j\geq r_j'$, $j\in\{1,2\}$, if and only if $\int u(\theta)p(\theta\,|\,r_1,r_2)d\theta\geq\int u(\theta)p(\theta\,|\,r_1',r_2')d\theta$ for every weakly increasing utility function $u(\theta)$. Thus, \eqref{eq:m} is weaker than first-order stochastic dominance as it applies only to $u(\theta)\equiv u_\dm(\theta)$.} says that a higher report cannot signal to the decision maker a lower expected utility from selecting alternative $\p$. Focusing on these equilibria is natural given that the value of $\p$ is increasing in the state, reports are literal, and misreporting is costly. An immediate implication of this condition is that senders' strategies are monotonic: in every equilibrium satisfying \eqref{eq:m}, each sender either reports truthfully or exaggerates the relative quality of his preferred alternative.

\begin{restatable}{lemma}{monotlemma}\label{lemma:monot}
	In every perfect Bayesian equilibrium satisfying \eqref{eq:m} we have that $r_j
	\geq \theta$ for all $\theta\geq \tau_j$, and $r_j
	\leq \theta$ for all $\theta\leq \tau_j$, $j\in\{1,2\}$.
\end{restatable}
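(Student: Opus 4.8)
The plan is to argue by contradiction, exploiting the \eqref{eq:m} condition together with the single-crossing structure of the senders' payoffs encoded in the misreporting cost. Fix $j\in\{1,2\}$ and a state $\theta\geq\tau_j$, so that sender $j$ (weakly) prefers alternative $\p$ in state $\theta$. Suppose, contrary to the claim, that there is a report $r_j\in S_j(\theta)$ with $r_j<\theta$, so $r_j$ is played with positive probability in state $\theta$. The idea is to show that $r_j$ is weakly dominated for sender $j$ in state $\theta$ by the truthful report $\theta$, and strictly so whenever the deviation can change the decision maker's action — which, given full support of $f$ and the structure of the game, it must with positive probability. This yields the desired contradiction with $r_j$ being in the support of an equilibrium strategy.

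The key steps, in order, are as follows. First, I would record the cost comparison: since $r_j<\theta$, the cost assumptions give $C_j(r_j,\theta)>0=C_j(\theta,\theta)$, so switching from $r_j$ to $\theta$ strictly saves sender $j$ the amount $k_jC_j(r_j,\theta)>0$ in misreporting costs, uniformly over the opponent's report. Second, I would compare the action payoffs. For any realization $r_{-j}$ of the opponent's report, \eqref{eq:m} implies $U_\dm(\theta,r_{-j})\geq U_\dm(r_j,r_{-j})$ when $j=1$ (higher first coordinate) and $U_\dm(r_{-j},\theta)\geq U_\dm(r_{-j},r_j)$ when $j=2$; in either case raising $j$'s report from $r_j$ to the truth $\theta$ can only (weakly) raise the decision maker's expected value of $\p$, hence can only make her (weakly) more likely to choose $\p$. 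Since $\theta\geq\tau_j$ means sender $j$ values $\p$ at $u_j(\theta)\geq0$, this action effect is weakly beneficial for $j$ as well. Third, I would combine the two: deviating from $r_j$ to $\theta$ is a weakly profitable deviation, and it is \emph{strictly} profitable as soon as the cost saving $k_jC_j(r_j,\theta)>0$ is not exactly offset — and since the action term moves weakly in $j$'s favor, there is no offset at all. Thus the deviation is strictly profitable, contradicting $r_j\in S_j(\theta)$. The symmetric argument with reports above $\theta$ when $\theta<\tau_j$ (so $u_j(\theta)\le 0$ and $j$ prefers $\n$) establishes $\max S_j(\theta)\le\theta$; one uses that lowering $j$'s report toward $\theta$ weakly lowers $U_\dm$ by \eqref{eq:m}, hence weakly raises the probability of $\n$, which $j$ now prefers, while again strictly cutting costs.

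The main obstacle is the boundary case $\theta=\tau_j$, where $u_j(\theta)=0$ and sender $j$ is indifferent between the two alternatives: there the action term contributes exactly zero regardless of the decision maker's choice, but the cost term still strictly favors the truthful report, so the deviation remains strictly profitable and the argument goes through — one just has to be careful to phrase the dominance in terms of the cost term alone at this single state. A secondary, more delicate point is handling mixed strategies of the opponent and mixed (tie-breaking) behavior of the decision maker: one must integrate the above pointwise comparison over $r_{-j}\sim\phi_{-j}(\theta)$ (note both senders observe the same $\theta$) and over the decision maker's possibly randomized choice on ties, and check that the weak pointwise inequalities aggregate to a strict inequality in expectation, which they do because the strictly positive cost saving is present on every path while the action term never points the wrong way. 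No appeal to any later result is needed; Lemma~\ref{lemma:monot} follows purely from \eqref{eq:m} and the maintained assumptions on $C_j$ and $u_j$.
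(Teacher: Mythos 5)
Your proof is correct and takes essentially the same route as the paper's: truthful reporting strictly saves misreporting costs ($C_j(r_j,\theta)>0=C_j(\theta,\theta)$) while, by \eqref{eq:m}, moving the report toward the truth can only shift the decision maker's choice weakly in sender $j$'s favor given the sign of $u_j(\theta)$ relative to $\tau_j$, so any such report is strictly dominated. The additional care you take with the boundary case $\theta=\tau_j$ and with aggregation over the opponent's mixing is harmless but already subsumed by the observation that the strict cost saving holds uniformly over $r_{-j}$.
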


Hereafter, I refer to perfect Bayesian equilibria of the game described in this section that satisfy condition~\eqref{eq:m} simply as equilibria. As we shall see in Section~\ref{sec:fre}, condition~\eqref{eq:m} is not sufficient to rule out equilibria with undesirable properties. To analyze sensible outcomes, I further study a class of equilibria that satisfies two additional restrictions on the decision maker's posterior beliefs. I refer to equilibria satisfying these additional conditions as adversarial equilibria, and analyze them in Section~\ref{sec:mixed}.

\begin{definition}\label{def:directeqa}
	An adversarial equilibrium (AE) is a perfect Bayesian equilibrium of the game described in Section~\ref{sec:model} where the decision maker's posterior beliefs $p$ satisfy the following conditions:
	\begin{enumerate}
		\item[i)] The \eqref{eq:m} condition holds, and for every pair of reports $(r_1,r_2)$ such that $\ubar r_2(0)<r_2\leq 0 \leq r_1<\bar r_1(0)$, and for $j\in\{1,2\}$, we have
		\begin{equation}
			\frac{d U_\dm(r_1,r_2)}{d r_j}>0; \tag{sM}\label{eq:d}
		\end{equation}
		\item[ii)] Once the decision maker observes the pairs of reports $\left(\bar r_1(0),\ubar r_2(0)\right)$ and $(0,0)$, her posterior beliefs $p$ are such that she is indifferent between the two alternatives, i.e., 
		\begin{equation}
			U_\dm\left(\bar r_1(0),\ubar r_2(0)\right)=U_\dm\left(0,0\right)=0. \tag{Dom}\label{eq:c}
		\end{equation}
	\end{enumerate}
\end{definition}

The first condition, \eqref{eq:d}, imposes a stronger monotonicity restriction on posterior beliefs $p$ than \eqref{eq:m}, but only for pairs of reports consisting of conflicting recommendations. Otherwise, \eqref{eq:m} applies. Since \eqref{eq:d} implies \eqref{eq:m}, Lemma~\ref{lemma:monot} applies also to adversarial equilibria. Intuitively, \eqref{eq:d} means that strictly higher conflicting reports inform the decision maker that the expected value of selecting alternative $\p$ is strictly higher. 

Condition~\eqref{eq:c} draws on a simple dominance argument. Recall that, by definition of reach, sender~1 prefers to tell the truth than to deliver $\bar r_1(0)$ when the realized state is strictly negative. Likewise, sender~2 prefers to tell the truth than to deliver $\ubar r_2(0)$ when the realized state is strictly positive. Upon observing the pair of reports $(\bar r_1(0),\ubar r_2(0))$, the decision maker should conjecture that the realized state is zero: for otherwise, it must be that one of the two senders is delivering a strictly dominated report. A similar logic applies to the pair of reports $(0,0)$. Recall that the senders' equilibrium strategies are monotonic (Lemma~\ref{lemma:monot}). Upon observing $(0,0)$, the decision maker should conjecture that the realized state is zero, for otherwise one of the two senders must be delivering a report that is strictly dominated.\footnote{Condition \eqref{eq:c} does not require that the decision maker's posterior beliefs be degenerate at $0$. As we shall see, in every adversarial equilibrium the pair $(\bar r_1(0),\ubar r_2(0))$ is on path only for $\theta=0$, and thus it fully reveals that the state is indeed zero. By contrast, no sender ever delivers $r_j=0$ on path, and thus the pair of reports $(0,0)$ is not only off path but it must constitute a double deviation.}

\section{Receiver-efficient Equilibria and Robustness}\label{sec:fre}

The goal of this section is to study how costly talk communication affects the existence and properties of two important classes of equilibria: babbling and fully revealing. No information is transmitted in the former, whereas the decision maker always learns the state in the latter. Typically, in cheap talk games there is a babbling equilibrium, while in standard signaling and disclosure games there is a fully revealing equilibrium (FRE). In the setup considered here, full revelation can be naturally achieved when senders play truthful strategies, that is, when they always report truthfully the realized state. As the next lemma shows, the introduction of misreporting costs prevents the existence of both babbling and truthful equilibria. 
\begin{restatable}{lemma}{lemmababbling}\label{lemma:babbling}
	There are no babbling equilibria. Misreporting occurs in every equilibrium.
\end{restatable}

Intuitively, babbling cannot occur because, when misreporting is costly, ignored senders best respond by reporting truthfully. Truthful equilibria do not exist because there are always situations where senders can profit from lying if their competitor reports truthfully.\footnote{Lemma~\ref{lemma:babbling} applies to all perfect Bayesian equilibria of the game described in Section~\ref{sec:model}, and not only to those satisfying~\eqref{eq:m}. \cite{battaglini2002} uses an argument similar to the revelation principle to show that, in a multi-sender cheap talk model, if there exists a fully revealing equilibrium then there exists a fully revealing equilibrium in truthful strategies. Such an argument cannot be applied here because of the presence of misreporting costs. In this model there are fully revealing equilibria, but there are no equilibria in truthful strategies.} Since fully revealing outcomes do not necessarily require senders to play truthful strategies, Lemma~\ref{lemma:babbling} does not rule out the existence of fully revealing equilibria.

In what follows, I show that there are equilibria where the decision maker gets the full information payoff.\footnote{Fully revealing equilibria of this model naturally exist when one of the two senders is unbiased or cannot misreport. The former occurs when $\tau_j=\tau_{dm}$, while the latter occurs when $k_j=\infty$.} In the setting studied here, the combination of a rich state space together with a binary action space  implies that the decision maker does not need to know the realized state in order to select her favorite alternative. All she needs to know is whether the state is positive or negative. For the purposes of this section, studying fully revealing equilibria would be too restrictive. The following definition gives a weaker notion of revelation that will prove useful for the analysis that follows.

\begin{definition}
	A receiver-efficient equilibrium (REE) is an equilibrium where for every $\theta\in\Theta$, $r_j\in S_j(\theta)$, and $j\in\{1,2\}$, we have $\beta(r_1,r_2)=\p$ if $\theta\geq 0$, and $\beta(r_1,r_2)=\n$ otherwise.
\end{definition}

\begin{figure}
	\centering
	\includegraphics[width=0.5\linewidth]{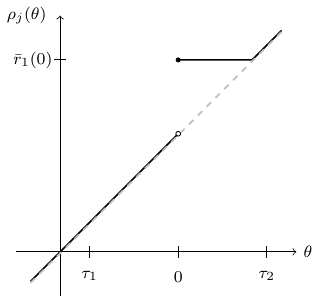}
	\caption{Senders' strategies in a receiver-efficient and fully revealing equilibrium. The reporting rules of senders~1 and 2 are indicated by black and dashed gray lines, respectively.}
	\label{fig:fre}
\end{figure}

A fully revealing equilibrium is also receiver-efficient, but a receiver-efficient equilibrium is not necessarily fully revealing. Figure~\ref{fig:fre} shows reporting strategies that not only constitute a receiver-efficient equilibrium, but are also fully revealing. To verify that Figure~\ref{fig:fre} depicts an equilibrium, consider the following strategies: sender~1 delivers $\rho_1(\theta)=\bar r_1(0)$ for every $\theta\in[0,\bar r_1(0)]$, where for simplicity we assume that $\bar r_1(0)<\tau_2$. Otherwise, sender~1 reports truthfully. By contrast, sender~2 always reports truthfully, i.e., $\rho_2(\theta)=\theta$ for all $\theta\in\Theta$. Given any on path pair of reports, posterior beliefs are such that $P(\theta\,|\,r_1,r_2)=0$ for every $\theta<r_2$ and $P(\theta\,|\,r_1,r_2)=1$ otherwise, which is consistent with sender~2 playing a separating strategy. Off path beliefs are such that  $U_\dm(r_1,r_2)<0$ if $r_1<\bar r_1(0)$, and $P(\theta\,|\,r_1,r_2)=1$ if and only if $\theta\geq r_1 \geq \bar r_1(0)$. By the definition of reach, sender~1 would never find it profitable to deliver a report $r_1\geq \bar r_1(0)$ when $\theta<0$. Sender~2 cannot deviate from his truthful strategy by delivering a negative report when the state is positive: since $\rho_1(\theta)\geq \bar r_1(0)$ for every $\theta\geq 0$, such a deviation would induce $\beta(\cdot,\cdot)=\p$. No sender has a profitable individual deviation from the prescribed equilibrium strategies. As a result, there exist equilibria where senders always fully reveal the state to the decision maker, even though full revelation involves misreporting.

Figure~\ref{fig:fre} also illustrates the existence of equilibria in pure strategies. Intuitively, when senders play pure strategies and in each state at least one of the two senders plays a separating strategy, then the decision maker can always \emph{invert} their reports to recover the underlying truth. This argument suggests that all pure-strategy equilibria are receiver-efficient. The next lemma shows that such an intuition is correct and, in addition, that all receiver-efficient equilibria are in pure strategies. 

\begin{restatable}{lemma}{pureREE}\label{lemma:pureREE}
	An equilibrium is receiver-efficient if and only if it is in pure strategies.
\end{restatable}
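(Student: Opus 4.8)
The plan is to prove the two implications separately. For the ``only if'' direction (receiver-efficient $\Rightarrow$ pure strategies), I would argue by contraposition: suppose some sender, say sender~$j$, mixes in some state $\theta$, so that $S_j(\theta)$ contains two distinct reports $r_j \neq r_j'$. Since both lie in the support of an equilibrium strategy, both are played with positive probability in state $\theta$, and sequential rationality requires that sender~$j$ be indifferent between them, i.e., they induce the same action $\beta(\cdot)$ together with the opponent's reports. The key observation is that a sender who mixes must be doing so because misreporting is sometimes worthwhile, which by Observation~\ref{obs:misrep}-type reasoning forces the decision maker's action to depend non-trivially on reports in a way that cannot perfectly track the sign of $\theta$. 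Concretely, I would show that if sender~$j$ randomizes in state $\theta$, then there is a nearby state $\theta'$ on the opposite side of $0$ from $\theta$ (or an overlapping support across states straddling $0$) such that some pair of reports $(r_1,r_2)$ arises on path both when the true state is positive and when it is negative; at such a pair the decision maker cannot have $\beta=\p$ for the positive state and $\beta=\n$ for the negative one, contradicting receiver-efficiency. The cleanest route is: mixing by $j$ in state $\theta$ means there are at least two reports $r_j, r_j'$ played there; by \eqref{eq:m} and Lemma~\ref{lemma:monot} these reports are ordered and at least one is a genuine misreport strictly away from $\theta$; then I exhibit a distinct state $\theta^\ast$ (with $\theta^\ast$ and $\theta$ on opposite sides of~$0$) whose support overlaps, via the opponent's reporting rule combined with $j$'s, so that the same report pair is on path for both $\theta$ and $\theta^\ast$; receiver-efficiency then demands $\beta$ equal both $\p$ and $\n$ at that pair, a contradiction.

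For the ``if'' direction (pure strategies $\Rightarrow$ receiver-efficient), I would start from an arbitrary pure-strategy equilibrium with strategies $\rho_1, \rho_2$ satisfying \eqref{eq:m}, and the associated beliefs. Fix any state $\theta$ and the on-path report pair $(\rho_1(\theta), \rho_2(\theta))$; I must show $\beta(\rho_1(\theta),\rho_2(\theta)) = \p$ when $\theta \geq 0$ and $= \n$ when $\theta < 0$. Suppose first that $\theta \geq 0$ but the decision maker selects $\n$. Then sender~1 (who strictly prefers $\p$ at $\theta > \tau_1$, and for whom $\theta \geq 0 > \tau_1$ so this holds strictly when $\theta>0$) could profitably deviate to a report that induces $\p$, provided such a report exists within his reach; the definition of the reaches $\bar r_1, \ubar r_1$ and the large-state-space assumption $\Theta \supseteq \hat R$ guarantee sender~1 can issue a sufficiently high report whose cost is still covered by $u_1(\theta)$, and \eqref{eq:m}-consistent beliefs at such a high report must put enough mass on high states that $\beta = \p$. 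The symmetric argument handles $\theta < 0$ with sender~2 deviating downward. The boundary case $\theta = 0$ needs the tie-breaking convention ($\beta = \p$ when indifferent) and a limiting argument as in the proof of Observation~\ref{obs:misrep}. The subtlety here is ruling out that such a profitable deviation report is itself off path with pessimistic beliefs that block $\beta = \p$; this is where \eqref{eq:m} does the work, since it forces beliefs after high reports to dominate beliefs after lower on-path reports, so one cannot arbitrarily punish a sender for reporting ``too high.''

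The main obstacle I anticipate is the ``only if'' direction: carefully constructing the overlapping-support argument that pins down a single report pair arising on path from two states on opposite sides of $0$. One has to handle the possibility that the mixing sender's two reports and the opponent's pure (or also-mixed, but we are in the pure-opponent reasoning only after establishing both mix together) responses interleave in a way that keeps supports disjoint across the $0$ boundary. The resolution is that \eqref{eq:m}/Lemma~\ref{lemma:monot} force monotone, hence ``overlapping,'' supports: a sender who prefers $\p$ never reports below $\theta$, so as $\theta$ decreases toward and past $0$, the range of reports he issues must overlap with the range issued at slightly positive states whenever he mixes, because a mixed action at state $\theta$ spreads mass over an interval of reports and by continuity of the reaches these intervals for adjacent states intersect. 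Combined with the opponent's behavior, this produces the needed coincidence of an on-path report pair, delivering the contradiction with receiver-efficiency. A secondary technical point is that ``mixing'' must be interpreted via the density $\phi_j(\cdot,\theta)$ and supports $S_j(\theta)$, so I would phrase the overlap in terms of these supports having positive-measure (or at least nonempty) intersection across a range of states, which is exactly the structure Lemma~\ref{lemma:monot} provides.
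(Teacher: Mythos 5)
Both directions of your plan have genuine gaps. For the direction ``receiver-efficient $\Rightarrow$ pure strategies,'' you already note the decisive fact---in a REE every on-path report pair at a given state induces the \emph{same} action---but you then abandon it in favor of an overlapping-support construction across the zero boundary. That construction is both unnecessary and unsubstantiated: nothing guarantees that mixing at some $\theta$ produces a report \emph{pair} that is on path at a state of the opposite sign (the mixing could consist of two atoms, the opponent's reports need not coincide across the two states, and supports need not straddle $0$ at all). The argument the paper uses is the one you walked past: since both reports in $S_j(\theta')$ induce the same action, indifference forces $C_j(r_j',\theta')=C_j(r_j'',\theta')$, and by Lemma~\ref{lemma:monot} both reports lie weakly on the same side of $\theta'$, where the cost is strictly monotone in the report, so $r_j'=r_j''$. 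You should finish with this two-line cost argument rather than the cross-state coincidence.

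For the direction ``pure strategies $\Rightarrow$ receiver-efficient,'' your proposed contradiction---that sender~1, facing $\beta=\n$ at some $\theta'\geq 0$, can deviate to a high report within his reach and that \eqref{eq:m} forces $\beta=\p$ there---does not go through. \eqref{eq:m} only orders $U_\dm$ across report pairs; it does not sign it, and beliefs satisfying \eqref{eq:m} can assign $U_\dm(r_1,\rho_2(\theta'))<0$ for \emph{every} $r_1$ in sender~1's reach (the on-path pairs at high states that do induce $\p$ involve a higher $r_2$, so the FOSD comparison runs the wrong way). Indeed, any such equilibrium \emph{must} block this deviation, so the contradiction has to come from elsewhere. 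The paper's route is: sender~1 reports truthfully at $\theta'$ (misreporting to secure one's worse alternative is dominated); for $\beta(\theta',\rho_2(\theta'))=\n$ to be sequentially rational, Bayes' consistency requires the same pair to be on path at some $\theta''<0$, and in fact $\theta''<\tau_1$ since no type in $[\tau_1,0)$ pays a cost to induce $\n$; but then in state $\theta''$ sender~1 can deliver a report $r_1'\in(\text{\b{$r$}}_1(\theta'),\theta')$ that is strictly cheaper and, by the beliefs needed to deter his deviation at $\theta'$, still yields $\n$---a profitable deviation. Your outline is missing this on-path consistency step entirely, and without it the deviation-to-$\p$ argument cannot be repaired.
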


The receiver-efficient equilibrium discussed above is, however, problematic. To see what the problem is, consider again the strategies depicted in Figure~\ref{fig:fre} and a state $\theta'\in(0,\bar r_1(0))$. Suppose that in state $\theta'$ sender~1 deviates from the prescribed equilibrium by reporting the truth instead of $\rho_1(\theta')=\bar r_1(0)$, whereas sender~2 sticks to his separating reporting rule. Notice that, in the equilibrium under consideration, sender~1 never delivers $r_1=\theta'$. Once the decision maker receives the off path pair of reports $(\theta',\theta')$, her posterior beliefs $p$ induce an expected payoff of $U_\dm(\theta',\theta')<0$,
leading to $\beta(\theta',\theta')=\n$. These off path beliefs require the decision maker to conjecture that the state is likely to be negative. However, this means that the decision maker must entertain the possibility that (i) both senders simultaneously deviated from the prescribed equilibrium strategies, and (ii) sender~2 purposefully delivered a strictly dominated report.

In the remaining part of this section, I test receiver-efficient equilibria using two well-known refinements that apply to games with multiple senders: unprejudiced beliefs \citep{bagwell1991} and $\varepsilon$-robustness \citep{battaglini2002}.

\noindent{\bf Unprejudiced beliefs.} Consider again a deviation from the equilibrium depicted in Figure~\ref{fig:fre} where both senders report truthfully in some state $\theta'\in(0,\bar r_1(0))$. If, whenever possible, the decision maker conjectures deviations as individual and thus as originating from one sender only, then she should infer that sender~1  performed the deviation: sender~1 never reports $r_1=\theta'$ on the equilibrium path, whereas sender~2 truthfully reports $r_2=\theta'$ only when the state is indeed $\theta'$. Since sender~2 is following his separating strategy, the decision maker should infer that the state is $\theta'>0$. According to this line of reasoning, off path beliefs must be such that $P(\theta\,|\,\theta',\theta')=1$ if and only if $\theta\geq\theta'$, and thus $\beta(\theta',\theta')=\p$. Such a deviation becomes profitable for sender~1 because it economizes on misreporting costs without affecting the outcome.

\cite{bagwell1991} introduce the concept of ``unprejudiced beliefs,'' which formalize the idea that the decision maker should rule out the possibility that multiple senders are deviating at the same time whenever it is possible that only a single sender is deviating. \cite{vida2021} show that, in generic multi-sender signaling games, strategic stability \citep{kohlberg1986} implies unprejudiced beliefs. Apart from their association with the notion of strategic stability, unprejudiced beliefs are intuitive, easily applicable, and consistent with the notion of Nash equilibrium and, as such, constitute a sensible way to refine equilibria in multi-sender signaling games when other criteria fail to do so. The following definition formalizes unprejudiced beliefs.\footnote{Definition~\ref{def:unprejudiced} is weaker than the definition originally introduced by \cite{bagwell1991}.}

\begin{definition}[\citealp{vida2021}]\label{def:unprejudiced}
	Given senders' strategies $\rho_j$, the decision maker's posterior beliefs $p$ are unprejudiced if, for every pair of reports $(r_1,r_2)$ such that $\rho_j(\theta')=r_j$ for some $\theta'\in\Theta$ and $j\in\{1,2\}$, we have that $p(\theta'' \,|\, r_1,r_2)>0$ only if there is a sender $i\in\{1,2\}$ such that $\rho_i(\theta'')=r_i$.
\end{definition}

We have seen how the above informational free-riding argument breaks down the receiver-efficient equilibrium depicted in Figure~\ref{fig:fre}. A natural question is whether such an argument applies only in that particular case or if instead it rules out other equilibria. The next proposition tells us that in fact there is no receiver-efficient equilibrium that supports unprejudiced beliefs.

\begin{restatable}{proposition}{nonREE}\label{prop:nonREE}
	There are no receiver-efficient equilibria with unprejudiced beliefs.
\end{restatable}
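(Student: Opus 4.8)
The plan is to combine Lemma~\ref{lemma:pureREE} with the informational free-riding argument that breaks the example in Figure~\ref{fig:fre}, and to show it applies to \emph{every} receiver-efficient equilibrium. By Lemma~\ref{lemma:pureREE}, any REE is in pure strategies, so write $\rho_1,\rho_2$ for the senders' pure reporting rules. The first step is to observe that, since misreporting occurs in every equilibrium (Observation~\ref{obs:misrep}), there is a state $\theta'$ and a sender---say sender~1, by an argument symmetric in the labels---who misreports in $\theta'$, i.e.\ $\rho_1(\theta')\neq\theta'$. By Lemma~\ref{lemma:monot}, if $\theta'\geq\tau_1$ then $\rho_1(\theta')>\theta'$; one wants to locate a misreporting state on the ``decision-maker-relevant'' side, so I would first argue that misreporting must occur for some $\theta'$ with $\theta'\geq 0$ (if sender~1 reported truthfully on all of $[0,\max\Theta)$, then by receiver-efficiency and \eqref{eq:m} the decision maker's behavior on that range is pinned down, and one can push the contradiction from Observation~\ref{obs:misrep} onto the negative states only if sender~2 also behaves truthfully there, which would make the equilibrium truthful on a neighborhood of $0$---contradicting Observation~\ref{obs:misrep}). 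So fix $\theta'>0$ with $\rho_1(\theta')>\theta'$.

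The second step is the free-riding deviation. Consider the off-path pair $(r_1,r_2)=(\theta',\rho_2(\theta'))$: sender~1 deviates to reporting the truth while sender~2 plays his equilibrium report. I claim unprejudiced beliefs force $\beta(\theta',\rho_2(\theta'))=\p$. Indeed, for $p(\theta''\mid\theta',\rho_2(\theta'))>0$ Definition~\ref{def:unprejudiced} requires either $\rho_1(\theta'')=\theta'$ or $\rho_2(\theta'')=\rho_2(\theta')$. The key sub-claim is that every such $\theta''$ satisfies $\theta''\geq 0$, so that $U_\dm(\theta',\rho_2(\theta'))\geq 0$ and hence $\beta(\cdot)=\p$; then sender~1's deviation strictly lowers his misreporting cost from $k_1C_1(\rho_1(\theta'),\theta')>0$ to $k_1C_1(\theta',\theta')=0$ while leaving the decision maker's choice at $\p$, his preferred alternative in $\theta'>0$---a profitable deviation, contradiction.

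The crux---and the main obstacle---is the sub-claim that every $\theta''$ compatible with unprejudiced beliefs at $(\theta',\rho_2(\theta'))$ is nonnegative. For the sender-1 branch: if $\rho_1(\theta'')=\theta'$ with $\theta''<0$, then by Lemma~\ref{lemma:monot} we would need $\theta''<\tau_1$ and $\rho_1(\theta'')\leq\theta''<0<\theta'$, contradicting $\rho_1(\theta'')=\theta'>0$; so the sender-1 branch forces $\theta''\geq 0$ directly---provided $\theta'<\tau_1$ is impossible, which holds since $\theta'>0>\tau_1$ is false, i.e.\ $\tau_1<0<\theta'$ so indeed $\theta'\geq\tau_1$ and monotonicity bites the right way. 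The delicate branch is sender~2: if $\rho_2(\theta'')=\rho_2(\theta')$ with $\theta''<0\leq\theta'$, then sender~2 sends the same report in a negative and a positive state, so along $\rho_2$ the report $\rho_2(\theta')$ is not separating. Here I would invoke receiver-efficiency applied to the on-path pair $(\rho_1(\theta'),\rho_2(\theta'))$ in state $\theta'$: it requires $\beta=\p$ there; and applied to $(\rho_1(\theta''),\rho_2(\theta''))$ in state $\theta''<0$: it requires $\beta=\n$ there. Since the equilibrium is pure and in each state at least one sender must be ``reversible'' for receiver-efficiency to hold simultaneously in $\theta'$ and $\theta''$, and since sender~2 pools $\theta''$ with $\theta'$, the decision maker must be distinguishing these states through sender~1's report---so $\rho_1(\theta'')\neq\rho_1(\theta')$ and in fact $\rho_1$ must separate $\theta''$ from $\theta'$. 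But then the off-path pair $(\theta',\rho_2(\theta'))$ does \emph{not} get any belief weight from this $\theta''$ via the sender-2 branch unless sender-1 also could have produced $\theta'$ from $\theta''$, which we just excluded. Tying these observations together---essentially, that sender~2's non-separating reports must be ``covered'' by sender~1 separating, so they cannot leak negative states into the free-riding pair---is the technical heart of the argument and the step I expect to require the most care. Once the sub-claim is secured, the contradiction closes the proof.
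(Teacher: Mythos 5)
Your overall strategy is the same as the paper's---reduce to pure strategies via Lemma~\ref{lemma:pureREE}, then make the ``free-riding'' deviation to the truth profitable by showing unprejudiced beliefs must load on nonnegative states---but the execution has a genuine gap at exactly the point you flag as the technical heart. The missing ingredient is an elementary equilibrium fact the paper establishes up front: in any REE each sender reports truthfully in every state where his \emph{least}-preferred alternative is implemented, i.e.\ $\rho_2(\theta)=\theta$ for all $\theta\in[0,\tau_2]$ and $\rho_1(\theta)=\theta$ for all $\theta\in[\tau_1,0)$ (misreporting is strictly costly and the outcome is already decided against that sender, so switching to the truth is strictly profitable). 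Without it, neither branch of your sub-claim closes. In the sender-1 branch, Lemma~\ref{lemma:monot} does \emph{not} force $\theta''<\tau_1$: for $\theta''\in[\tau_1,0)$ monotonicity only gives $\rho_1(\theta'')\geq\theta''$, which is perfectly compatible with $\rho_1(\theta'')=\theta'>0$; it is the truthfulness fact, not monotonicity, that excludes such states. In the sender-2 branch, your attempted closure misreads Definition~\ref{def:unprejudiced}: belief weight on $\theta''<0$ is permitted whenever $\rho_2(\theta'')=\rho_2(\theta')$ holds \emph{alone}, irrespective of whether sender~1 could also have produced $r_1=\theta'$ in state $\theta''$, so showing that $\rho_1$ separates $\theta''$ from $\theta'$ does not prevent unprejudiced beliefs from putting mass on $\theta''<0$ and prescribing $\n$, which would destroy the free-riding deviation. (Your step~1, locating the misreporting state, also implicitly needs this truthfulness fact together with the two-sided deviation argument behind Observation~\ref{obs:misrep} to restrict attention to $\theta'\in[0,\tau_2)$ or its mirror image.)

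Once that fact is in place, both branches are immediate and your plan essentially becomes the paper's proof: since $\rho_2(\theta')=\theta'\geq 0$, Lemma~\ref{lemma:monot} gives $\rho_2(\theta'')\leq\theta''<0\leq\theta'$ for every $\theta''<0$, so no negative state can match sender~2's report; and truthfulness of $\rho_1$ on $[\tau_1,0)$ together with $\rho_1(\theta'')\leq\theta''$ for $\theta''<\tau_1$ rules out negative states matching sender~1's report. Unprejudiced beliefs then place all weight on nonnegative states, so $U_\dm(\theta',\theta')\geq 0$, $\beta(\theta',\theta')=\p$, and sender~1 profitably deviates to truthful reporting. (The paper actually runs this argument on an open set of states just above $\theta'$ where sender~1 misreports while sender~2 is truthful, but that is a technicality; the substance is precisely the step you left open.)
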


\noindent {\bf $\varepsilon$-robustness.} In the model described in Section~\ref{sec:model}, senders are perfectly informed and the decision maker can perfectly observe the senders' reports. In other words, there are no perturbations, or ``noise,'' in the senders' report or the decision maker's observations. This modeling strategy allows me to isolate the effects of strategic interactions from the effects of statistical information aggregation. At the same time, however, it allows for excessive freedom to pick ad-hoc beliefs that would not survive the presence of even arbitrarily small perturbations in the transmission of information.

I follow \cite{battaglini2002} and define an $\varepsilon$-perturbed game
as the game described in Section~\ref{sec:model}, in which the decision maker perfectly observes the report of sender $j$ with probability $1-\varepsilon_j$ and with probability $\varepsilon_j$ observes a random report $\tilde r_j$, where $\tilde r_j$ is a random variable with continuous distribution $G_j$, density $g_j$, and support in $\Theta$. This may correspond to a situation where with some probability the decision maker misreads reports; or, alternatively, where with some probability senders deliver a wrong report by mistake.\footnote{\cite{battaglini2002} perturbs the senders' observation of the realized state, whereas I perturb the decision maker's observed reports. My perturbation is qualitatively similar to Battaglini's.} As before, senders incur misreporting costs that depend only on the realized state $\theta$ and on their ``intended'' report $r_j$, and not on the wrongly observed or delivered $\tilde r_j$. The introduction of noise makes any pair of reports possible on the equilibrium path. The decision maker's posterior beliefs depend on $\varepsilon=(\varepsilon_1,\varepsilon_2)$, $G=(G_1,G_2)$, and the senders' reporting strategies $\rho_j(\theta)$.

\begin{definition}[\citealp{battaglini2002}]
	An equilibrium is $\varepsilon$-robust if there exist a pair of distributions $G=(G_1,G_2)$ and a sequence $\varepsilon^n=(\varepsilon^n_1,\varepsilon^n_2)$ converging to zero such that the off path beliefs of the equilibrium are the limit of the beliefs that the equilibrium strategies would induce in an $\varepsilon$-perturbed game as $\varepsilon^n\to 0^+$.
\end{definition}

Intuitively, as the noise $\varepsilon$ fades away, the event in which the decision maker misreads both reports becomes negligible. At the limit as $\varepsilon \to 0^+$, the decision maker infers that she is correctly observing at least one of the two reports. Once she observes an off path pair of reports in an $\varepsilon$-robust equilibrium, the decision maker conjectures---whenever possible---that one sender is following his prescribed reporting strategy while the other is not. This last implication of $\varepsilon$-robustness suggests that there might be a tight connection between the refinement criteria of $\varepsilon$-robustness and unprejudiced beliefs. The next lemma confirms the existence of such a relationship.\footnote{Lemma~\ref{lemma:epsilonunprejudiced} applies to perfect Bayesian equilibria of the game described in Section~\ref{sec:model} with $n\geq 2$ senders.}

\begin{restatable}{lemma}{epsilonrobust}\label{lemma:epsilonunprejudiced}
	If an equilibrium is $\varepsilon$-robust, then it has unprejudiced beliefs.
\end{restatable}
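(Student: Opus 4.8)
The plan is to read off the off-path beliefs of an $\varepsilon$-robust equilibrium from the genuine Bayesian posteriors of the perturbed games that support them, and to show that, as the noise vanishes, those posteriors can only put weight on states consistent with at least one of the two observed reports. First I would set aside the on-path pairs: there Bayes' rule already forces the belief onto $\{\theta : \rho_1(\theta)=r_1 \text{ and } \rho_2(\theta)=r_2\}$, so Definition~\ref{def:unprejudiced} holds automatically. I would then fix an off-path pair $(r_1,r_2)$ of the kind covered by Definition~\ref{def:unprejudiced}, so that $\rho_j(\theta')=r_j$ for some sender $j$ and state $\theta'$; without loss of generality take $j=1$, and set $A_1:=\{\theta:\rho_1(\theta)=r_1\}$, which is nonempty, and $A_2:=\{\theta:\rho_2(\theta)=r_2\}$.

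The core step is to decompose the perturbed posterior according to which reports have been replaced by noise. Conditioning on the realized state, the observed pair $(r_1,r_2)$ is produced by exactly one of four events---``neither report corrupted,'' ``only report $2$ corrupted,'' ``only report $1$ corrupted,'' and ``both reports corrupted.'' Because $(r_1,r_2)$ is off path, the first event has zero likelihood in every state (it would require $\theta$ to lie in $A_1\cap A_2$, which is empty since $(r_1,r_2)$ is off path), so the perturbed posterior is a mixture of the conditional posteriors attached to the other three---supported respectively on $A_1$, on $A_2$, and on all of $\Theta$---with weights equal to the conditional probabilities of those events. I would then compare these weights by their orders in $\varepsilon$. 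Whenever the ``both corrupted'' event is feasible, so is ``only report $2$ corrupted''---that one additionally needs only $A_1\neq\emptyset$, which is the hypothesis of Definition~\ref{def:unprejudiced}---and the former has probability of order $\varepsilon_1\varepsilon_2$ against the latter's order $\varepsilon_2$. Hence the weight on ``both corrupted'' is $O(\varepsilon_1)$ (and trivially zero when that event is infeasible), so it vanishes along $\varepsilon^n\to 0^+$, and in the limit the off-path belief is supported on $A_1\cup A_2=\{\theta:\rho_1(\theta)=r_1 \text{ or } \rho_2(\theta)=r_2\}$, which is exactly the unprejudiced condition. The case $j=2$ is symmetric; the argument extends to mixed strategies by replacing ``$\rho_j(\theta)=r_j$'' with ``$r_j\in S_j(\theta)$,'' and to $n\geq 2$ senders by discarding the higher-order events in which two or more reports are corrupted.

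The step I expect to be the main obstacle is turning the order-in-$\varepsilon$ heuristic into a clean statement about measures on $\Theta$. A genuine report is an atom of the report distribution---or, when a sender separates, the distribution of $\rho_j(\theta)$ induced by $f$ is instead absolutely continuous---whereas a corrupted report contributes a Lebesgue density; so the three conditional posteriors above live on mutually singular reference measures, and I would need to fix a common dominating measure before writing the mixture, comparing weights, and passing to the weak limit. This in turn requires mild regularity of the equilibrium reporting rules---that the relevant image measures of $f$ are well behaved and that the preimage sets $\rho_j^{-1}(r_j)$ are closed, so that no mass escapes to their boundaries in the limit---which is supplied by the monotonicity of equilibrium strategies implied by \eqref{eq:m} (Lemma~\ref{lemma:monot}). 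The probabilistic content, namely that misreading both reports at once is a second-order event, is elementary; only this bookkeeping is delicate.
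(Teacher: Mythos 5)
Your proposal is correct and follows essentially the same route as the paper: the paper's proof simply writes out the perturbed Bayesian posterior $p_{G,\varepsilon}(\theta|r_1,r_2)$ explicitly (using the mixed-density formalism with Dirac atoms, which absorbs the measure-theoretic bookkeeping you flag), and this formula is exactly your four-event decomposition, with the ``both reports corrupted'' term of order $\varepsilon_1\varepsilon_2$ dropping out in the limit so that the limit belief puts positive weight on $\theta$ only if $\phi_j(r_j,\theta)>0$ for some $j$. No substantive difference in the argument.
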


A straightforward implication of Lemma~\ref{lemma:epsilonunprejudiced} and Proposition~\ref{prop:nonREE} is that no receiver-efficient or fully revealing equilibrium is $\varepsilon$-robust. By Lemma~\ref{lemma:pureREE}, we obtain that also pure-strategy equilibria are not supported by unprejudiced beliefs and are not $\varepsilon$-robust. These results suggest that mixed-strategy equilibria are qualitatively important. The next section is dedicated to finding equilibria that are robust in the sense that they are $\varepsilon$-robust.

\section{Adversarial Equilibria}\label{sec:mixed}

Findings in the previous section show that pure-strategy and receiver-efficient equilibria exist, but are supported by off path beliefs with potentially undesirable characteristics. Such results motivate the quest for robust equilibria in mixed strategies. Since condition \eqref{eq:m} does not rule out receiver-efficient outcomes, a different set of restrictions is required to obtain robust equilibria. This section aims to identify sufficient conditions under which equilibria are robust, characterize such robust equilibria, and show that they have desirable properties. All proofs and a number of intermediate results are relegated to Appendix~\ref{app:mixed}.

I focus the following analysis on adversarial equilibria as described in Section~\ref{sec:solutions}. For an immediate application of this solution concept, let's revisit the receiver-efficient equilibrium depicted in Figure~\ref{fig:fre}. To prevent a deviation by sender~1, the decision maker's posterior beliefs $p$ must be such that she selects $\n$ when sender~1's report is lower than $\bar r_1(0)$. However, such beliefs cannot be part of an adversarial equilibrium. By \eqref{eq:c}, the decision maker is indifferent between the two alternatives when both senders claim that the state is zero. As a result, by \eqref{eq:d} the decision maker expects the state to be positive when both senders claim that the state is positive, and accordingly selects $\p$. The language structure imposed by conditions \eqref{eq:d} and \eqref{eq:c} is sufficient to rule out receiver-efficient equilibria such as the one depicted in Figure~\ref{fig:fre}. The next lemma confirms that this pruning power extends in general to all receiver-efficient equilibria.
\begin{restatable}{lemma}{reenotde}\label{lemma:reenotde}
	Receiver-efficient equilibria are not adversarial equilibria.
\end{restatable}

Conditions underpinning adversarial equilibria rule out all receiver-efficient and, by Lemma~\ref{lemma:pureREE}, all pure-strategy equilibria. However, introducing conditions on the decision maker's beliefs may raise some concerns. First, an equilibrium satisfying both \eqref{eq:d} and \eqref{eq:c} may not exist.\footnote{Even well-behaved signaling games may have no equilibria \citep{manelli1996}.} Second, provided that adversarial equilibria exist, there may be an issue of multiplicity. Lastly, even adversarial equilibria may not be robust according to the criteria defined in the previous section. The following result shows that these issues are not present in the current setting.

\begin{theorem}\label{th:advthm}
	Adversarial equilibria of the game described in Section~\ref{sec:model},
	\begin{itemize}[noitemsep,topsep=0pt]
		\item[i)] always exists;
		\item[ii)] are essentially unique in terms of equilibrium outcomes and strategies;
		\item[iii)] are $\varepsilon$-robust, or outcome- and strategy-equivalent to AE that are $\varepsilon$-robust.\footnote{An adversarial equilibrium may not be $\varepsilon$-robust because of its off path beliefs. Part ii) of Theorem~\ref{th:advthm} states that there can be multiple AE sharing the same strategies and outcomes, but differing in their off path beliefs. Part iii) tells us that there always exists an AE that is $\varepsilon$-robust and thus unprejudiced.}
	\end{itemize} 
\end{theorem}

Next, I provide some intuition and definitions that will prove useful to understand the players' strategies in an adversarial equilibrium. To fix ideas, suppose from now on that the realized state is positive, $\theta>0$. Since senders play monotonic strategies (Lemma~\ref{lemma:monot}), sender~1 delivers a positive report, say $r_1\geq\theta$. As we have seen in the previous example, the interaction of conditions~\eqref{eq:d} and \eqref{eq:c} has an important consequence: when both senders' reports are positive, the decision maker correctly infers that the state is positive as well. As a result, sender~2 must deliver a negative report to achieve persuasion. 

It follows that persuasion can take place only when senders deliver reports with conflicting signs. By condition~\eqref{eq:d}, we know that in these cases the decision maker expects strictly lower reports to originate from strictly lower states. To achieve persuasion, sender~2 must deliver a negative report that is sufficiently low. Say that, given sender~1's report, the decision maker selects $\n$ when sender~2 claims that the state is lower than $s(r_1)$, and selects $\p$ otherwise. We shall say that $s(r_1)$ \emph{swings} the decision maker's choice. The notion of \emph{swing report} is key for understanding adversarial equilibria, and the following definition formalizes such a concept.

\begin{definition}\label{def:swing}
	Given a report $r$, the swing report $s(r)$ is defined as
	\[
	s(r)=
	\begin{cases}
		\left\{r_2\in R_{2} \,|\, U_\dm(r,r_2)=0\right\} & \text{if sender~1's report is } r\geq 0 \\
		\left\{r_1\in R_{1} \,|\, U_\dm(r_1,r)=0\right\} & \text{if sender~2's report is } r < 0
	\end{cases}
	\]
	If $s(r)=\varnothing$, then I set $s(r)=-\infty$ when $r\geq 0$ is delivered by sender~1, and $s(r)=\infty$ when $r < 0$ is delivered by sender~2.
\end{definition}

With a slight abuse of language, I hereafter say that sender~$j$ swings the report of his opponent $-j$ whenever the pair of reports $(r_1,r_2)$ induce the selection of sender~$j$'s preferred alternative. Sender~1 swings the report of sender~2 when $r_1\geq s(r_2)$. Similarly, sender~2 swings the report of sender~1 when $r_2<s(r_1)$.

In adversarial equilibria, the swing report $s(r)$ has a number of intuitive properties: first, condition~\eqref{eq:d} ensures that the swing report, if it exists, is unique; second, condition~\eqref{eq:c} pins down the swing report for $s(\bar r_1(0))=\ubar r_2(0)$, $s(\ubar r_2(0))=\bar r_1(0)$, and $s(0)=0$. From the interaction of \eqref{eq:c} and \eqref{eq:d}, it follows that every report $r$ has a unique swing report $s(r)$ such that if $r>0$ then $s(r)<0$ (resp. $r<0$ has $s(r)>0$). The swing report of a swing report is the report itself, that is, $s(s(r))=r$. Lastly, higher reports have lower swing reports.\footnote{See Lemma~\ref{lemma:swing} in Appendix~\ref{app:mixed}.} Since $s(r)$ is a strictly decreasing function of $r$, I shall refer to $s(r)$ as the \emph{swing report function}.

When the state takes extreme values, a sender may not be able to swing the report of his opponent profitably. In particular, persuasion is always prohibitively expensive when $s(\theta)$ is beyond a sender's reach. For example, sender~2 cannot profitably achieve persuasion when the state is such that $s(\theta)<\ubar r_2(\theta)$, as $r_1\geq\theta$ and $s(r_1)$ is decreasing in $r_1$. In such cases, we should expect both senders to report truthfully and deliver matching reports that reveal the state. It is helpful to define cutoffs in the state space that help determine when truthful reporting always occurs in adversarial equilibria. I refer to these cutoffs as the \emph{truthful cutoffs}, and define them as follows.

\begin{definition}\label{def:cutoffs} The truthful cutoffs $\theta_l$ and $\theta_h$ are defined as
	\begin{displaymath}
	\theta_l := \left\{\theta\in\Theta \,|\, s(\theta) = \bar r_1(\theta) \right\},
	\end{displaymath}
	\begin{displaymath}
	\theta_h := \left\{\theta\in\Theta \,|\, s(\theta) = \ubar r_2(\theta) \right\}.
	\end{displaymath}
\end{definition}

The above cutoffs are states in which senders are at best indifferent between reporting truthfully and achieving persuasion by misreporting. Because of the properties of $s(\cdot)$, the cutoffs are such that $\theta_l<0<\theta_h$. Moreover, senders have a conflict of interest in every state within the truthful cutoffs.\footnote{That is, $\tau_1\leq\theta_l<\theta_h\leq\tau_2$. See Lemma~\ref{lemma:cutoffs} in Appendix~\ref{app:mixed}.} As mentioned before, we should expect senders to always report truthfully---and therefore to play pure strategies---when the state lies outside the set $(\theta_l,\theta_h)$. The interaction between senders becomes more intricate when the state takes values within the truthful cutoffs.

Consider a situation where the state is positive and within the truthful cutoffs, that is, $\theta\in(0,\theta_h)$. Suppose that sender~2 expects sender~1 to report truthfully. To convince the decision maker to select $\n$, sender~2 would deliver a negative report that is both effective and affordable, i.e., $r_2\in(\ubar r_2(\theta),s(\theta))$. In equilibrium, sender~1 would anticipate sender~2's report $r_2$. As a result, sender~1 would deliver $r_1=s(r_2)>\theta$ to neutralize sender~2's persuasion attempt. However, sender~2 can foresee this reaction, and expects sender~1 to deliver $r_1$ instead of reporting truthfully. Consequently, sender~2 would deliver $r_2'\in(\ubar r_2(\theta),s(r_1))$, where $r_2'<r_2$. Senders are now bearing higher misreporting costs in the attempt to defeat their opponent. Since competing forces escalate expenditures, sender~2 would eventually expect its opponent to deliver a report he cannot afford to swing, that is, a $r_1'$ such that $s(r_1')<\ubar r_2(\theta)$. In this case, sender~2 would report truthfully to economize on costs. Anticipating this reaction, sender~1 would follow suit, taking us back to our starting point where sender~1 reports truthfully the state.\footnote{Recall that, as mentioned at the beginning of Section~\ref{sec:mixed}, in adversarial equilibria the decision maker selects the positive alternative after observing two positive reports.} This informal example suggests that senders play mixed strategies in states within the truthful cutoffs.

Conditions~\eqref{eq:d} and \eqref{eq:c} ensure that the decision maker's problem is trivial when reports have the same sign. In these cases, the state is fully revealed by the sender recommending his least preferred alternative. By contrast, the problem of strategic inference is more complicated when reports have conflicting signs. In these cases, the decision maker cannot determine whether the state is positive and sender~2 is misreporting or whether the state is negative and sender~1 is misreporting. In the attempt to select the correct alternative, the decision maker compares and cross-validates the reports. When doing so, she assigns a weight to the senders' reports that depends on their relative characteristics. This process determines the swing report function discussed above.

The next proposition characterizes the senders' reporting strategies and the decision maker's beliefs in adversarial equilibria. Recall that $\Phi_j:\Theta^2\to[0,1]$ is a cumulative distribution function describing the strategy of sender~$j$. Specifically, $\Phi_j(r_j,\theta)$ denotes the probability that sender~$j$ delivers in state $\theta$ a report that is lower than or equal to $r_j$. Figure~\ref{fig:directeqm} depicts the senders' strategies in an adversarial equilibrium of a setting where players have symmetric features.\footnote{The figure depicts the case where $\Theta=\mathbb{R}$, $f$ is symmetric around zero, $u_1(\theta)=-u_2(\theta)=1$, $k_jC_j(r_j,\theta)=|r_j-\theta|$, and $u_{dm}(\theta)=1$ for $\theta\geq 0$ and $-1$ otherwise. Given these parameters, the reaches are $\bar r_1(\theta)=\theta+1$ and $\ubar r_2(\theta)=\theta-1$. The state is $\theta'=1/4$, the truthful cutoffs are $\theta_h=-\theta_l=1/2$, and the swing report function is $s(r)=-r$. Each misreport $r_j\neq\theta$ is delivered with partial density probability $1/[2(1-2\theta)]$, and each sender reports truthfully with probability $2|\theta'|=1/2$. See also the discussion of symmetric environments in Section~\ref{sec:example}.}

\begin{proposition}\label{prop:adveqm}
	An adversarial equilibrium is a tuple $(\Phi_1,\Phi_2,p)$ such that,
	\begin{itemize}[noitemsep,topsep=0pt]
		\item[i)] If $\theta\notin(\theta_l,\theta_h)$, then $\Phi_j(r_j,\theta)=0$ for $r_j<\theta$, and $\Phi_j(r_j,\theta)=1$ otherwise, $j\in\{1,2\}$. That is, both senders always report truthfully when the realized state lies outside the truthful cutoffs;
		\item[ii)] If $\theta\in(\theta_l,0)$, then the senders' reporting strategies are,
		\begin{displaymath}
			\Phi_1(r_1,\theta)=
			\begin{cases}
				0 & \quad \text{if } r_1<\theta\\ 
				1-\frac{k_2}{-u_2(\theta)} C_2(s(\bar r_1(\theta)),\theta) & \quad \text{if } r_1\in[\theta,s(\theta))\\
				1-\frac{k_2}{-u_2(\theta)} \bigg[C_2(s(\bar r_1(\theta)),\theta)-C_2(s(r_1),\theta) \bigg] & \quad \text{if } r_1\in[s(\theta),\bar r_1(\theta))\\
				1 & \quad \text{if } r_1 \geq  \bar r_1(\theta)
			\end{cases}
		\end{displaymath}
		\begin{displaymath}
			\Phi_2(r_2,\theta)=
			\begin{cases}
				0 & \quad \text{if }
				r_2< s(\bar r_1(\theta))\\ 
				1-\frac{k_1}{u_1(\theta)}C_1(s(r_2),\theta) & \quad \text{if } r_2\in[s(\bar r_1(\theta)),\theta)\\
				1 & \quad \text{if } r_2 \geq \theta
			\end{cases}
		\end{displaymath}
	\item[iii)] If $\theta\in[0,\theta_h)$, then the senders' reporting strategies are,
	\begin{displaymath}
		\Phi_1(r_1,\theta)=
		\begin{cases}
			0 & \quad \text{if } r_1<\theta\\ 
			\frac{k_2}{-u_2(\theta)}C_2(s(r_1),\theta) & \quad \text{if } r_1\in[\theta,s(\ubar r_2(\theta)))\\
			1 & \quad \text{if } r_1 \geq s(\ubar r_2(\theta))
		\end{cases}
	\end{displaymath}
	\begin{displaymath}
		\Phi_2(r_2,\theta)=
		\begin{cases}
			0 & \quad \text{if } r_2< \ubar r_2(\theta)\\ 
			\frac{k_1}{u_1(\theta)} \bigg[C_1(s(\ubar r_2(\theta)),\theta)-C_1(s(r_2),\theta) \bigg] & \quad \text{if } r_2\in[\ubar r_2(\theta),s(\theta))\\
			\frac{k_1}{u_1(\theta)} C_1(s(\ubar r_2(\theta)),\theta)  & \quad \text{if } r_2\in[s(\theta),\theta)\\
			1 & \quad \text{if } r_2 \geq \theta
		\end{cases}
	\end{displaymath}
		\item[iv)] Posterior beliefs $p$ satisfy $\eqref{eq:c}$, $\eqref{eq:d}$, and are such that the swing report function $s(r_i)$ is implicitly defined for $i,j\in\{1,2\}$, $i\neq j$, and $r_i\in \left[\ubar r_2(0),\bar r_1(0)\right]$, as
		\begin{displaymath}
			s(r_i)=\left\{r_j\in \Theta \; \bigg| \; \int_{\max\left\{r_2,\bar r_1^{-1}(r_1)\right\}}^{\min\left\{r_1,\ubar r_2^{-1}(r_2)\right\}} f(\theta)\frac{u_\dm(\theta)}{u_1(\theta) u_2(\theta)}\frac{d C_j(r_j,\theta)}{d r_j}\frac{d C_i(r_i,\theta)}{d r_i}d\theta=0\right\}.
		\end{displaymath}
	\end{itemize}
\end{proposition}


\begin{figure}
	\centering
	\includegraphics[width=0.95\linewidth]{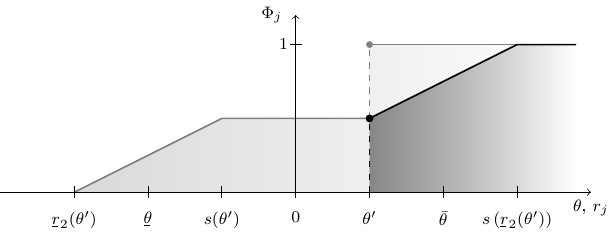}
	\caption{The senders' reporting strategies in an adversarial equilibrium of a setting where misreporting costs are linear and players have symmetric features (see Section~\ref{sec:symmetric}). The CDFs representing sender~1 and 2's reporting strategies when the realized state is $\theta'$ are depicted in black and grey, respectively. The strategies are discontinuous at $\theta'$ because both senders report truthfully with positive probability.} 
	\label{fig:directeqm}
\end{figure}


The proposition shows a discontinuity in the senders' reporting strategies $\Phi_j$ when reports are truthful, that is, at $r_j=\theta$. This discontinuity reflects that senders report truthfully almost every state with positive probability.\footnote{The only exception is $\theta=0$, where the truth is never reported on the equilibrium path.} As noted before, the decision maker fully learns the state upon observing reports that are identical or have the same sign. The state is more likely to be revealed in relatively extreme states, where the decision maker obtains a substantially different payoff from the alternatives.\footnote{See Corollary~\ref{cor:prob} in Appendix~\ref{app:aestrat}.} By contrast, the decision maker is more likely to make mistakes in central states, and closer to the point where her preferred alternative changes. Figure~\ref{fig:probrevelation} depicts the probability of these events, including the likelihood that the decision maker eventually selects the alternative she would choose under perfect information.

\begin{figure}
	\centering
	\includegraphics[width=0.95\linewidth]{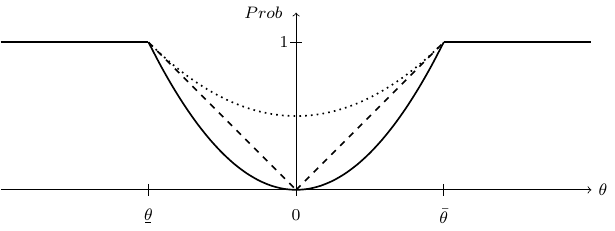}
	\caption{Probability of events in an adversarial equilibrium where players have symmetric features. The solid line depicts the probability that the senders deliver identical and truthful reports. The dashed and the dotted lines depict the probability that the decision maker fully learns the state and that she selects her preferred alternative, respectively.}
	\label{fig:probrevelation}
\end{figure}

Results in this section show the prominence of outcomes with partial revelation. This is in contrast to related work showing that costly talk generates fully revealing equilibria \citep{kartik2007,emons2009accuracy,ottaviani2006}, unless the state space is bounded \citep{kartik2009}. The model's structure plays a key role in determining how much information is transmitted in equilibrium. An important distinctive feature of this model is that the decision maker's action space is discrete, whereas the state space is continuous. In other words, the number of possible contingencies is higher than the number of alternatives. Specifically, the decision maker's choice is binary, reflecting the judicial or quasi-judicial nature of the problem. Intuitively, when the action space is binary, senders want to persuade the decision maker that the state is just sufficiently high (or low). Since different reports come at a different cost, some pooling occurs on path, preventing the full revelation of the state. Differently, when the action space is continuous, higher (lower) reports can induce higher (lower) actions, allowing for separating equilibria.

\section{An Example and Extensions}\label{sec:example}

\subsection{Example: Symmetric Environments}\label{sec:symmetric}

In what follows, I provide an example where senders have similar characteristics and the state is symmetrically distributed. This environment is an important benchmark because it deals with situations where no sender has an ex-ante advantage. In addition, it gives us a closed-form solution for senders' equilibrium strategies and supports. The following definition formalizes what is meant by a symmetric environment.

\begin{definition}\label{def:symm}
	In a symmetric environment,
	\begin{itemize}
		\item[i)] the state is symmetrically distributed around zero, i.e., $f(\theta)=f(-\theta)$ for all $\theta\in\Theta$;
		\item[ii)] $k_j C_j(r,\theta)=k C(r,\theta)$ for $j\in\{1,2\}$, where $k>0$ is finite and $C(\cdot,\cdot)$ satisfies $C(\theta+x,\theta)=C(\theta-x,\theta)$ for every $\theta\in\Theta$ and $x\in\mathbb{R}$; and
		\item[iii)] payoffs satisfy\footnote{By definition of threshold $\tau_j$ (see Section~\ref{sec:model}), this last condition implies that $\tau_2=-\tau_1$.} $u_\dm(\theta)=-u_\dm(-\theta)$ and $u_1(\theta)=-u_2(-\theta)$ for all $\theta\in\Theta$.
	\end{itemize} 
Conditions i) to iii) are in addition to the assumptions in Section~\ref{sec:model}.
\end{definition}

In symmetric environments the two senders differ only because they have conflicting interests. In other words, there is no particular reason why the decision maker should give more importance to the report of one sender than to that of the other. Intuition would suggest that, in a symmetric environment, the decision maker should assign the burden of proof equally between the senders. The next corollary confirms that this intuition is indeed correct in an adversarial equilibrium. 

\begin{restatable}{corollary}{symmetric}\label{cor:symmetry}
	In an adversarial equilibrium of a symmetric environment, $s(r)=-r$ for every $r\in \left[\ubar r_2(0),\bar r_1(0)\right]$.
\end{restatable}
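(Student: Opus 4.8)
The plan is to exploit the symmetry of the environment and the uniqueness of the swing report function. By the corollary on uniqueness, the swing report function $s(r)$ in any direct equilibrium is the unique solution of the implicit equation \eqref{eq:swing} in Proposition~\ref{prop:swing}, which depends only on the primitives $f$, $u_\dm$, $u_1$, $u_2$, $C_1$, $C_2$, $k_1$, $k_2$. So it suffices to verify that the candidate $s(r)=-r$ solves that implicit equation given the symmetry conditions i)--iii) of Definition~\ref{def:symm}, and that $-r\in\hat R$ whenever $r\in\hat R$; uniqueness then forces $s(r)=-r$ in every direct equilibrium.

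First I would record the consequences of symmetry that the proof uses: $f(\theta)=f(-\theta)$; $k_1C_1(\cdot)=k_2C_2(\cdot)=kC(\cdot)$ with $C(\theta+x,\theta)=C(\theta-x,\theta)$; $u_\dm(\theta)=-u_\dm(-\theta)$; and $u_1(\theta)=-u_2(-\theta)$ (so $u_1(\theta)u_2(\theta)=-u_2(-\theta)u_1(-\theta)$, i.e.\ the product $u_1(\theta)u_2(\theta)$ is an even function of $\theta$, while $u_\dm$ is odd). Symmetry of the cost function also gives $\bar r_1(\theta)=\theta + g(\theta)$ and $\ubar r_2(\theta)=\theta-g(\theta)$ for a common ``reach radius,'' so in particular $\bar r_1^{-1}(r)= -\ubar r_2^{-1}(-r)$; combined with this, when we substitute $r_j=s(r_i)=-r_i$ the integration limits in \eqref{eq:swing} become symmetric: $\max\{r_2,\bar r_1^{-1}(r_1)\}=\max\{-r_1,\bar r_1^{-1}(r_1)\}$ and $\min\{r_1,\ubar r_2^{-1}(r_2)\}=\min\{r_1,\ubar r_2^{-1}(-r_1)\}=-\max\{-r_1,\bar r_1^{-1}(r_1)\}$, so the interval of integration is $[-b,b]$ for some $b=b(r_1)$.

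Next I would examine the integrand of \eqref{eq:swing} evaluated at $r_j=-r_i$. Write $h(\theta):= f(\theta)\frac{u_\dm(\theta)}{u_1(\theta)u_2(\theta)}\frac{dC_j(-r_i,\theta)}{dr_j}\frac{dC_i(r_i,\theta)}{dr_i}$. Using $C_j=C_i=C$ (up to the $k$ factors, which only rescale and vanish at $0$), the symmetry $C(\theta+x,\theta)=C(\theta-x,\theta)$, and the state-reflection symmetry, I would check that $h(-\theta)=-h(\theta)$: the prior factor $f$ is even, the ratio $u_\dm/(u_1u_2)$ is odd (odd numerator, even denominator), and the product of the two cost-derivative terms is even under $\theta\mapsto-\theta$ together with $r_i\mapsto -r_i$ — here one has to be a little careful, checking that $\frac{dC(-r_i,\theta)}{dr_j}\big|_{\theta\mapsto-\theta}$ pairs with $\frac{dC(r_i,\theta)}{dr_i}$ in a way that leaves the product invariant, which follows from differentiating the identity $C(\theta+x,\theta)=C(\theta-x,\theta)$. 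Since $h$ is odd and the domain $[-b,b]$ is symmetric about $0$, $\int_{-b}^{b} h(\theta)\,d\theta=0$, so $r_j=-r_i$ indeed satisfies the defining condition in \eqref{eq:swing}. Finally, $-r\in\hat R=[\ubar r_2(0),\bar r_1(0)]$ for $r\in\hat R$ because symmetry of $C$ gives $\ubar r_2(0)=-\bar r_1(0)$, so $\hat R$ is itself symmetric about $0$. By the uniqueness of the swing report function (from \eqref{eq:swing} and Lemma~\ref{lemma:swing}(i)--(iii), which guarantee it is the unique strictly decreasing solution through the relevant boundary conditions), we conclude $s(r)=-r$ for all $r\in\hat R$.

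The main obstacle I anticipate is the bookkeeping of signs and reflections in the cost-derivative factors — i.e.\ verifying cleanly that $\frac{dC(-r,\theta)}{dr_j}$ and $\frac{dC(r,\theta)}{dr_i}$ combine into an even function under the joint reflection $(\theta,r)\mapsto(-\theta,-r)$, given only the symmetry relation $C(\theta+x,\theta)=C(\theta-x,\theta)$ and the sign conditions on the partials of $C$ from Section~\ref{sec:model}. Everything else (evenness of $f$, oddness of $u_\dm$, evenness of $u_1u_2$, symmetry of $\hat R$ and of the integration limits) is immediate from Definition~\ref{def:symm}. Once the integrand is shown to be odd on a symmetric interval, the vanishing of the integral and hence $s(r)=-r$ follow at once, and uniqueness closes the argument.
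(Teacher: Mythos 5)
Your proposal is correct and follows essentially the same route as the paper: substitute the candidate $s(r)=-r$ into the implicit equation \eqref{eq:swing}, note that symmetry makes the integration limits satisfy $\max\{-r,\bar r_1^{-1}(r)\}=-\min\{r,\ubar r_2^{-1}(-r)\}$, observe that the integrand is odd (even $f$ and $u_1u_2$, odd $u_\dm$, symmetric costs) so the integral vanishes, and conclude via uniqueness of the swing report function. Your version just spells out the sign bookkeeping and the appeal to uniqueness that the paper leaves implicit.
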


The above corollary shows that, in adversarial equilibria of symmetric environments, the decision maker follows the most extreme recommendation. The burden of proof is equally distributed between the senders through a swing report function that is linear even though some fundamentals, e.g., the cost functions, may be non-linear. Moreover, Corollary~\ref{cor:symmetry} implies that adversarial equilibria naturally have symmetric strategies in symmetric environments.\footnote{In adversarial equilibria, misreporting does not take place outside the set $\left[\ubar r_2(0),\bar r_1(0)\right]$ (see Lemma~\ref{lemma:cutoffs}). Corollary~\ref{cor:symmetry} is reminiscent of results in all-pay auctions with complete information, where it is shown that only symmetric solutions exist with two bidders \citep{baye1996all}.}

With an explicit solution to the swing report function, we obtain a natural closed-form solution to the senders' equilibrium strategies and supports. In applications this is particularly useful because in similar environments, such as in contests, typically little is known about mixed-strategy equilibria except in some special cases (see \citealp{siegel2009,levine2019success}).

I can now use this closed-form solution to examine the determinants and the characteristics of the senders' misreporting behavior. The shape of the cost function, in particular its convexity/concavity, determines whether senders are more likely to deliver small lies or large misrepresentation or the other way around. Recall that $S_j(\theta)$ is the support of sender~$j$'s strategy in state $\theta$. By Step~\ref{step:strategies} (Appendix~\ref{app:aestrat}) and Corollary~\ref{cor:symmetry} we obtain that, in a symmetric environment, misreporting behavior is described by the following partial density, for $j\in\{1,2\}$ and $j\neq i$,
\begin{equation}\label{eq:partialdensity}
\psi_j(r_j,\theta)=\frac{k}{-u_i(\theta)}\frac{d C(-r_j,\theta)}{d r_j},
\end{equation}
where $\psi_j:\Theta^2\to\mathbb{R}^+_0$ has support in $S_j(\theta)\setminus\{\theta\}$ (see Appendix~\ref{app:notation} for more details). From \eqref{eq:partialdensity} we can see that, if $C(\cdot,\cdot)$ is strictly convex, then we have $d\psi_1(r_1,\theta)/dr_1>0$ for all $\theta\in S_1(\theta)\setminus\{\theta\}$ and $d\psi_2(r_2,\theta)/dr_2<0$ for all $\theta\in S_2(\theta)\setminus\{\theta\}$. This means that, conditional on misreporting, senders are more likely to deliver large misrepresentations of the state than small lies. By contrast, when senders have concave costs, misreports that are closer to the truth are more likely to be delivered than large misrepresentations. This observation may seem counter-intuitive at first. However, a sender's misreporting density $\psi_j$ is directly affected by his opponent's marginal costs (see Step~\ref{step:strategies} in Appendix~\ref{app:aestrat}). Equilibrium conditions require senders to be indifferent about misreporting slightly more. The marginal cost of misreporting must be compensated by a proportionally higher probability of inducing the sender's preferred alternative. Since a sender's density $\psi_j$ is proportional to the marginal costs of his competitor, $dC_{-j}(s(r_j),\cdot)/dr_j$, the former is higher when the latter is higher. In this symmetric environment, $\psi_j(r_j,\cdot)\propto dC(-r_j,\cdot)/dr_j$. Figure~\ref{fig:reportconvexity} depicts sender~1's misreporting behavior in a symmetric environment for different concavities of the misreporting cost function.

\begin{figure}
	\centering
	\includegraphics[width=0.5\linewidth]{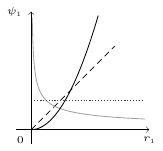}
	\caption{The partial probability density $\psi_1$ as a function of $r_1$ in state $\theta=0$ and for different costs structures. The environment is symmetric, and the cost functions are $C_j(r_j,\theta)=|r_j-\theta|^{exp}$. With quadratic loss costs, $exp=2$, the density $\psi_1$ grows linearly as reports get further away from the truth (black dashed line). With absolute value linear costs, $exp=1$, every misreport in the support has the same partial density (black dotted line). With concave costs, $exp=1/2$, small misrepresentations are more likely than large lies (grey solid line), and when $exp=3$ the opposite is true (black solid line).}
	\label{fig:reportconvexity}
\end{figure}

\subsection{Discussion and Extensions}\label{sec:discussion}

In this section, I discuss a number of extensions and variations of the baseline model to examine the robustness of the results. 

\noindent{\bf A single sender.} Consider a variant of the model presented in Section~\ref{sec:model} where only sender~1 communicates with the decision maker. The rest of the model remains as before. In this monopolistic setting, there is a continuum of PBE. Notably, there is a receiver-efficient equilibrium where sender~1 plays the same strategy as in the fully revealing equilibrium depicted in Figure~\ref{fig:fre}, Section~\ref{sec:fre}. That is, $\rho_1(\theta)=\bar r_1(0)$ for all $\theta\in[0,\bar r_1(0)]$, and $\rho_1(\theta)=\theta$ otherwise. In addition, there are equilibria that are not receiver-efficient. In the sender-preferred one, the monopolistic sender achieves persuasion by pooling states around the decision maker's threshold $\tau_{dm}$. Receiver-efficient outcomes do not withstand refinements even when there is only a single monopolistic sender.\footnote{I study more in details this monopolistic setting in \cite{vaccari2021influential}, where, among other results, I show that the sender-preferred equilibrium is the only equilibrium of the monopolistic game to be perfectly sequential \citep{grossman1986perfect}. In addition, every non-revealing equilibrium defeats the receiver-efficient equilibrium \citep{mailath1993belief}. Among the equilibria passing the Intuitive Criterion test, only the sender-preferred one is undefeated.} Figure~\ref{fig:mono} shows the sender's reporting strategy in the receiver-efficient and in the sender-preferred equilibrium of the monopolistic setting.

In equilibria of the monopolistic game, the sender truthfully reveals extreme states and \emph{pools} moderate states that are close to the decision maker's threshold $\tau_{dm}$. This communication pattern is similar in adversarial equilibria of the competitive game, where truthful revelation occurs in extreme states while misreporting and persuasion occur in moderate states.\footnote{Adversarial equilibria also feature a probabilistic revelation of the state, including the moderate ones. This phenomenon does not take place in equilibria of the monopolistic game, where information of moderate states is always \emph{jammed}.} Equilibria of the monopolistic game are all in pure strategies. Intuitively, a monopolistic sender delivers the cheapest report that induces the selection of his favourite alternative, whenever this is feasible. This logic does not readily extend to competitive settings: pure-strategy equilibria of the baseline model with competition exist, but they do not survive refinements.

\begin{figure}
	\centering
	\begin{subfigure}{.5\textwidth}
		\centering
		\includegraphics[width=1\linewidth]{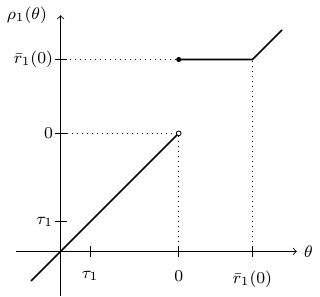}
		\label{fig:monop_fre}
	\end{subfigure}%
	\begin{subfigure}{.5\textwidth}
		\centering
		\includegraphics[width=1\linewidth]{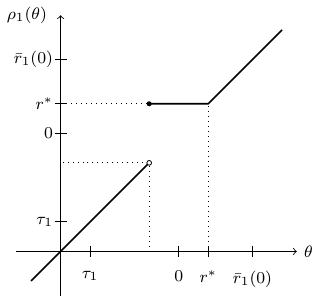}
		\label{fig:monop_sp}
	\end{subfigure}
	\caption{The panels illustrate the sender's reporting strategy in two different equilibria of the monopolistic game. In the left panel the equilibrium is receiver-efficient, and the strategy of sender~1 is identical to the one he plays in the fully revealing equilibrium of the game with two senders, discussed in Section~\ref{sec:fre}. The right panel depicts the sender-preferred equilibrium, which is not receiver-efficient. There, the sender pools states around the decision maker's threshold $\tau_{dm}=0$. Upon observing the pooling report $r^*$, the decision maker is indifferent between the two alternatives.}
	\label{fig:mono}
\end{figure}

{\bf Withholding information.} The baseline model does not allow senders to withhold their private information or, analogously, to stay silent. Here, I discuss an extension that accommodates for information withholding. Let the senders' report space be $\Theta\cup \{\varnothing\}$, where $\varnothing$ corresponds to not delivering any report, withholding information, remaining silent, etc. Assume that, for every state $\theta\in\Theta$, sender $j$'s cost of withholding information\footnote{Withholding information may be a costly activity. For example, a sender that is known to be informed about the realized state may suffer from a reputation loss if he refuses to communicate. Alternatively, withholding information may require an active act of suppression, which takes up resources.} is $C_j(\varnothing,\theta)=c_j\geq 0$. The rest of the model remains as described in Section~\ref{sec:model}. Hereafter, I use ``silence'' and ``withholding information'' interchangeably.\footnote{\cite{emons2019strategic} study an adversarial persuasion game where silence is costless while reporting is always costly. In their setting, some states are never revealed in equilibrium. This is in contrast with adversarial equilibria of the current setup, where almost every state is revealed with positive probability.}

Adversarial equilibria can be extended to this augmented environment in a natural way. Consider the senders' reporting strategies in an adversarial equilibrium of the model where silence is not allowed; introduce the possibility of withholding information, and suppose that the decision maker is skeptical about silence: when sender~1 (2) withholds information and sender~2 (1) claims that the state is negative (positive), the decision maker selects the negative (positive) alternative.\footnote{Formally, beliefs $p$ are such that $\beta(\varnothing,r_2)=\n$ for every $r_2\leq 0$, and $\beta(r_1,\varnothing)=\p$ for every $r_1\geq 0$.} Since reporting truthfully is at least as cheap but also at least as good as withholding information, the original \emph{adversarial} reporting strategies constitute an equilibrium also in this model's variant. The same holds true for the receiver-efficient equilibrium depicted in Section~\ref{sec:fre}.

The possibility of withholding information at no cost challenges the finding that receiver-efficient outcomes lack robustness. Intuitively, introducing costless silence restores the possibility of babbling, allowing the decision maker to neglect reports delivered by a silent sender. In this case, there is a receiver-efficient equilibrium where the decision maker obtains information only from one sender, as the other one is always ignored and thus remains silent. After observing an unexpected report, the decision maker cannot extract the information she needs from the silent sender. This equilibrium breaks down once information withholding comes at a positive cost, no matter how small: conditional on the outcome, senders would rather report truthfully than stay silent. As a result, costly silence preserves the results in Section~\ref{sec:fre}. The next proposition summarizes the takeaways from this model extension.

\begin{restatable}{proposition}{silence}\label{prop:silence}
	In the game where senders can withhold information: (i) there are no unprejudiced REE when silence is costly for both senders; (ii) there is an unprejudiced REE when silence is costless for at least one sender.
\end{restatable}

\noindent {\bf More than two senders.} Here, I discuss a version of the baseline model with more than two biased senders. Denote with $N$ the set of senders and with $r$ the collection of all reports, $r=\{r_j\}_{j\in N}$. Suppose that at least one sender has a negative threshold and at least one sender has a positive threshold $\tau_j$. The rest of the model is as before, including that the senders have common and perfect information. Like in the baseline model, there is competition between opposed-biased senders. The main difference is that now there is also a potential problem of coordination between like-biased senders.

Communication games with three or more perfectly informed senders admit fully revealing equilibria.\footnote{For cheap talk games, see~\cite{battaglini2004}. The same logic used there applies to this costly talk setting. In truthful FRE, senders report truthfully every state, that is, $\rho_j(\theta)=\theta$ for every $\theta\in\Theta$ and $j\in N$. Unilateral deviations are always detected, and thus cannot impede full revelation. These equilibria have unprejudiced beliefs: when all senders agree except for one, the decision maker is sure that only the disagreeing sender has deviated from the prescribed truthful strategy.} To check if these equilibria are sensible in a similar way as adversarial ones, we first need to extend conditions \eqref{eq:c} and \eqref{eq:d} to settings with more than two senders. Since there is no obvious way to do so, I focus on equilibria that satisfy this alternative condition: whenever possible, the decision maker believes that the state is somewhere in between the highest report delivered by senders with a positive threshold and the lowest report delivered by senders with a negative threshold.\footnote{Denote by $r_L$ the highest report among those delivered by senders with $\tau_j>0$, and by $r_H$ the lowest report among those delivered by senders with $\tau_j<0$. This alternative condition requires the decision maker to believe that the state is surely between $r_L$ and $r_H$ whenever $r_L<r_H$.} This alternative condition is relatively simple and mimics some implications of \eqref{eq:c} and \eqref{eq:d}.

The fully revealing equilibrium in truthful strategies is the only receiver-efficient equilibrium satisfying the alternative condition.\footnote{Since the proof of Lemma~\ref{lemma:pureREE} extends to this variant of the model with more than two senders, it follows that the truthful equilibrium is also the only robust equilibrium in pure strategies. The alternative condition does not rule out the plausibility of fully revealing and pure-strategy equilibria.} However, senders' coordination---which is possible when there are more than two competing senders---hinders fully revealing outcomes. This is the case, for example, when senders can collude or engage in non-binding pre-play communication. Intuitively, like-biased senders may agree to deviate from truthful reporting in a mutually beneficial and self-enforcing way. \cite{bernheim1987coalition} propose the notion of ``coalition-proof equilibrium'' as a solution concept for situations where players can freely discuss their strategies but cannot make binding commitments. The next result confirms the intuition that truthful equilibria are not immune to group deviations.\footnote{The type of group deviations considered by the notion of coalition-proofness is consistent with the model because it preserves its non-cooperative nature. Proposition~\ref{prop:coalition} does not require the application of any additional condition, and it applies to cheap talk environments as well.} Non-revealing equilibria remain important in a large class of economic environments beyond the two-senders case.

\begin{restatable}{proposition}{coalition}\label{prop:coalition}
Fully revealing equilibria in truthful strategies of the game with $n>2$ senders are not coalition-proof.
\end{restatable}

{\bf Uncertain preferences.} All aspects of the baseline model are common knowledge except for the realized state, which is known only to the senders. Additional uncertainty may discourage misreporting, result in more transmission of information, and potentially restore the plausibility of receiver-efficient equilibria. This may be the case, for example, when senders are uncertain about the preferences of the decision maker. I show that such type of uncertainty does not necessarily result in robust receiver-efficient equilibria, even when the state and report spaces are unbounded. 

Consider a variant of the model where senders are uncertain about the decision maker's threshold $\tau_{dm}$, which is distributed according to the common knowledge pdf $f_{dm}$. The distribution $f_{dm}$ has full support in $\left[\ubar t,\bar t\right]$, where $\tau_2>\bar t>\ubar t>\tau_1$. The support of $f_{dm}$ ensures that senders are always in competition against each other and have opposed biases. By contrast, the decision maker knows her own preferences. The rest of the model remains as before. Suppose that $\Theta\supset \left[\ubar t,\bar t\right]$ and senders' preferences are such that $\ubar r_2(\bar t)< \ubar t<\bar t \leq \bar r_1(\ubar t)$. This last assumption ensures that senders can deliver unambiguous recommendations, as in the baseline model.\footnote{Specifically, it allows sender~1 (2) to deliver a report $r_1\geq \bar t$ ($r_2\leq \ubar t$) even when the state is such that the decision maker unambiguously prefers the negative (positive) alternative. Similarly, the state space in the baseline model allows sender~1 to deliver reports $r_1\geq \bar r_1(0)$ and sender~2 to deliver $r_2\leq \ubar r_2(0)$.} Denote by $\Gamma'$ the game resulting from this variant of the baseline model. The next lemma shows that this model extension does not restore the existence of robust receiver-efficient equilibria.
\begin{restatable}{lemma}{uncertain}\label{lemma:uncertain}
	If a receiver-efficient equilibrium of $\Gamma'$ exists, then it is not unprejudiced.
\end{restatable}

\section{Concluding Remarks}\label{sec:conclusion}

This paper presents a model of adversarial communication between two perfectly informed senders and one uninformed decision maker. Senders can misreport information at a cost that is tied to the magnitude of misrepresentation. Misreporting costs represent, e.g., direct costs due to technological constraints or indirect costs due to expected reputation damages. The setting considered here covers several applications, including electoral campaigns, contested takeovers, lobbying, informative advertising, and judicial decision making. The main results show that equilibria where the decision maker obtains her complete-information payoff do not withstand refinements. The imposition of a minimal and natural beliefs structure generates robust equilibria with desirable properties where information transmission is only partial.

The analysis conducted in this paper provides a tractable and appealing approach to studying adversarial communication of information that is neither fully verifiable nor entirely ``cheap.'' Previous work shows that communication games typically display fully revealing equilibria when there are multiple senders or when misreporting is costly. By contrast, the analysis conducted here shows that revealing outcomes are not robust when the decision maker's problem is judicial, even with two competing senders. This finding enables us to model situations where information asymmetries can effectively result in persuasion, even when misreporting is costly and senders' reports can be cross-validated. 

Results obtained in this paper can be readily applied to analyze the informative value of judicial procedures. In a seminal paper, \cite{shin1998adversarial} conjectures that the assumption of full verifiability plays a key role in determining the superiority of adversarial over inquisitorial judicial procedures.\footnote{For example, ``[\ldots] violations of the verifiability assumption will be an important limiting factor in qualifying our findings in favor of the adversarial procedure'' \citep[p.~403]{shin1998adversarial}. Under the adversarial procedure, two parties with conflicting interests make their case to an uninformed decision maker. By contrast, the inquisitorial procedure requires the decision maker to adjudicate based only on her acquired information. The comparison is between a model where the decision maker gets information from two competing senders and one where she acquires it through an investigation.} We can validate this conjecture by modeling adversarial procedures with the framework studied in this paper. Findings in Sections~\ref{sec:fre} and \ref{sec:mixed} tell us that we can reject receiver-efficient outcomes by imposing a natural language structure or on robustness grounds. Once we are left with non-revealing outcomes, there is space for inquisitorial procedures (i.e., information acquisition) to dominate adversarial procedures. The conjecture of \cite{shin1998adversarial} is thus correct for any finite intensity of misreporting costs.\footnote{In addition to the verifiability assumption, there are other modeling differences between my setting and that of \cite{shin1998adversarial}: first, I assume that the senders are always perfectly informed about the state, while in \cite{shin1998adversarial} they may be uninformed or observe a noisy signal of the state; second, I consider a decision maker that is less informed than the senders, while in \cite{shin1998adversarial} every player is, on average, equally informed. In my setting, these differences give a relative advantage to the adversarial procedure, and therefore add further force to the potential superiority of the inquisitorial procedure.}


\pagebreak

\appendix

\section{Appendix}\label{sec:app}

In the main text, I assume for simplicity that the state and report space coincide with the real line, i.e., $\Theta=\mathbb{R}$. However, the analysis carried in this paper only requires the state space to be large enough. Hereafter, I assume that
\begin{displaymath}
	\Theta \supseteq \left[\ubar r_2(0),\bar r_1(0)\right].
\end{displaymath}
This assumption ensures that the information senders can transmit is not artificially bounded by restrictions in the reports that they can deliver.\footnote{For example, if $\max\Theta<\bar r_1(0)$, then sender~1 cannot deliver reports that are strictly dominated by truthful reporting when the realized state is strictly negative.}

\monotlemma*
\begin{proof}
	Consider a PBE satisfying \eqref{eq:m} and consider a state $\theta \geq \tau_1$. For sender~1, every report $r_1<\theta$ is dominated by truthful reporting because $C_1(r_1,\theta)>0=C_1(\theta,\theta)$ and (by~\eqref{eq:m}) $U_\dm(\theta,r_2)\geq U_\dm(r_1,r_2)$ for every $r_2\in \Theta$. Therefore, it must be that $r_1\notin S_1(\theta)$ for all $r_1<\theta$ and $\theta\geq \tau_1$. A similar argument applies to sender~2 and to states $\theta\leq\tau_j$, $j\in \{1,2\}$.
\end{proof}

\subsection{Receiver-efficient and Pure-strategy Equilibria}\label{app:REE}

\lemmababbling*
\begin{proof}
	Suppose that there is an equilibrium in which the decision maker's strategy is independent of sender $j$'s reports. Since misreporting is costly, sender $j$'s best reply is to report truthfully in every state. As a result, it is not sequentially rational for the decision maker to ignore sender $j$'s reports, contradicting that this is an equilibrium. The same line of logic applies to equilibria in which the decision maker's strategy is independent of both senders' reports. Therefore, there are no babbling equilibria.
	
	Suppose by way of contradiction that there exists an equilibrium where misreporting never occurs, that is, where $\rho_1(\theta)=\rho_2(\theta)=\theta$ for every $\theta\in\Theta$. Consider such a truthful equilibrium and a state $\theta=\epsilon>0$, where $\epsilon$ is small enough. To discourage deviations, off path beliefs must be such that $\beta(\epsilon,-\epsilon)=\p$. However, there always exists an $\epsilon>0$ such that, when the state is $\theta=-\epsilon$, sender~1 can profitably deviate from the prescribed truthful strategy by reporting $r_1=\epsilon$, as $u_1(-\epsilon)>k_1 C_1(\epsilon,-\epsilon)$. This contradicts our supposition that there exists an equilibrium where misreporting never occurs.
\end{proof}

\pureREE*
\begin{proof}
	Consider a pure-strategy equilibrium and suppose that it is not receiver-efficient, e.g., because $\beta(\rho_1(\theta'),\rho_2(\theta'))=\n$ for some $\theta'\geq 0$. In equilibrium, senders never engage in misreporting to implement their less preferred alternative with certainty, and therefore it must be that $\rho_1(\theta')=\theta'$. Posterior beliefs $p$ must be such that $\beta(r_1,\rho_2(\theta'))=\n$ for all $r_1\in(\ubar r_1(\theta'),\bar r_1(\theta'))$, as otherwise sender~1 would have a profitable deviation. The pair of reports $(\theta',\rho_2(\theta'))$ can induce $\n$ only if $(\rho_1(\theta''),\rho_2(\theta''))=(\theta',\rho_2(\theta'))$ for some $\theta''<0$. There is no $\theta\in[\tau_1,0)$ such that sender $1$ would misreport by delivering $r_1=\theta'\geq 0$ to implement $\n$, and therefore it must be that $\theta''<\tau_1$. Since there is always $r_1'\in(\ubar r_1(\theta'),\theta')$ such that $C_1(r_1',\theta'')<C_1(\theta',\theta'')$ and $\beta(r_1',\rho_2(\theta''))=\n$, sender~1 has a profitable deviation in state $\theta''$, contradicting that there exists a pure-strategy equilibrium that is not receiver-efficient.	
	
	Now consider a REE and suppose that it is not in pure strategies, but that there is a state $\theta'\in \Theta$ and a sender $j\in\{1,2\}$ such that $S_j(\theta')\supseteq\{r_j',r_j''\}$, with $r_j'\neq r_j''$. Since in a REE we have that $\beta(r_1',r_2')=\beta(r_1'',r_2'')$ for every $r_i',r_i''\in S_i(\theta)$, $i\in\{1,2\}$, it must be that $C_j(r_j',\theta')=C_j(r_j'',\theta')$. By Lemma~\ref{lemma:monot}, this is possible only if $r_j'=r_j''$, contradicting that there exists a REE that is not in pure strategies.
\end{proof}

\nonREE*

\begin{proof}
	In a REE, senders play pure strategies (Lemma~\ref{lemma:pureREE}) and the decision maker always selects her preferred alternative \emph{as if} she has complete information, that is,  $\beta(\rho_1(\theta),\rho_2(\theta))=\p$ for all $\theta\geq 0$ and $\beta(\rho_1(\theta),\rho_2(\theta))=\n$ otherwise. Since misreporting is costly, senders report truthfully in states where their least preferred alternative is implemented: $\rho_2(\theta)=\theta$ for all $\theta\in[0,\tau_2]$ and $\rho_1(\theta)=\theta$ for all $\theta\in[\tau_1,0)$. However, there are no REE where $\rho_j(\theta)=\theta$ for all $\theta\in[\tau_1,\tau_2]$ and $j\in\{1,2\}$; for otherwise, there would always be a state $\theta\in(\tau_1,\tau_2)$ and an off path pair of reports $(r_1,r_2)$, $r_1\neq r_2$, such that a sender can profitably deviate from truthful reporting (see also Lemma~\ref{lemma:babbling}). Therefore, in every REE either sender~1 misreports in some state $\theta\in [0,\tau_2)$, or sender~2 misreports in some $\theta\in(\tau_1,0]$, or both. 
	
	Consider now a REE where $\rho_1(\theta')\neq\theta'$ for some $\theta'\in[0,\tau_2)$. By Lemma~\ref{lemma:monot}, we have that $\rho_1(\theta')>\theta'$. To support the equilibrium, off path beliefs $p$ must be such that $\beta(r_1,\theta')=\n$ for all $r_1\in[\theta',\rho_1(\theta'))$ and $\beta(\rho_1(\theta''),r_2)=\p$ for all $r_2\in(\ubar r_2(\theta''),\theta'']$ and $\theta''\in[\theta',\tau_2)$. This implies that there must be an open set $S$ of non-negative states such that $\rho_1(\theta''')\geq\rho_1(\theta')>\theta'''=\rho_2(\theta''')$ for all $\theta'''\in S$. It follows that, for every $\theta'''\in S$, the pair of reports $(\theta''',\theta''')$ is off path. By Lemma~\ref{lemma:monot}, and since $\rho_2(\theta)=\theta$ for all $\theta\in[0,\tau_2]$ and $\rho_1(\theta)=\theta$ for all $\theta\in[\tau_1,0)$, we have that posterior beliefs $p$ are unprejudiced (Definition~\ref{def:unprejudiced}) only if $p(\theta\,|\,\theta''',\theta''')=0$ for all $\theta<0$. Therefore, unprejudiced beliefs imply that $\beta(\theta''',\theta''')=\p$, and therefore sender~1 can profitably deviate by reporting the truth in state $\theta'''\in S$. A similar argument applies to REE where $\rho_2(\theta')\neq\theta'$ for some $\theta'\in(\tau_1,0]$. Therefore, there are no REE (and, by Lemma \ref{lemma:pureREE}, no pure-strategy equilibria) with unprejudiced beliefs.
\end{proof}

\epsilonrobust*

\begin{proof}
	Consider the posterior beliefs $p_{G,\varepsilon}$ that the strategies $\phi_j$ of a PBE (see Section~\ref{sec:mixed} for the notation used to describe mixed strategies) induce in an $\varepsilon$-perturbed game for some distribution $G$ and sequence $\varepsilon^n$, i.e.,
	\begin{multline*}
		p_{G,\varepsilon}(\theta\,|\,r_1,r_2) =f(\theta)\frac{p_{G,\varepsilon}(r_1,r_2\,|\,\theta)}{p_{G,\varepsilon}(r_1,r_2)} \\
		=\frac{f(\theta)\left[\varepsilon_1 \varepsilon_2 g_1(r_1) g_2(r_2) + \varepsilon_1(1-\varepsilon_2)g_1(r_1)\phi_2(r_2,\theta) + (1-\varepsilon_1)\varepsilon_2 g_2(r_2) \phi_1(r_1,\theta)\right]}{\varepsilon_1 \varepsilon_2 g_1(r_1) g_2(r_2) +  \varepsilon_1(1-\varepsilon_2)g_1(r_1) \int_\Theta f(\theta)\phi_2(r_2,\theta)d\theta + (1-\varepsilon_1)\varepsilon_2 g_2(r_2) \int_\Theta f(\theta)\phi_1(r_1,\theta)d\theta}.
	\end{multline*}
	As $\varepsilon^n \to 0^+$ the event in which both reports are wrongly delivered or observed becomes negligible, and thus we have that $p_{G,\varepsilon} \to p_{G,0^+}$, where
	\begin{equation}\label{eq:epsilonbayes}
		p_{G,0^+}(\theta\,|\,r_1,r_2)  =\frac{f(\theta)\left[\varepsilon_1 g_1(r_1) \phi_2(r_2,\theta)+\varepsilon_2 g_2(r_2)\phi_1(r_1,\theta)\right]}{\varepsilon_1 g_1(r_1)\int_\Theta f(\theta)\phi_2(r_2,\theta)d\theta + \varepsilon_2 g_2(r_2)\int_\Theta f(\theta)\phi_1(r_1,\theta)d\theta}. 
	\end{equation}
	By \eqref{eq:epsilonbayes} we obtain that, for any distribution $G$ with full support and any sequence $\varepsilon^n\to 0^+$, $p_{G,0^+}(\theta\,|\,r_1,r_2)>0$ if and only if $\phi_j(r_j,\theta)>0$ for some $j\in\{1,2\}$. By Definition~\ref{def:unprejudiced2} (and hence even by Definition~\ref{def:unprejudiced}) we get that the limit beliefs $p_{G,0^+}$ are unprejudiced, and therefore every PBE of the game described in Section~\ref{sec:model} (and hence every equilibrium) that is $\varepsilon$-robust is supported by unprejudiced beliefs.\footnote{Notice that the proof of Lemma~\ref{lemma:epsilonunprejudiced} readily extends to an $n$-sender version of the game, for any finite $n\geq2$. In particular, given a profile of reports $(r_1,\ldots,r_n)$ and a set of senders $N=\{1,\ldots,n\}$, we have that $p_{G,0^+}(\theta\,|\,r_1,\ldots,r_n)>0$ if and only if $\phi_j(r_j,\theta)>0$ for $n-1$ senders. This is consistent with the idea behind unprejudiced beliefs, where the decision maker conjectures that deviations are individual.}
\end{proof}

\reenotde*

\begin{proof}
	Consider a REE and suppose that it is also an AE. Conditions \eqref{eq:c} and \eqref{eq:d} apply, implying that $\beta(r_1,r_2)=\p$ for every pair of reports such that $r_j\geq 0$ for $j\in\{1,2\}$. Consider a state $\theta'\geq 0$. Since the equilibrium is receiver-efficient, we have that $\beta(\rho_1(\theta'),\rho_2(\theta'))=\p$. To prevent deviations by sender~2, beliefs must yield $\beta(\rho_1(\theta'),r_2)=\p$ for every $r_2\in(\ubar r_2(\theta'),\bar r_2(\theta'))$. Because of \eqref{eq:d}, it must be that $\rho_2(\theta')=\theta'$. It follows that $\rho_1(\theta')=\theta'$, as we have noticed before that $\beta(\theta',\theta')=\p$ for every $\theta'\geq 0$. A similar logic applies to states $\theta'<0$, implying that both senders always report truthfully. This is in contradiction with Lemma~\ref{lemma:babbling}. Therefore, a REE cannot be an AE.
\end{proof}

\subsection{Adversarial Equilibria}\label{app:mixed}

\subsubsection{Notation for Mixed Strategies}\label{app:notation}

Before analyzing adversarial equilibria, I first introduce some useful notation. To describe mixed strategies, I use a ``mixed'' probability distribution $\phi_j(r_j,\theta)$ that, for every state $\theta$, assigns a mixed probability density to report $r_j$ by sender $j$. This specification allows me to describe the senders' reporting strategies as mixed random variables whose distribution can be partly continuous and partly discrete.\footnote{Mixed-type distributions that have both a continuous and a discrete component to their probability distributions are widely used to model zero-inflated data such as queuing times. For example, the ``rectified Gaussian'' is a mixed discrete-continuous distribution.} I denote the probability that sender $j$ reports truthfully in state $\theta$ with $\alpha_j(\theta)$. I denote the probability (density) that sender $j$ misreports by delivering report $r_j$ in state $\theta$ with $\psi_j(r_j,\theta)$. Together, the partial probability functions $\alpha_j(\theta)$ and $\psi_j(\cdot,\theta)$ determine the mixed probability distribution $\phi_j(\cdot,\theta)$ in a way that I explain more in details as follows. The CDFs of $\phi_j$ and $\psi_j$ are $\Phi_j(r_j,\theta)$ and $\Psi_j(r_j,\theta)$, respectively.

Formally, I partition the support $S_j(\theta)$ of each sender's strategy in two subsets, $\mathcal{C}_j(\theta)$ and $\mathcal{D}_j(\theta)$. To represent atoms in $\phi_j(\theta)$, I define a partial probability density function $\alpha_j(\cdot,\theta)$ on $\mathcal{D}_j(\theta)$ such that $0\leq \alpha_j(r_j,\theta)\leq 1$ for all $r_j\in \mathcal{D}_j(\theta)$, and $\hat \alpha_j(\theta)=\sum_{r_j\in \mathcal{D}_j(\theta)} \alpha_j(r_j,\theta)$. By contrast, the continuous part of the distribution $\phi_j(\theta)$ is described by a partial probability density function $\psi_j(\cdot,\theta)$ on $\mathcal{C}_j(\theta)$ such that $\int_{r_j\in \mathcal{C}_j(\theta)}\psi_j(r_j,\theta)d\theta=1-\hat\alpha_j(\theta)$. I set $\alpha_j(r',\theta)=0$ for all $r'\notin \mathcal{D}_j(\theta)$ and $\psi_j(r'',\theta)=0$ for all $r''\notin \mathcal{C}_j(\theta)$. 

As we shall see (Lemma~\ref{lemma:noatom2} and Step~\ref{step:atoms}), in every adversarial equilibrium $\mathcal{D}_j(\theta)=\{\theta\}$ for all $\theta\in\Theta$ and $j\in\{1,2\}$. Therefore, for ease of notation, I set $\alpha_j(\theta)\equiv\alpha_j(\theta,\theta)=\hat\alpha_j(\theta)$. The score $\alpha_j(\theta)$ thus represents the probability that sender $j$ reports truthfully in state $\theta\in\Theta$. The partial probability density functions\footnote{Under this specification, even the ``mass'' $\alpha_j(\cdot)$ is a partial probability ``density.''} $\alpha_j(\theta)$ and $\psi_j(\cdot,\theta)$ determine the ``generalized'' density function $\phi_j(\theta)$ through the well-defined mixed distribution
\[
\phi_j(x,\theta)= \delta(x-\theta)\alpha_j(\theta) + \psi_j(x,\theta),
\] 
where $\delta(\cdot)$ is the Dirac delta ``generalized'' function.\footnote{The Dirac delta $\delta(x)$ is a generalized function such that $\delta(x)=0$ for all $x\neq 0$, $\delta(0)=\infty$, and $\int_{-\epsilon}^{\epsilon} \delta(x)dx=1$ for all $\epsilon>0$.}

A mixed-strategy for sender $j$ is a mixed probability measure $\phi_j(\theta):\Theta\to\Delta(\Theta)$ with support $S_j(\theta)$. I indicate with $\phi_j(r_j,\theta)$ the mixed probability assigned by $\phi_j(\theta)$ to a report $r_j$ in state $\theta$ that satisfies
\[
\int_{r_j\in S_j(\theta)}\phi_j(r_j,\theta)d r_j=\alpha_j(\theta)+\int_{r_j\in \mathcal{C}_j(\theta)}\psi_j(r_j,\theta)dr_j=1.
\]
Sender $j$'s expected utility from delivering $r_j$ when the state is $\theta$ in an adversarial equilibrium $\omega$ is $W^\omega_j(r_j,\theta)$.

\subsubsection{Supporting Lemmata}

\begin{Alemma}\label{lemma:swing}
	In an adversarial equilibrium, every report $r\in \left[\ubar r_2(0),\bar r_1(0)\right]$ has a swing report $s(r)\in \left[\ubar r_2(0),\bar r_1(0)\right]$ such that: (i) if $r \gtrless  0$ then $s(r)\lessgtr0$ and $s(0)=0$; (ii)  $s(s(r))=r$; (iii) for every $r\in \left[\ubar r_2(0),\bar r_1(0)\right]$, $d s(r)/d r<0$; (iv) $s\left( \bar r_1(0) \right)=\ubar r_2(0)$.
\end{Alemma} 

\begin{proof}
	Consider a report $r_1$ by sender~1 such that $r_1\in(0,\bar r_1(0)]$. By \eqref{eq:c} and \eqref{eq:d} we obtain that $U_\dm(r_1,\ubar r_2(0))<0< U_\dm(r_1,0)$, and therefore there exists $r_2\in[\ubar r_2(0),0)$ such that $U_\dm(r_1,r_2)=0$. Thus, $r_2=s(r_1)$. A similar argument holds for a report $r_2\in[\ubar r_2(0),0)$. It follows that, for every $r\in \left[\ubar r_2(0),\bar r_1(0)\right]$, there exists $s(r)\in \left[\ubar r_2(0),\bar r_1(0)\right]$ such that if $r>0$ then $s(r)<0$, and if $r<0$ then $s(r)>0$. From \eqref{eq:c} and Definition~\ref{def:swing} we obtain that $s(0)=0$ and $s\left( \bar r_1(0) \right)=\ubar r_2(0)$. From Definition~\ref{def:swing} and point (i) we get that if $r'=s(r)$ then $r=s(r')$, and therefore $s(s(r))=r$. By applying the implicit function theorem and \eqref{eq:d} to $s(r)$, we obtain that for every $r\in \left[\ubar r_2(0),\bar r_1(0)\right]$, $d s(r)/d r<0$.
\end{proof}


\begin{Alemma}\label{lemma:cutoffs}
	In an adversarial equilibrium, truthful cutoffs are such that $\theta_l < 0 < \theta_h$ and $(\theta_l,\theta_h)\subset [\tau_1,\tau_2]\cap \left[\ubar r_2(0),\bar r_1(0)\right]$.	
\end{Alemma}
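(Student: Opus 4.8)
\emph{Plan.} I would identify each truthful cutoff as the unique zero of an explicit continuous function built from the swing report $s$ and one reach, locate that zero by the intermediate value theorem using the boundary values of $s$ fixed by \eqref{eq:c} through Lemma~\ref{lemma:swing}, and pin it down uniquely in the asserted range by a monotonicity argument. Before that I would record the facts to be used: since $\tau_1<\tau_\dm=0<\tau_2$, the thresholds are unique, and each $u_i$ is weakly increasing, we have $u_1(0)>0$ and $u_2(0)<0$; feeding this into \eqref{eq:reach1}--\eqref{eq:reach2} and using that $C_j(r,\theta)=0$ only at $r=\theta$ gives $\bar r_1(0)>0>\ubar r_2(0)$ (so $\hat R$ is a nondegenerate interval with $0$ in its interior) together with $\bar r_1(\tau_1)=\tau_1$ and $\ubar r_2(\tau_2)=\tau_2$. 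Continuity of the cost functions makes $\bar r_1,\ubar r_2$ continuous; and the sign restrictions on the cost functions of Section~\ref{sec:model}, together with $u_j$ weakly increasing, give via the implicit function theorem that $\bar r_1$ is strictly increasing on $(\tau_1,\infty)$ and $\ubar r_2$ is strictly increasing on $(-\infty,\tau_2)$. Finally, Lemma~\ref{lemma:swing} says $s$ is continuous and strictly decreasing on $\hat R$, with $s(0)=0$, $s(\ubar r_2(0))=\bar r_1(0)$, and $s(r)>0$ iff $r<0$.

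For $\theta_1$ I would study $h_1(\theta):=s(\theta)-\bar r_1(\theta)$ on $\hat R$, which is continuous with $h_1(0)=-\bar r_1(0)<0$. At $\theta=\max\{\tau_1,\ubar r_2(0)\}$ it is positive: if this point is $\tau_1$ (hence $\tau_1\in\hat R$) then $h_1(\tau_1)=s(\tau_1)-\tau_1>0$ since $\tau_1<0$ forces $s(\tau_1)>0>\tau_1$; if it is $\ubar r_2(0)>\tau_1$ then, as $\ubar r_2(0)\in(\tau_1,0)$ and $\bar r_1$ is strictly increasing there, $h_1(\ubar r_2(0))=\bar r_1(0)-\bar r_1(\ubar r_2(0))>0$. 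By the intermediate value theorem $h_1$ vanishes in $\bigl(\max\{\tau_1,\ubar r_2(0)\},0\bigr)$; and on that interval, which lies in $(\tau_1,0)$, $h_1$ is strictly decreasing (a strictly decreasing $s$ minus a strictly increasing $\bar r_1$), so its zero there is unique. Hence the solution $\theta_1$ of $s(\theta)=\bar r_1(\theta)$ satisfies $\tau_1<\theta_1<0$, and, being in $\hat R$, also $\theta_1\ge\ubar r_2(0)$.

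For $\theta_2$ I would run the mirror argument with $h_2(\theta):=s(\theta)-\ubar r_2(\theta)$: $h_2(0)=-\ubar r_2(0)>0$; $h_2$ is negative at $\min\{\tau_2,\bar r_1(0)\}$ (using $s(\tau_2)<0<\tau_2$ if $\tau_2\le\bar r_1(0)$, or $h_2(\bar r_1(0))=\ubar r_2(0)-\ubar r_2(\bar r_1(0))<0$ otherwise); and $h_2$ is strictly decreasing on $(0,\tau_2)$, giving $0<\theta_2<\min\{\tau_2,\bar r_1(0)\}$. Combining the two, $\theta_1<0<\theta_2$ and $(\theta_1,\theta_2)\subseteq(\tau_1,\tau_2)\cap(\ubar r_2(0),\bar r_1(0))\subseteq[\tau_1,\tau_2]\cap\hat R$, which is the claim.

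The main obstacle is not the intermediate value theorem (which just consumes the boundary values supplied by Lemma~\ref{lemma:swing}) but the bookkeeping around the reaches: one must verify that $s(\theta)=\bar r_1(\theta)$ and $s(\theta)=\ubar r_2(\theta)$ have no spurious solutions beyond the senders' thresholds, so that the cutoffs are genuinely well defined, since $\bar r_1$ need not be monotone for $\theta\le\tau_1$ and $\ubar r_2$ need not be monotone for $\theta\ge\tau_2$. This is the only place where the argument is delicate, and it is exactly where the strict monotonicity of the reaches on $(\tau_1,\infty)$ and $(-\infty,\tau_2)$, the strict monotonicity of $s$ on $\hat R$, and the anchor values $s(0)=0$, $s(\ubar r_2(0))=\bar r_1(0)$, $\bar r_1(\tau_1)=\tau_1$, $\ubar r_2(\tau_2)=\tau_2$ have to be combined.
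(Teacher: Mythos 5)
Your proposal is correct and follows essentially the same route as the paper's proof: both locate each truthful cutoff as the crossing point of $s(\cdot)$ with the relevant reach, using the anchor values $s(0)=0$ and $s(\ubar r_2(0))=\bar r_1(0)$ from Lemma~\ref{lemma:swing}, the strict monotonicity of $s$ and of the reaches, and the facts $\bar r_1(\tau_1)=\tau_1<0<\tau_2=\ubar r_2(\tau_2)$ to place $\theta_1\in(\max\{\tau_1,\ubar r_2(0)\},0)$ and $\theta_2\in(0,\min\{\tau_2,\bar r_1(0)\})$. Your write-up merely makes explicit the intermediate-value and uniqueness bookkeeping that the paper leaves terse.
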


\begin{proof}
	By Lemma~\ref{lemma:swing} we have that $s\left(\bar r_1(0)\right)=\ubar r_2(0)<0$ and, for every $r\in\left[\ubar r_2(0),\bar r_1(0)\right]$, $d s(r) / dr<0$. Moreover, $d \ubar r_2(\theta) / d\theta>0$ and thus $\ubar r_2(\theta)>\ubar r_2(0)$ for every $\theta>0$. Since $s(0)=0$, there is a state $\theta'\in(0,\bar r_1(0))$ such that $s(\theta')=\ubar r_2(\theta')$. From Definition~\ref{def:swing}, we obtain that $\theta'=\theta_h\in(0,\bar r_1(0))$. Similarly, we get that $\theta_l\in (\ubar r_2(0),0)$. Since $\bar r_1(\tau_1)=\tau_1<0$ and $\ubar r_2(\tau_2)=\tau_2>0$, it follows from Definition~\ref{def:cutoffs} that $(\theta_l,\theta_h)\subset[\tau_1,\tau_2]$.
\end{proof}

\begin{Alemma}\label{lemma:supportscutoffs}
	In an adversarial equilibrium: (i) $S_j(\theta)=\{\theta\}$ for $j\in\{1,2\}$ and all $\theta\notin(\theta_l,\theta_h)$; (ii) $S_j(\theta)$ contains more than one report for $j\in\{1,2\}$ and all $\theta\in(\theta_l,\theta_h)$.
\end{Alemma}

\begin{proof}
	I begin by proving that $S_j(\theta)=\{\theta\}$ for all $\theta\notin(\theta_l,\theta_h)$. 	Consider an AE and a state $\theta\geq\theta_h$. Since by Lemma~\ref{lemma:monot} we have that $\min S_1(\theta)\geq\theta\geq\theta_h$, it must be that $S_2(\theta)=\{\theta\}$ as $s(r_1)\leq \ubar r_2(\theta)$ for every $r_1\in S_1(\theta)$. Since  $\beta(\theta,\theta)=\p$, sender~1 best replies to $r_2=\theta$ with $r_1=\theta$ and therefore $S_1(\theta)=\{\theta\}$ as well. A similar argument applies to states $\theta\leq\theta_l$, completing the first part of the proof. Note that when $\theta=\theta_l$, sender~1 is actually indifferent between reporting $\theta_l$ and $\bar r_1(\theta_l)$. Since this is a measure zero event, which is irrelevant to the analysis that follows, I will consider only the case where $S_1(\theta_l)=\{\theta_l\}$, without any loss of generality. 
	
	I turn now to prove that $S_j(\theta)$ contains more than one report for every $\theta\in(\theta_l,\theta_h)$. Suppose by way of contradiction that $S_1(\theta)=\{r_1\}$ for some $\theta\in(\theta_l,\theta_h)$. By Lemma~\ref{lemma:monot}, we have that $r_1\geq\theta$. Consider first the case where $\theta\leq r_1<0$. In an AE, sender~2 best replies to $r_1\in[\theta,0)$ with $r_2=\theta$ because, by \eqref{eq:c} and \eqref{eq:d}, we get $\beta(r_1,\theta)=\n$. However, sender~1 can profitably deviate from the prescribed strategy by delivering $r_1'=s(\theta)$, where $0<s(\theta)<\bar r_1(\theta)$ (Lemmata~\ref{lemma:swing} and \ref{lemma:cutoffs}), contradicting that $S_1(\theta)=\{r_1\}$. Consider next the case where $r_1\geq  0$ and $r_1\geq \theta$. If $s(r_1)\leq\ubar r_2(\theta)$, then it must be that $S_2(\theta)=\{\theta\}$. By Definition~\ref{def:cutoffs} and Lemma~\ref{lemma:swing} we have that $\ubar r_2(\theta)<0$ and $r_1\geq s(\ubar r_2(\theta))>0$. Since $\ubar r_2(\theta)<\theta$, sender~1 can profitably deviate from the prescribed strategy by reporting either $r_1'=s(\theta)\in(0,r_1)$ if $\theta<0$, or $r_1'=\theta$ if $\theta\geq 0$, as in both cases we get that $\beta(r_1',\theta)=\p$ and $C_1(r_1',\theta)<C_1(r_1,\theta)$. If instead $s(r_1)>\ubar r_2(\theta)$, then sender~2 must be delivering some report $r_2'\in(\ubar r_2(\theta),s(r_1))$. Therefore, if $r_1>\theta$, then sender~1 is strictly better off reporting $\theta$ rather than $r_1$ because $\beta(\theta,r_2')=\beta(r_1,r_2')=\n$ and $C_1(r_1,\theta)>0=C_1(\theta,\theta)$. If instead $r_1=\theta$, then $\theta\geq 0$ and since $\ubar r_2(\theta)\geq \ubar r_2(0)$ we have that $s(r_2')\leq \bar r_1(\theta)$ (Lemma~\ref{lemma:swing}). In this case, sender~1 can profitably deviate from the prescribed strategy by reporting $r_1'=s(r_2')$. Similar arguments apply to $S_2(\theta)=\{r_2\}$, completing the proof.
\end{proof}

\begin{Alemma}\label{lemma:support1}
	In an adversarial equilibrium, for every $\theta\in(\theta_l,\theta_h)$, supports $S_j(\theta)$ are such that
	\[
	\max S_1(\theta)\leq \min\left\{\bar r_1(0),\bar r_1(\theta),s\left(\ubar r_2(\theta)\right)\right\},
	\] 
	\[
	\min S_2(\theta)\geq \max\left\{\ubar r_2(0), \ubar r_2(\theta),s\left(\bar r_1(\theta)\right)\right\}.
	\] 
\end{Alemma}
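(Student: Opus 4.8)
The plan is to prove the three upper bounds on $\max S_1(\theta)$ and invoke the mirror symmetry between the senders for the three lower bounds on $\min S_2(\theta)$ (there $s(\bar r_1(0))=\ubar r_2(0)$, Lemma~\ref{lemma:swing}(iv), plays the role of $s(\ubar r_2(0))=\bar r_1(0)$). Fix $\theta\in(\theta_1,\theta_2)$. By Lemma~\ref{lemma:cutoffs} we have $\theta_1<0<\theta_2$ and $(\theta_1,\theta_2)\subset[\tau_1,\tau_2]\cap\hat R$, so $u_1(\theta)>0>u_2(\theta)$, and by Lemma~\ref{lemma:monot} every report in $S_1(\theta)$ is at least $\theta$. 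The single engine of the proof is that a report in the equilibrium support is a best reply and hence cannot be strictly dominated by another feasible report. For the first bound, $\max S_1(\theta)\leq\bar r_1(\theta)$: if $r_1>\bar r_1(\theta)\geq\theta$, the sign conditions on $C_1$ give $k_1C_1(r_1,\theta)>k_1C_1(\bar r_1(\theta),\theta)=u_1(\theta)$, so $r_1$ yields expected payoff at most $u_1(\theta)-k_1C_1(r_1,\theta)<0$, strictly below the weakly positive, costless payoff from reporting $\theta$; hence $r_1\notin S_1(\theta)$. The reflected statement gives $\min S_2(\theta)\geq\ubar r_2(\theta)$, so, using Lemma~\ref{lemma:monot} again, $S_1(\theta)\subseteq[\theta,\bar r_1(\theta)]$ and $S_2(\theta)\subseteq[\ubar r_2(\theta),\theta]$.

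For the swing bound $\max S_1(\theta)\leq s(\ubar r_2(\theta))$, let $\pi_1(r_1,\theta)$ be the probability that $(r_1,R_2)$ induces $\beta=\p$ when $R_2\sim\phi_2(\theta)$. By Definition~\ref{def:swing}, Lemma~\ref{lemma:swing}, and \eqref{eq:m} one has $\beta(r_1,r_2)=\p$ iff $r_2\geq s(r_1)$, hence $\pi_1(r_1,\theta)=\Pr[R_2\geq s(r_1)\mid\theta]$; since $S_2(\theta)\subseteq[\ubar r_2(\theta),\theta]$ (and using \eqref{eq:c} and \eqref{eq:m} for any report outside $\hat R$) this equals $1$ for every $r_1\geq s(\ubar r_2(\theta))$. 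Because $\ubar r_2$ is strictly increasing and $s$ is a strictly decreasing involution on $\hat R$ (Lemma~\ref{lemma:swing}), whenever $\ubar r_2(\theta)\in\hat R$ we get $s(\ubar r_2(\theta))>s(\ubar r_2(\theta_2))=s(s(\theta_2))=\theta_2>\theta$ by Definition~\ref{def:cutoffs}, so any $r_1>s(\ubar r_2(\theta))$ costs strictly more than $s(\ubar r_2(\theta))$ while yielding the same $\pi_1=1$ and is therefore strictly dominated; this gives $\max S_1(\theta)\leq s(\ubar r_2(\theta))$. Reflecting, $\Pr[\beta(R_1,r_2)=\n\mid\theta]=\Pr[R_1<s(r_2)\mid\theta]$ equals $1$ for every $r_2<s(\bar r_1(\theta))$ (by \eqref{eq:m}, strictly by \eqref{eq:d} along the conflicting range), and since $s(\bar r_1(\theta))<\theta$ (Lemmata~\ref{lemma:swing} and \ref{lemma:cutoffs}) any such $r_2$ is strictly dominated by a less extreme report in $(r_2,s(\bar r_1(\theta)))$, which is cheaper and still yields $\n$ with probability one; hence $\min S_2(\theta)\geq s(\bar r_1(\theta))$.

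It remains to bound $\max S_1(\theta)$ by $\bar r_1(0)$ and $\min S_2(\theta)$ by $\ubar r_2(0)$. If $\theta\leq0$ the first follows from the reach bound, since $\bar r_1$ is increasing and $\bar r_1(\theta)\leq\bar r_1(0)$. If $\theta>0$ then by Lemma~\ref{lemma:cutoffs}, $\ubar r_2(0)<\ubar r_2(\theta)<s(\theta_2)<\bar r_1(0)$, so $\ubar r_2(\theta)\in\hat R$ and Lemma~\ref{lemma:swing} gives $s(\ubar r_2(\theta))<s(\ubar r_2(0))=\bar r_1(0)$, so the bound follows from the swing bound; the symmetric split handles $\min S_2(\theta)\geq\ubar r_2(0)$. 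In the remaining configurations — $\ubar r_2(\theta)$ outside $\hat R$ (only possible for $\theta<0$) or $\bar r_1(\theta)$ outside $\hat R$ (only for $\theta>0$) — the relevant swing report is either infinite, making that bound vacuous, or, by \eqref{eq:m} together with \eqref{eq:c}, lies above $\bar r_1(0)$ (respectively below $\ubar r_2(0)$), so the corresponding bound is already delivered by the reach bound for the opposite sender. I expect the main obstacle to be precisely this case analysis — tracking, as $\theta$ moves through $(\theta_1,\theta_2)$, whether $s(\ubar r_2(\theta))$ and $s(\bar r_1(\theta))$ exist and on which side of $\hat R$ they fall, and securing the strict inequality $U_\dm<0$ (not just $\leq 0$) needed to conclude $\beta=\n$ — but it is controlled entirely by Lemma~\ref{lemma:cutoffs} and the monotonicity and involution of $s$ from Lemma~\ref{lemma:swing}, together with \eqref{eq:d}; the dominance comparisons themselves are routine given the sign conditions on the cost functions.
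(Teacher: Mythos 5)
Your proof is correct and follows essentially the same route as the paper's: the reach bound $\max S_1(\theta)\le\bar r_1(\theta)$ (and its mirror) via strict domination by truthful reporting, the swing bounds via domination by $s\left(\ubar r_2(\theta)\right)$ and $s\left(\bar r_1(\theta)\right)$ once $\min S_2(\theta)\ge \ubar r_2(\theta)$ and $\max S_1(\theta)\le\bar r_1(\theta)$ are in hand, and the $\bar r_1(0)$/$\ubar r_2(0)$ bounds by splitting on the sign of $\theta$ and using monotonicity of the reach functions together with the monotonicity and involution of $s$ from Lemma~\ref{lemma:swing}. Your additional bookkeeping on the tie-break (needing $U_\dm<0$ rather than $\le 0$ to conclude $\beta=\n$, resolved through \eqref{eq:d} and \eqref{eq:m}) and on whether $\ubar r_2(\theta)$ or $\bar r_1(\theta)$ falls outside $\hat R$ is sound and in fact more explicit than the paper's own terse treatment; the only slip is cosmetic (in the corner case the swing bound is subsumed by the reach bound of the \emph{same} sender, not the opposite one), which does not affect the argument.
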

\begin{proof}
	Consider an AE and a $\theta\in (\theta_l,\theta_h)$. By the definition of reach (equations~\eqref{eq:reach1} and \eqref{eq:reach2}) every $r_1>\bar r_1(\theta)$ is strictly dominated by truthful reporting, and thus $\max S_1(\theta)\leq \bar r_1(\theta)$. Similarly, we obtain that $\min S_2(\theta)\geq \ubar r_2(\theta)$ and therefore by \eqref{eq:d} and by Definition~\ref{def:swing} every $r_1>s(\ubar r_2(\theta))$ is dominated by $r_1'=s(\ubar r_2(\theta))$ and every $r_2<s(\bar r_1(\theta))$ is dominated by $r_2'=s(\bar r_1(\theta))$. Thus, $\max S_1(\theta) \leq s(\ubar r_2(\theta))$ and $\min S_2(\theta)\geq s(\bar r_1(\theta))$. For every $\theta\in[0,\theta_h)$ we have $\bar r_1(\theta)\geq \bar r_1(0)$ and $\ubar r_2(\theta)\geq \ubar r_2(0)$, and therefore $\min S_2(\theta)\geq \ubar r_2(0)$. Since $s(\ubar r_2(0))=\bar r_1(0)$ (Lemma~\ref{lemma:swing}), it follows from \eqref{eq:d} and Definition~\ref{def:swing} that $s(r_2)\leq \bar r_1(0)$ for every $r_2\in S_2(\theta)$, and therefore $\max S_1(\theta)\leq \bar r_1(0)$. Similarly, we obtain that $\min S_2(\theta)\geq s(\bar r_1(0))$ for every $\theta\in(\theta_l,0)$.
\end{proof}

\begin{Alemma}\label{lemma:support2}
	In an adversarial equilibrium, $r_2\notin S_2(\theta)$ for every $r_2\in(s(\min S_1(\theta)),\theta)$ and $\theta>0$, and $r_1\notin S_1(\theta)$ for every $r_1\in(\theta, s(\max S_2(\theta)))$ and $\theta<0$.
\end{Alemma}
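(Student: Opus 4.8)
The plan is to prove each part by showing that any such report $r_j$ is strictly dominated, for sender~$j$, by reporting the truth: it induces the decision maker to take exactly the action that the truthful report does --- with probability one, against the opponent's equilibrium strategy --- but at a strictly positive misreporting cost. The two parts are mirror images of one another (exchange the senders and flip the sign of every report and payoff), so I would carry out the first part in full and only flag what changes for the second. First note that if $\theta\notin(\theta_1,\theta_2)$ both claims are vacuous: by Lemma~\ref{lemma:supportscutoffs}, $S_j(\theta)=\{\theta\}$ for $j\in\{1,2\}$, and $\theta$ itself does not lie in the relevant open interval. So, using $\theta_1<0<\theta_2$ (Lemma~\ref{lemma:cutoffs}), I may assume $\theta\in(0,\theta_2)$ in the first part and $\theta\in(\theta_1,0)$ in the second.

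For the first part, fix $\theta\in(0,\theta_2)$. By Lemma~\ref{lemma:monot}, $\min S_1(\theta)\geq\theta>0$, and by Lemma~\ref{lemma:support1}, $\max S_1(\theta)\leq\bar r_1(0)$, so $S_1(\theta)\subseteq[\theta,\bar r_1(0)]\subseteq\hat R$ and, by Lemma~\ref{lemma:swing}, every $r_1\in S_1(\theta)$ has a well-defined swing report $s(r_1)<0$. Suppose, toward a contradiction, that some $r_2\in(s(\min S_1(\theta)),\theta)$ lies in $S_2(\theta)$. The key step is that, since $s$ is strictly decreasing on $\hat R$ (Lemma~\ref{lemma:swing}(iii)) and $r_1\geq\min S_1(\theta)$, we get $s(r_1)\leq s(\min S_1(\theta))<r_2<\theta$ for every $r_1\in S_1(\theta)$. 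As $U_\dm(r_1,s(r_1))=0$ by Definition~\ref{def:swing} and $U_\dm$ is weakly increasing in each argument by \eqref{eq:m}, it follows that $U_\dm(r_1,r_2)\geq 0$ and $U_\dm(r_1,\theta)\geq 0$, so $\beta(r_1,r_2)=\beta(r_1,\theta)=\p$ for all $r_1\in S_1(\theta)$. Hence, against $\phi_1(\theta)$, both delivering $r_2$ and reporting the truth $\theta$ induce $\p$ with probability one; but $r_2\neq\theta$ gives $C_2(r_2,\theta)>0=C_2(\theta,\theta)$, so sender~2's expected payoff from $r_2$ is $u_2(\theta)-k_2C_2(r_2,\theta)<u_2(\theta)$, the payoff from reporting the truth. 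Since every report in the support of an equilibrium strategy must be a best reply, this is a contradiction, and $r_2\notin S_2(\theta)$.

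For the second part, fix $\theta\in(\theta_1,0)$ and $r_1\in(\theta,s(\max S_2(\theta)))$; as before one checks $S_2(\theta)\subseteq\hat R$ with $\max S_2(\theta)\leq\theta<0$ (Lemmata~\ref{lemma:monot} and \ref{lemma:support1}), so $s(\max S_2(\theta))>0$ is well defined. The symmetric argument needs $\beta(r_1,r_2)=\n$ for every $r_2\in S_2(\theta)$, and this is the step I expect to be the main obstacle: because the decision maker breaks ties toward $\p$, it is not enough to have $U_\dm(r_1,r_2)\leq 0$ (which \eqref{eq:m} readily gives from $r_1<s(\max S_2(\theta))\leq s(r_2)$ together with $U_\dm(s(r_2),r_2)=0$) --- I need the \emph{strict} inequality $U_\dm(r_1,r_2)<0$. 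This is where \eqref{eq:d} and Lemma~\ref{lemma:support1} come in. From Lemma~\ref{lemma:support1} and $s(\bar r_1(\theta))>s(\bar r_1(0))=\ubar r_2(0)$ (monotonicity of $\bar r_1$ and of $s$, Lemma~\ref{lemma:swing}(iv)), every $r_2\in S_2(\theta)$ satisfies $\ubar r_2(0)<r_2<0$ and hence $s(r_2)<\bar r_1(0)$, so the pairs $(0,r_2)$ and $(s(r_2),r_2)$ lie strictly inside the conflicting region of \eqref{eq:d}. Then: if $r_1\geq 0$, the segment from $(r_1,r_2)$ to $(s(r_2),r_2)$ lies in that region and \eqref{eq:d} gives $U_\dm(r_1,r_2)<U_\dm(s(r_2),r_2)=0$; if $r_1<0$, the same gives $U_\dm(0,r_2)<0$, and \eqref{eq:m} then gives $U_\dm(r_1,r_2)\leq U_\dm(0,r_2)<0$. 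In either case $\beta(r_1,r_2)=\n$, and the identical reasoning with $\theta<0<s(r_2)$ in place of $r_1$ gives $\beta(\theta,r_2)=\n$. So reporting the truth induces the same outcome as $r_1$ against $\phi_2(\theta)$, while $C_1(r_1,\theta)>0=C_1(\theta,\theta)$; hence $r_1$ is strictly dominated by truthful reporting and $r_1\notin S_1(\theta)$.
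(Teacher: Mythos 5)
Your proof is correct and follows essentially the same route as the paper's: every report in the stated interval is strictly dominated by truthful reporting because, against the opponent's equilibrium support, it cannot change the decision maker's choice (pinned down via the swing report and Lemmata~\ref{lemma:monot}, \ref{lemma:swing}, \ref{lemma:support1}) yet incurs a strictly positive misreporting cost. The only difference is one of detail: your second part makes explicit, using \eqref{eq:d} together with Lemma~\ref{lemma:support1}, the strict inequality $U_\dm(r_1,r_2)<0$ needed because ties break toward \p, a step the paper subsumes under ``a similar argument applies.''
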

\begin{proof}
	Consider $\theta\in(0,\theta_h)$. By Lemmata~\ref{lemma:monot} and \ref{lemma:swing} we have that $s(\min S_1(\theta))<0$, and by Definition~\ref{def:swing} we have that $\beta(r_1,r_2)=\p$ for every $r_1\in S_1(\theta)$ and $r_2\in(s\left(\min S_1(\theta)\right),\theta)$. Therefore, for sender~2 every $r_2\in(s\left(\min S_1(\theta)\right),\theta)$ is strictly dominated by truthful reporting, and therefore $r_2\notin S_2(\theta)$.  A similar argument applies to sender~1 for $\theta\in(\theta_l,0)$ and Lemma~\ref{lemma:supportscutoffs} proves the case of $\theta\notin (\theta_l,\theta_h)$, completing the proof.
\end{proof}

\begin{Alemma}\label{lemma:swingsupp}
	In an adversarial equilibrium, for every $\theta\in(\theta_l,\theta_h)$,
	\begin{itemize}[noitemsep]
		\item reports $r_1\in(\min S_1(\theta),\max S_1(\theta))$ have $s(r_1)>\ubar r_2(\theta)$;
		\item reports $r_2\in(\min S_2(\theta),\max S_2(\theta))$ have $s(r_2)<\bar r_1(\theta)$.
	\end{itemize}  
\end{Alemma}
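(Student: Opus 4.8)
The plan is to obtain Lemma~\ref{lemma:swingsupp} as a short consequence of the support bounds in Lemma~\ref{lemma:support1} together with the monotonicity of the swing report function from Lemma~\ref{lemma:swing}. The underlying idea is that an interior report $r_1$ of sender~1 lies strictly below $\max S_1(\theta)$, so strict monotonicity of $s$ forces its swing report strictly above $s(\max S_1(\theta))$; it then only remains to check $s(\max S_1(\theta))\geq\ubar r_2(\theta)$, which is exactly what the bound $\max S_1(\theta)\leq s(\ubar r_2(\theta))$ from Lemma~\ref{lemma:support1} delivers once one applies the involution $s\circ s=\mathrm{id}$ (Lemma~\ref{lemma:swing}(ii)).

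First I would locate all the relevant reports inside $\hat R$, where Lemma~\ref{lemma:swing} guarantees that $s$ is strictly decreasing and that $s(\bar r_1(0))=\ubar r_2(0)$. By Lemma~\ref{lemma:monot}, $\min S_1(\theta)\geq\theta$; by Lemma~\ref{lemma:cutoffs}, $\theta>\theta_1>\ubar r_2(0)$; and by Lemma~\ref{lemma:support1}, $\max S_1(\theta)\leq\bar r_1(0)$. Hence $\ubar r_2(0)<r_1<\bar r_1(0)$ and $\max S_1(\theta)\in\hat R$, so from $r_1<\max S_1(\theta)$ I get $s(r_1)>s(\max S_1(\theta))$, and from $r_1<\bar r_1(0)$ I get $s(r_1)>s(\bar r_1(0))=\ubar r_2(0)$.

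Then I would split on the sign of $\theta$. When $\theta\leq 0$, monotonicity of $\ubar r_2$ gives $\ubar r_2(\theta)\leq\ubar r_2(0)<s(r_1)$, which already settles the claim. When $\theta\geq 0$ the lower reach $\ubar r_2(\theta)$ can exceed $\ubar r_2(0)$, so I would instead invoke the sharper bound $\max S_1(\theta)\leq s(\ubar r_2(\theta))$ from Lemma~\ref{lemma:support1}: since $\ubar r_2(0)\leq\ubar r_2(\theta)\leq\theta<\theta_2<\bar r_1(0)$ we have $\ubar r_2(\theta)\in\hat R$, so Lemma~\ref{lemma:swing}(ii) applies, and feeding $\max S_1(\theta)\leq s(\ubar r_2(\theta))$ through the decreasing map $s$ yields $s(\max S_1(\theta))\geq s(s(\ubar r_2(\theta)))=\ubar r_2(\theta)$; together with $s(r_1)>s(\max S_1(\theta))$ this gives $s(r_1)>\ubar r_2(\theta)$. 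The statement for sender~2 and $\bar r_1(\theta)$ is proven symmetrically, with the two sign regimes exchanged, using the companion bound $\min S_2(\theta)\geq s(\bar r_1(\theta))$.

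The only genuinely delicate point is the domain bookkeeping: the identity $s(s(r))=r$ and the strict monotonicity of $s$ are guaranteed by Lemma~\ref{lemma:swing} only on $\hat R$, so one must verify that $\ubar r_2(\theta)$ (resp.\ $\bar r_1(\theta)$) sits inside $\hat R$ before inverting $s$ — which is automatic for $\theta\geq 0$ but may fail for $\theta<0$, and that asymmetry is precisely what forces the case split above. Once the domains are pinned down, every remaining step is a one-line application of a previously established fact, so I do not anticipate any further obstacle.
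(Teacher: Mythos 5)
Your proof is correct and rests on exactly the same ingredients as the paper's: the bound $\max S_1(\theta)\leq s(\ubar r_2(\theta))$ (and its companion for sender~2) from Lemma~\ref{lemma:support1}, combined with the strict monotonicity and involution properties of $s$ from Lemma~\ref{lemma:swing}. The paper phrases it as a short contradiction argument while you argue directly with an explicit sign split on $\theta$ to handle the domain of $s$, but this is the contrapositive of the same reasoning rather than a different route.
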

\begin{proof}
	Suppose not, and consider $r_1'\in (\min S_1(\theta),\max S_1(\theta))$ for some $\theta\in(\theta_l,\theta_h)$ such that  $s(r_1')<\ubar r_2(\theta)$. By Definition~\ref{def:cutoffs} we have that $\ubar r_2(\theta)<0$ and by Lemma~\ref{lemma:swing} we have that $s(\ubar r_2(\theta))<r_1'$. This is in contradiction to Lemma~\ref{lemma:support1}, which states that $\max S_1(\theta)\leq s(\ubar r_2(\theta))$.  A similar argument holds for reports $r_2\in(\min S_2(\theta),\max S_2(\theta))$, completing the proof.
\end{proof}

\begin{Alemma}\label{lemma:noatom}
	In an adversarial equilibrium, $\alpha_j(r_j,\theta)=0$ for all $r_j\in (\min S_j(\theta),\max S_j(\theta))$, $j\in\{1,2\}$, and $\theta\in(\theta_l,\theta_h)$.
\end{Alemma}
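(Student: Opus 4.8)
The plan is a proof by contradiction, and since the two senders play symmetric roles it is enough to rule out an interior atom of sender~1. Suppose that in a direct equilibrium $\alpha_1(r_1^*,\theta)>0$ for some $\theta\in(\theta_1,\theta_2)$ and some $r_1^*\in(\min S_1(\theta),\max S_1(\theta))$. First I would record the structural facts needed downstream. By Lemma~\ref{lemma:cutoffs}, $\theta\in(\tau_1,\tau_2)$, so by uniqueness of the thresholds $u_1(\theta)>0>u_2(\theta)$ strictly and the senders' preferences conflict at $\theta$. By Lemma~\ref{lemma:monot}, $\min S_1(\theta)\geq\theta$, hence $r_1^*>\theta$; and $r_1^*>0$ as well, which is immediate when $\theta\geq 0$ and, for $\theta<0$, follows from Lemmata~\ref{lemma:support1}, \ref{lemma:support2}, and \ref{lemma:swing}, since $S_1(\theta)$ avoids $\big(\theta,s(\max S_2(\theta))\big)$ while $\max S_2(\theta)\leq\theta<0$ forces $s(\max S_2(\theta))>0$. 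Consequently $s(r_1^*)<0$, so that for every $r_1\in S_1(\theta)$ and every $r_2$ near $s(r_1^*)$ the pair $(r_1,r_2)$ consists of conflicting reports inside $\hat R$, on which---by \eqref{eq:d}, \eqref{eq:c}, and Lemma~\ref{lemma:swing}---the decision maker's choice satisfies $\beta(r_1,r_2)=\p$ if and only if $r_1\geq s(r_2)$.

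The core of the argument is a discontinuity in sender~2's objective created by the atom. Holding $\phi_1(\theta)$ fixed, sender~2's expected payoff from a report $r_2$ near $s(r_1^*)$ is
\[
V_2(r_2)=\Pr\nolimits_{\phi_1(\theta)}\!\big[\,r_1\geq s(r_2)\,\big]\,u_2(\theta)-k_2\,C_2(r_2,\theta).
\]
This probability is nondecreasing in $r_2$ (as $s$ is strictly decreasing, Lemma~\ref{lemma:swing}) and, using $s(s(r_1^*))=r_1^*$, jumps up by exactly $\alpha_1(r_1^*,\theta)$ as $r_2$ increases through $s(r_1^*)$---the value at $s(r_1^*)$ being the post-jump one, since at $U_\dm=0$ the tie is broken toward $\p$. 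Because $u_2(\theta)<0$ and $C_2(\cdot,\theta)$ is continuous, $V_2$ drops by $\alpha_1(r_1^*,\theta)\,|u_2(\theta)|>0$ at $s(r_1^*)$; and for $r_2>s(r_1^*)$, monotonicity of the probability together with continuity of $C_2$ gives $V_2(r_2)\leq V_2(s(r_1^*))+k_2\big(C_2(s(r_1^*),\theta)-C_2(r_2,\theta)\big)$, which stays strictly below the left limit $\lim_{r_2\uparrow s(r_1^*)}V_2(r_2)$ throughout $[s(r_1^*),s(r_1^*)+\eta)$ for $\eta$ small. Since that left limit is in turn at most sender~2's equilibrium payoff, every point of $S_2(\theta)$ yields strictly more than any point of $[s(r_1^*),s(r_1^*)+\eta)$, so $S_2(\theta)\cap[s(r_1^*),s(r_1^*)+\eta)=\varnothing$: there is a gap in sender~2's support just to the right of the swing image of the atom.

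The contradiction then falls out of a small shift of the atom toward the truth. Pick $\delta'>0$ with $s(r_1^*-\delta')<s(r_1^*)+\eta$ (continuity of $s$) and $r_1^*-\delta'>\theta$; note $r_1^*-\delta'$ is a legitimate report because $R_1=\Theta\supseteq\hat R$ and $r_1^*-\delta'\in(\ubar r_2(0),\bar r_1(0))$. For $r_1\in[r_1^*-\delta',r_1^*]$ one has $\beta(r_1,r_2)=\p$ iff $r_2\geq s(r_1)$ with $s(r_1)\in[s(r_1^*),s(r_1^*)+\eta)$, so $\{r_2\geq s(r_1)\}$ and $\{r_2\geq s(r_1^*)\}$ differ only by a subset of the gap and carry the same $\phi_2(\theta)$-probability $Q_0$. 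Hence on this interval sender~1's expected payoff in state $\theta$ equals $Q_0\,u_1(\theta)-k_1C_1(r_1,\theta)$, which is strictly decreasing in $r_1$ because $C_1(\cdot,\theta)$ is strictly increasing on $(\theta,\infty)$. Thus $r_1^*-\delta'$ strictly dominates $r_1^*$ for sender~1 in state $\theta$, so his best reply places no mass at $r_1^*$, contradicting $\alpha_1(r_1^*,\theta)>0$. The symmetric case---an interior atom of $\phi_2(\theta)$ at some $r_2^*<0$---makes sender~1's payoff jump up at $s(r_2^*)>0$, forces $S_1(\theta)$ to have a gap immediately below $s(r_2^*)$, and then shifting $r_2^*$ upward toward $\theta$ is strictly profitable for sender~2; I would close it by the same steps with the inequalities reversed.

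The step I expect to be delicate is the discontinuity bookkeeping of the second paragraph---pinning down that $\Pr_{\phi_1(\theta)}[r_1\geq s(r_2)]$ has a one-sided jump of size exactly $\alpha_1(r_1^*,\theta)$ at $s(r_1^*)$ while correctly handling the tie-breaking rule---together with the preliminary claims that $r_1^*$ is strictly interior, that $r_1^*>\theta$ (so the shifted report is still a strictly cheaper misreport), and that $r_1^*>0$ (so the relevant pairs are conflicting and $s$ behaves as stated). Once the gap in $S_2(\theta)$ is established, the deviation and the cost monotonicity are routine.
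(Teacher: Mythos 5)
Your proof is correct and follows essentially the same route as the paper's: the atom makes sender 2's payoff jump down at $s(r_1^*)$, which opens a gap in $S_2(\theta)$ just above $s(r_1^*)$, and sender 1 then profitably shifts the atom slightly toward the truth into the swing-preimage of that gap. Your extra bookkeeping (tie-breaking at $U_\dm=0$, ruling out an atom of $\phi_2$ at $s(r_1^*)$ itself, and verifying $r_1^*>\theta$ and $r_1^*>0$) just makes explicit steps the paper leaves terse, with the paper instead citing Lemma~\ref{lemma:swingsupp} for the location of $s(r_1^*)$.
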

\begin{proof}
	Consider $\theta\in(\theta_l,\theta_h)$ and suppose that there is an AE $\omega$ where sender~1's strategy $\phi_1(\theta)$ has an atom $\alpha_1(r_1',\theta)>0$ in some report $r_1'\in (\min S_1(\theta),\max S_1(\theta))$. By Lemma~\ref{lemma:swingsupp} we have that $s(r_1')>\ubar r_2(\theta)$. The expected payoff of sender~2, $W_2^\omega(\cdot,\theta)$, is discontinuous around $r_2=s(r_1')$ and therefore it must be that, for some $\epsilon>0$ small enough, $(s(r_1'),s(r_1')+\epsilon) \cap S_2(\theta)=\varnothing$. Therefore, there exists an $\epsilon'>0$ small enough such that $W_1^\omega(r_1'',\theta)>W_1^\omega(r_1',\theta)$ for some $r_1''\in\left(s\left(s(r_1')+\epsilon'\right),r_1'\right)$, where by Lemma~\ref{lemma:swing} we have that $s\left(s(r_1')+\epsilon'\right)<r_1'$, thus contradicting that this is an equilibrium. A similar argument applies to atoms in sender~2's strategy, completing the proof.	
\end{proof}

\begin{Alemma}\label{lemma:supp3}
	In an adversarial equilibrium, $\min S_1(\theta)=\theta$ for all $\theta\geq 0$, and $\max S_2(\theta)=\theta$ for all $\theta\leq 0$.
\end{Alemma}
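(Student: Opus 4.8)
The plan is to prove the first claim, $\min S_1(\theta)=\theta$ for all $\theta\ge 0$; the second is symmetric. Since $\theta\ge 0>\tau_1$, Lemma~\ref{lemma:monot} already gives $\min S_1(\theta)\ge\theta$, and because supports are closed it is enough to show $\theta\in S_1(\theta)$. By Lemma~\ref{lemma:supportscutoffs} this is immediate when $\theta\ge\theta_2$, so I fix $\theta\in[0,\theta_2)$ and argue by contradiction, assuming $\theta\notin S_1(\theta)$, i.e. $a_1:=\min S_1(\theta)>\theta\ge 0$. Then $a_1>0$, and (by Lemma~\ref{lemma:support1}) $a_1\in\hat R$, so by Lemma~\ref{lemma:swing} the swing function is strictly decreasing with $s(a_1)<s(\theta)\le 0\le\theta<a_1$.

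The first step is to locate sender~2's support in state $\theta$. By Lemma~\ref{lemma:monot}, $S_2(\theta)\subseteq(-\infty,\theta]$ (using $\theta<\theta_2\le\tau_2$, Lemma~\ref{lemma:cutoffs}); and, for $\theta>0$, Lemma~\ref{lemma:support2} gives $S_2(\theta)\cap\big(s(a_1),\theta\big)=\varnothing$. For $\theta=0$ the same emptiness follows from a direct strict-dominance argument: every $r_2\in(s(a_1),0)$ has $s(r_2)<a_1$ by Lemma~\ref{lemma:swing}, hence induces $\beta=\p$ against every $r_1\in S_1(0)\subseteq[a_1,\infty)$, exactly as the truthful report $r_2=0$ does but at strictly positive cost, so it is not a best reply for sender~2. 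In all cases $S_2(\theta)\subseteq(-\infty,s(a_1)]\cup\{\theta\}$.

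The second step uses this to show that sender~1's reports $\theta$ and $a_1$ yield him the same probability of having $\p$ selected. By Definition~\ref{def:swing} together with \eqref{eq:d} and \eqref{eq:m}, $\beta(\theta,r_2)=\p\iff r_2\ge s(\theta)$ and $\beta(a_1,r_2)=\p\iff r_2\ge s(a_1)$. Since $s(a_1)<s(\theta)\le\theta$, the set $(-\infty,s(a_1)]\cup\{\theta\}$ meets $[s(\theta),\infty)$ only at $\theta$, and meets $[s(a_1),\infty)$ only at the two points $s(a_1)$ and $\theta$; because $\phi_2(\theta)$ has no atom at $s(a_1)$ (Lemma~\ref{lemma:noatom2}, and $s(a_1)\ne\theta$), both probabilities equal the truthful atom $\alpha_2(\theta)$. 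Consequently $W^\omega_1(\theta,\theta)-W^\omega_1(a_1,\theta)=k_1\big(C_1(a_1,\theta)-C_1(\theta,\theta)\big)=k_1C_1(a_1,\theta)>0$, using $C_1(\theta,\theta)=0$ and $a_1\ne\theta$. But $a_1\in S_1(\theta)$, so $W^\omega_1(a_1,\theta)$ is sender~1's equilibrium payoff and must weakly exceed $W^\omega_1(\theta,\theta)$ — a contradiction. Hence $\theta\in S_1(\theta)$, i.e. $\min S_1(\theta)=\theta$; the statement $\max S_2(\theta)=\theta$ for $\theta\le 0$ follows by the mirror argument, using the second half of Lemma~\ref{lemma:support2} and the analogous direct argument at $\theta=0$.

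I expect the delicate point to be the claim in the second step that $\phi_2(\theta)$ places no mass on the ``swing window'' $[s(a_1),s(\theta))$: without it, raising the report from $\theta$ to $a_1$ could increase sender~1's winning probability by enough to justify the extra cost, and the argument collapses. Pinning this down requires combining the gap of Lemma~\ref{lemma:support2}, the location of $S_2(\theta)$ weakly below $\theta$ from Lemma~\ref{lemma:monot}, and the absence of a stray atom at $s(a_1)$ from Lemma~\ref{lemma:noatom2}; and the boundary state $\theta=0$, which Lemma~\ref{lemma:support2} does not cover, must be treated by the separate dominance argument sketched above.
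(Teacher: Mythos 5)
There is a genuine gap, and it is the very point you flagged as delicate: to show that sender~2 puts no mass at $s(a_1)$ (where $a_1:=\min S_1(\theta)>\theta$), you cite Lemma~\ref{lemma:noatom2}. But in the paper's development Lemma~\ref{lemma:noatom2} is established \emph{after} Lemma~\ref{lemma:supp3}, and its proof explicitly invokes Lemma~\ref{lemma:supp3} (precisely in the case of an atom at $\min S_1(\theta)$), so your argument is circular as written. The earlier no-atom result, Lemma~\ref{lemma:noatom}, does not rescue you, because it only excludes atoms in the interior $(\min S_2(\theta),\max S_2(\theta))$, and in your contradiction scenario $s(a_1)$ may well be a boundary point of $S_2(\theta)$. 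The repair is the direct dominance argument the paper uses: against every $r_1\in S_1(\theta)\subseteq[a_1,\bar r_1(0)]$ we have $r_1\geq a_1=s(s(a_1))$, so by \eqref{eq:m}, \eqref{eq:d} and Definition~\ref{def:swing} the report $r_2=s(a_1)$ induces $\beta(\cdot)=\p$ with probability one, exactly as truthful reporting $r_2=\theta$ does, but at cost $k_2C_2(s(a_1),\theta)>0$; hence $s(a_1)$ is strictly dominated for sender~2 and carries no mass (indeed $\alpha_2(s(a_1),\theta)=0$). This is the same argument you already deploy for reports in $(s(a_1),0)$ when $\theta=0$, applied to the point $s(a_1)$ itself. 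With that substitution your proof coincides in substance with the paper's: same contradiction hypothesis, same use of Lemmata~\ref{lemma:monot}, \ref{lemma:supportscutoffs} and \ref{lemma:support2} to locate the gap in $S_2(\theta)$, same equal-winning-probability and cost comparison.

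Two further remarks. Your separate treatment of $\theta=0$ is a small genuine improvement, since Lemma~\ref{lemma:support2} formally covers only $\theta>0$ while the paper applies it at $\theta=0$ without comment. On the other hand, your final contradiction compares payoffs only at the single report $a_1$, relying on the assertion that a support point must earn the equilibrium payoff; since $a_1$ may carry zero probability, the airtight version is the paper's neighborhood argument: once the atom at $s(a_1)$ is excluded, the extra winning probability from reporting $r_1\in(a_1,a_1+\epsilon)$, namely the mass of $S_2(\theta)$ on $[s(r_1),s(a_1)]$, vanishes as $r_1\downarrow a_1$, while the extra cost stays bounded below by $k_1C_1(a_1,\theta)>0$, so $W_1^\omega(\theta,\theta)>W_1^\omega(r_1,\theta)$ on $[a_1,a_1+\epsilon)\cap S_1(\theta)$, contradicting the definition of the support. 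Given the paper's own convention that all reports in the support yield the same expected payoff, this second point is minor; the circular citation of Lemma~\ref{lemma:noatom2} is the substantive defect to fix.
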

\begin{proof}
	Consider an AE $\omega$ and $\theta\geq 0$. By Lemma~\ref{lemma:monot}, it must be that $\min S_1(\theta)\geq \theta$. Suppose by way of contradiction that $\min S_1(\theta)>\theta$. By Lemma~\ref{lemma:supportscutoffs} it has to be that $\theta<\theta_h$ and by Lemma~\ref{lemma:support2} we obtain that $S_2(\theta) \cap (s(\min S_1(\theta)),\theta)=\varnothing$. Therefore, unless sender~2's strategy has an atom $\alpha_2(s(\min S_1(\theta)),\theta)>0$, we have that $\Phi_2(s(\min S_1(\theta)),\theta)=\Phi_2(s(\theta),\theta)$. However, since $\beta(r_1,s(\min S_1(\theta)))=\p$ for all $r_1\in S_1(\theta)$ and $C_2(s(\min S_1(\theta)),\theta)>0$, it must be that $\alpha_2(s(\min S_1(\theta)),\theta)=0$ as $s(\min S_1(\theta))$ is strictly dominated by $r_2=\theta$. Hence we have that, for some $\epsilon>0$, $W_1^\omega(\theta,\theta)>W_1^\omega(r_1,\theta)$ for every $r_1\in[\min S_1(\theta), \min S_1(\theta)+\epsilon)\cap S_1(\theta)$, contradicting that there can be an AE with $\min S_1(\theta)>\theta$ for a $\theta\geq 0$. A similar argument holds for sender~2 and $\theta\leq 0$, completing the proof.
\end{proof}

\begin{Alemma}\label{lemma:supp4}
	In an adversarial equilibrium, $S_j(\theta)\setminus\{\theta\}$ contains more than one report for $j\in\{1,2\}$ and every $\theta\in(\theta_l,\theta_h)$.
\end{Alemma}
\begin{proof}
	Consider an AE $\omega$ and a state $\theta\in[0,\theta_h)$. By Lemma~\ref{lemma:supp3} we have that $\min S_1(\theta)=\theta$, and by Lemma~\ref{lemma:supportscutoffs} we have that $S_1(\theta)$ contains more than one report. Suppose by way of contradiction that $S_1(\theta)\setminus\{\theta\}=\{r_1\}$ for some $r_1>0$. Since $C_1(r_1,\theta)>0$, it must be that, in equilibrium, $r_1$ induces $\p$ with strictly higher probability than truthful reporting. This implies that there is some $r_2\in \left[s(r_1),s(\theta)\right)$ in the support of sender~2's strategy, $r_2\in S_2(\theta)$. Since reports that are further away from the realized state are more costly, it must be that $\alpha_2(r_2',\theta)>0$ for some $r_2'\in \left[s(r_1),s(\theta)\right)$, and $\phi_2(r_2,\theta)=0$ for all $r_2\in \left[s(r_1),r_2'\right)$. But then $W_1^\omega(s(r_2'),\theta)>W_1^\omega(r_1,\theta)$, contradicting that this is an equilibrium. 
	
	Consider now the case where $\theta\in(\theta_l,0)$ and suppose again that $S_1(\theta)\setminus\{\theta\}=\{r_1\}$. By Lemma~\ref{lemma:supportscutoffs}, we have $S_j(\theta)$ contains more than one report for $j\in\{1,2\}$, and therefore $\min S_1(\theta)=\theta$. By Lemmata~\ref{lemma:support1} and \ref{lemma:support2} we have that $r_1\geq s(\theta)>0$ and $\max S_2(\theta)=\theta$. If $r_1=s(\theta)$, then sender~2 can profitably deviate from the prescribed strategy by always reporting $\theta-\epsilon$ for some $\epsilon>0$ small enough. If instead $r_1>s(\theta)$, then it must be that $S_2(\theta)\cap [s(r_1),\theta)=\varnothing$ as every $r_2\in [s(r_1),\theta)$ would be strictly dominated by truthful reporting. Since $S_2(\theta)$ contains more than one report, there must be some $r_2<s(r_1)$ such that $r_2\in S_2(\theta)$. Therefore, $W_1^\omega(s(\theta),\theta)>W_1^\omega(r_1,\theta)$, contradicting that this is an equilibrium. A similar argument applies to $S_2(\theta)\setminus\{\theta\}$, completing the proof.
\end{proof}


\begin{Alemma}\label{lemma:convex}
	In an adversarial equilibrium,  $S_j(\theta)\setminus\{\theta\}$ is convex for all $\theta\in(\theta_l,\theta_h)$ and $j\in\{1,2\}$.
\end{Alemma} 

\begin{proof}
	Consider an AE $\omega$ and a state $\theta \in (\theta_l,\theta_h)$. By Lemma~\ref{lemma:supp4} we have that $S_j(\theta)\setminus\{\theta\}$ contains more than one report, $j\in\{1,2\}$. Suppose by way of contradiction that $S_1(\theta)\setminus\{\theta\}$ is not convex, but instead there are two reports $r_1',r_1''\in S_1(\theta)\setminus\{\theta\}$ with $r_1'<r_1''$, such that $r_1\notin S_1(\theta)\setminus\{\theta\}$ for every $r_1\in (r_1',r_1'')$. By Lemmata~\ref{lemma:monot}, \ref{lemma:swing}, and \ref{lemma:support2} we have that $r_1'>0$, $r_1'\geq s(\theta)$, and $s(r_1'')<s(r_1')<0$. Since $C_1(r_1'',\theta)>C_1(r_1',\theta)$ and $\frac{d C_j(r,\theta)}{d r}>0$ for every $r>\theta$, it must be that every report $r_1\geq r_1''$ such that $\phi_1(r_1,\theta)>0$ induces the implementation of alternative $\p$ with strictly higher probability than every report $r_1'''\leq r_1'$ such that $\phi_1(r_1''',\theta)>0$. This is possible only if $r_2\in S_2(\theta)$ for some $r_2\in[s(r_1''),s(r_1')]$. Since $\Phi_1(r_1,\theta)$ is constant for all $r_1\in(r_1',r_1'')$, it must be that sender~2's strategy has an atom $\alpha_2(r_2,\theta)>0$ in some $r_2\in(s(r_1''),s(r_1')]$, and $\phi_2(r_2',\theta)=0$ for all $r_2'\in[s(r_1''),s(r_1')]$ such that $r_2'\neq r_2$. However, for some $\epsilon>0$ small enough we have that $W_1^\omega(s(r_2),\theta)>W_1^\omega(r_1,\theta)$ for all $r_1\in[r_1'',r_1''+\epsilon)$ such that $r_1\in S_1(\theta)$, where $s(r_2)<r_1''$, contradicting that this is an equilibrium. A similar argument applies to $S_2(\theta)\setminus\{\theta\}$, completing the proof.
\end{proof}

\begin{Alemma}\label{lemma:noatom2}
	In an adversarial equilibrium, the strategies $\phi_j(\theta)$ have no atoms in $S_j(\theta)\setminus\{\theta\}$ for every $\theta\in(\theta_l,\theta_h)$ and $j\in\{1,2\}$.
\end{Alemma}

\begin{proof}
	Consider an AE $\omega$. Lemma~\ref{lemma:supportscutoffs} shows that $S_j(\theta)$ contains more than one report for all $\theta\in(\theta_l,\theta_h)$ and Lemma~\ref{lemma:noatom} shows that $\phi_j(\theta)$ has no atoms in $(\min  S_j(\theta),\max S_j(\theta))$. Consider a state $\theta\in(\theta_l,\theta_h)$, and suppose that $\phi_1(\theta)$ has an atom in $\max S_1(\theta)$, i.e., $\alpha_1(\max S_1(\theta),\theta)>0$. By Lemma~\ref{lemma:support1}, we have that $\max S_1(\theta)\leq  \min\{s(\ubar r_2(\theta)),\bar r_1(\theta)\}$ and $\min S_2(\theta)\geq\max\{\ubar r_2(\theta),s(\bar r_1(\theta))\}$. If $\min S_2(\theta)>s(\max S_1(\theta))$, then $W_1^\omega(r_1,\theta)>W_1^\omega(\max S_1(\theta),\theta)$ for any $r_1\in[s(\min S_2(\theta)),\max S_1(\theta))$. If $\min S_2(\theta)=s(\max S_1(\theta))$, then, since sender~2's expected payoff $W_2^\omega(\cdot,\theta)$ is discontinuous at $r_2=s(\max S_1(\theta))$, it must be that $r_2'\notin S_2(\theta)$ for all $r_2' \in (s(\max S_1(\theta)),s(\max S_1(\theta))+\epsilon)$ and some small $\epsilon>0$. Otherwise, for some $\epsilon'>0$, $W_2^\omega(s(\max S_2(\theta))-\epsilon',\theta)>W_2^\omega(r_2',\theta)$ for any $r_2' \in [s(\max S_1(\theta)),s(\max S_1(\theta))+\epsilon]$. However, this would contradict either Lemma~\ref{lemma:supp4} or Lemma~\ref{lemma:convex}, and therefore it would not be possible in an AE.
	
	Suppose now that $\phi_1(\theta)$ has an atom in $\min S_1(\theta)$, i.e., $\alpha_1(\min S_1(\theta),\theta)>0$. By Lemma~\ref{lemma:supp3}, if $\theta\geq 0$ then $\min S_1(\theta)=\theta$, and therefore let us suppose that $\theta\in(\theta_l,0)$ and that $\min S_1(\theta)>\theta$ when $\theta<0$. By Lemmata~\ref{lemma:swing}, \ref{lemma:support2}, and \ref{lemma:supp3} we have that $\min S_1(\theta)\geq s(\theta)>0$. If $\min S_1(\theta)=s(\theta)$, then it must be that $\phi_2(\theta,\theta)=0$, as $W_2^\omega(\theta-\epsilon,\theta)>W_2^\omega(\theta,\theta)$ for some $\epsilon>0$ small enough. But then the atom in $\min S_1(\theta)$ would be strictly dominated by truthful reporting as $C_1(s(\theta),\theta)>0$ and $\beta(s(\theta),r_2)=\n$ for every $r_2\in S_2(\theta)$, contradicting that this is an equilibrium. Consider now the case where $\min S_1(\theta)>s(\theta)$. We have that $\Phi_1(r_1,\theta)=0$ for every $r_1<\min S_1(\theta)$, and by Lemma~\ref{lemma:swing} we have that $s(\min S_1(\theta))<\theta$. Therefore, it must be that $\phi_2(r_2,\theta)=0$ for every $r_2\in[s(\min S_1(\theta)),\theta)$. However this implies that, for sender~1, $\min S_1(\theta)$ is dominated by $s(\theta)$, contradicting that this can be an equilibrium. Similar arguments hold for atoms $\alpha_2(r_2,\theta)$ for $r_2\in S_2(\theta)\setminus\{\theta\}$, completing the proof.
\end{proof}


\begin{Alemma}\label{lemma:convex2}
	In an adversarial equilibrium, $S_1(\theta)$ is convex for all $\theta\geq 0$ and $S_2(\theta)$ is convex for all $\theta \leq 0$.
\end{Alemma}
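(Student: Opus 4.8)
I would split on whether $\theta$ lies within the truthful cutoffs. If $\theta\geq\theta_2$ then Lemma~\ref{lemma:supportscutoffs} gives $S_1(\theta)=\{\theta\}$, which is trivially convex, and symmetrically $S_2(\theta)=\{\theta\}$ when $\theta\leq\theta_1$; so the only substantive case is $\theta\in[0,\theta_2)$ for $S_1$, the argument for $S_2$ on $(\theta_1,0]$ being the mirror image. Fix such a $\theta$. By Lemma~\ref{lemma:supp3}, $\min S_1(\theta)=\theta$; by Lemma~\ref{lemma:convex}, $S_1(\theta)\setminus\{\theta\}$ is convex, hence an interval; and by Lemma~\ref{lemma:supp4} it is nonempty. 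Thus $S_1(\theta)$ is an interval (and hence convex) as soon as there is no gap between $\theta$ and that interval, i.e.\ as soon as $a:=\inf\big(S_1(\theta)\setminus\{\theta\}\big)$ equals $\theta$.

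The core of the proof is ruling out $a>\theta$ by a deviation argument in the spirit of Lemma~\ref{lemma:convex}. Note $S_1(\theta)\subseteq\hat R$ (Lemma~\ref{lemma:support1} together with $\theta\geq 0>\ubar r_2(0)$), so $s$ is strictly decreasing on $S_1(\theta)$ by Lemma~\ref{lemma:swing}, and recall $\beta(r_1,r_2)=\p$ iff $r_2\geq s(r_1)$. Suppose $a>\theta$. Then $\theta$ is isolated in $S_1(\theta)$, so sender~1's strategy has an atom $\alpha_1(\theta)>0$ and $(\theta,a)\cap S_1(\theta)=\varnothing$. Compare the reports $\theta$ and $a$ for sender~1: since $s(a)<s(\theta)$, report $a$ triggers $\p$ on the strictly larger event $\{r_2\geq s(a)\}$, and the additional event is $\{r_2\in[s(a),s(\theta))\}$. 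For $r_2$ in the open interval $\big(s(a),s(\theta)\big)$ one has $s(r_2)\in(\theta,a)$, hence $\Phi_1(s(r_2),\theta)=\alpha_1(\theta)$ is constant, so sender~2's payoff $(1-\alpha_1(\theta))u_2(\theta)-k_2C_2(r_2,\theta)$ is strictly increasing there (as $r_2<s(\theta)\leq\theta$); therefore sender~2 places no mass on that open interval. Moreover $a>\theta\geq s(\theta)$ gives $a\neq s(\theta)$, hence $s(a)\neq\theta$, so Lemma~\ref{lemma:noatom2} forbids an atom at $s(a)\in S_2(\theta)\setminus\{\theta\}$. Consequently $\Pr[r_2\in[s(a),s(\theta))]=0$, so $\theta$ and $a$ induce $\p$ with the same probability, while $k_1C_1(a,\theta)>0=k_1C_1(\theta,\theta)$ and $u_1(\theta)>0$; thus $\theta$ strictly dominates $a$, contradicting $a\in S_1(\theta)$. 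Hence $a=\theta$ and $S_1(\theta)$ is convex.

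Finally I would note that the case $\theta\in(\theta_1,0]$ for $S_2$ is symmetric, using $\max S_2(\theta)=\theta$ (Lemma~\ref{lemma:supp3}), the convexity of $S_2(\theta)\setminus\{\theta\}$ (Lemma~\ref{lemma:convex}), and the same atom/monotonicity step with the roles of the two senders and of $s(\cdot)$ interchanged. The main obstacle is precisely this last step: one must show that a putative gap in sender~1's support forces sender~2 to put \emph{zero} mass on the whole interval $[s(a),s(\theta))$ — not merely to have no atom in its interior — which is where the strict monotonicity of sender~2's objective on the interior and Lemma~\ref{lemma:noatom2} (legitimately applied at the endpoint $s(a)$ because $s(a)\neq\theta$) are both needed; without them, $a$ could strictly outperform $\theta$ in the probability of inducing $\p$ and the domination would fail.
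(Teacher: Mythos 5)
Your proof is correct and takes essentially the same route as the paper's: suppose a gap $(\theta,a)$ above the truthful report, show that sender~2 then places zero mass on the corresponding swing interval $[s(a),s(\theta))$ (constant swing probability plus strictly decreasing cost in the interior, no atom at the endpoint), and conclude that sender~1's costly report at the far edge of the gap is dominated by a cheaper one. The only cosmetic difference is your choice of deviation target: you deviate to the truth $\theta$ itself, invoking Lemma~\ref{lemma:noatom2} at the endpoint $s(a)$, whereas the paper deviates to a report $s(s(\theta)-\epsilon)$ just inside the gap and runs an $\epsilon$-neighborhood argument.
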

\begin{proof}
	Consider an AE $\omega$ and suppose by way of contradiction that $S_1(\theta)$ is not convex for some $\theta\in[0,\theta_h)$. By Lemma~\ref{lemma:supp3} we have that $\min S_1(\theta)=\theta$, and by Lemma~\ref{lemma:convex} we have that $S_1(\theta)\setminus\{\theta\}$ is convex. Therefore, it must be that $\min S_1(\theta)\setminus\{\theta\}>\theta$ and $\phi_1(r_1,\theta)=0$ for every $r_1\in\left(\theta,\min S_1(\theta)\setminus\{\theta\}\right)$. In equilibrium, every $r_1>\min S_1(\theta)\setminus\{\theta\}$ such that $\phi_1(r_1,\theta)>0$ must yield the implementation of alternative $\p$ with strictly higher probability than truthful reporting, as $C_1(r_1,\theta)>0$. This is possible only if $\phi_2(r_2,\theta)>0$ for some $r_2\in \left[s\left(\min S_1(\theta)\setminus\{\theta\}\right),s(\theta)\right)$. However, for some $\epsilon>0$ small enough, it must be that $\phi_2(r_2',\theta)=0$ for every $r_2'\in\left[s\left(\min S_1(\theta)\setminus\{\theta\}\right),s(\theta)-\epsilon\right)$, as every such report $r_2'$ is dominated by reporting $s(\theta)-\epsilon$. Therefore, there exists an $\epsilon'>0$ such that $W_1^\omega(s(s(\theta)-\epsilon),\theta)>W_1^\omega(r_1',\theta)$ for all $r_1'\in [\min S_1(\theta)\setminus\{\theta\},\min S_1(\theta)\setminus\{\theta\}+\epsilon')$, contradicting that this is an equilibrium. Lemma~\ref{lemma:supportscutoffs} considers the case where $\theta\notin (\theta_l,\theta_h)$, and a similar argument applies to states $\theta\leq 0$ and support $S_2(\theta)$.
\end{proof}

\subsubsection{Proof of Proposition~\ref{prop:adveqm}}\label{app:aestrat}

\begin{step}\label{step:strategies}
	In an adversarial equilibrium, for every $\theta\in(\theta_l,\theta_h)$ and $i,j\in\{1,2\}$ with $i\neq j$, sender~$j$ delivers report $r_j\in S_j(\theta)\setminus\{\theta\}$ according to
	\[
	\psi_j(r_j,\theta)=\frac{k_i}{-u_i(\theta)}\frac{d C_i(s(r_j),\theta)}{d r_j}.
	\]
\end{step}

\begin{proof}
	Consider an AE and a state $\theta\in (\theta_l,\theta_h)$. Given strategy $\phi_1(\theta)$, sender~2 gets an expected utility of $W^\omega_2(r_2,\theta)=(1-\Phi_1(s(r_2),\theta))u_2(\theta)-k_2C_2(r_2,\theta)$ from delivering $r_2 \in S_2(\theta)\setminus\{\theta\}$. By Lemmata~\ref{lemma:convex} and \ref{lemma:noatom2} we have that $S_j(\theta)\setminus\{\theta\}$ is convex and atomless. By Lemmata~\ref{lemma:monot}, \ref{lemma:cutoffs}, and \ref{lemma:support1}, we have that $S_j(\theta)\subset \left[\ubar r_2(0),\bar r_1(0)\right]$ for all $\theta\in(\theta_l,\theta_h)$, and therefore by Lemma~\ref{lemma:swing} we have that $\frac{d s(r)}{dr}<0$ for all  $r_j\in S_j(\theta)$. Therefore, we can set $\frac{d W^\omega_2(r_2,\theta)}{d r_2}\big|_{r_2\in S_2(\theta)\setminus\{\theta\}}=0$, and since $\phi_j(r_j,\theta)=\psi_j(r_j,\theta)$ for all $r_j\in S_j(\theta)\setminus\{\theta\}$ (Lemma~\ref{lemma:noatom2}), we obtain the partial pdf 
	\[
	\psi_1(s(r_2),\theta)=\frac{k_2}{-u_2(\theta)}\frac{d C_2(r_2,\theta)}{d r_2}\frac{d r_2}{d s(r_2)}=\frac{k_2}{-u_2(\theta)}\frac{d C_2(r_2,\theta)}{d s(r_2)}.
	\] 
	By replacing $r_1=s(r_2)$ we obtain that $\psi_1(r_1,\theta)=\frac{k_2}{-u_2(\theta)}\frac{d C_2(s(r_1),\theta)}{d r_1}$ for $r_1\in S_1(\theta)\setminus\{\theta\}$. Similarly, we obtain that for $r_2 \in S_2(\theta)\setminus\{\theta\}$, $\psi_2(r_2,\theta)=\frac{k_1}{-u_1(\theta)}\frac{d C_1(s(r_2),\theta)}{d r_2}$.
\end{proof}


\begin{step}\label{step:supports}
	In an adversarial equilibrium, for every state $\theta\in (\theta_l,\theta_h)$, supports $S_j(\theta)$ are
	\begin{displaymath}
		S_1(\theta)=\{\theta\}\cup \left[\max \left\{s(\theta),\theta\right\},\min \left\{\bar r_1(\theta), s\left(\ubar r_2(\theta)\right)\right\}\right],
	\end{displaymath}
	\begin{displaymath}
		S_2(\theta)= \{\theta\}\cup\left[\max \left\{\ubar r_2(\theta), s\left(\bar r_1(\theta)\right)\right\},\min \left\{s(\theta),\theta\right\}\right].
	\end{displaymath}
\end{step}

\begin{proof}
	Consider an adversarial equilibrium and a state $\theta\in[0,\theta_h)$. Since for every $\theta\geq 0$ we have that $\theta\in S_1(\theta)$ (Lemma~\ref{lemma:supp3}) and both sets $S_1(\theta)$ and $S_1(\theta)\setminus\{\theta\}$ are convex (Lemmata~\ref{lemma:convex2} and \ref{lemma:convex}), it follows that $S_1(\theta)=[\theta,\max S_1(\theta)]$.
	
	Lemma~\ref{lemma:convex} shows that also $S_2(\theta)\setminus\{\theta\}$ is convex. Since $\min S_1(\theta)=\theta$, Lemma~\ref{lemma:support2} says that when $\theta>0$ we have that $\phi_2(r_2,\theta)=0$ for all $r_2\in(s(\theta),\theta)$, and therefore $\max S_2(\theta)\setminus\{\theta\}\leq s(\theta)$ for all $\theta\in(0,\theta_h)$. Suppose that $\max S_2(\theta) \setminus \{\theta\}<s(\theta)$. In this case, it must be that $\phi_1(r_1,\theta)=0$ for every $r_1\in(\theta, s(\max S_2(\theta) \setminus \{\theta\}))$, as for sender~1 every such a report $r_1$ would be dominated by truthful reporting. This is in contradiction to Lemma~\ref{lemma:convex2}, and therefore it must be that $\max S_2(\theta) \setminus \{\theta\}=s(\theta)$ for every $\theta\in(0,\theta_h)$. When $\theta=0$, we have that $\max S_2(0)=0$ (Lemma~\ref{lemma:supp3}). 
	
	Lemma~\ref{lemma:noatom2} shows that $\phi_2(r_2,\theta)$ is atomless in $S_2(\theta)\setminus\{\theta\}$. Therefore, for $r_2\in S_2(\theta)\setminus\{\theta\}$ we have that $\Phi_2(r_2,\theta)=\Psi_2(r_2,\theta)$, and therefore by using Step~\ref{step:strategies} we can write
	\[
	\Phi_2(r_2,\theta)|_{r_2\in S_2(\theta)\setminus\{\theta\}}=\int_{\min S_2(\theta)}^{r_2} \psi_2(r,\theta)d r =\frac{k_1}{u_1(\theta)}\left[ C_1(s(\min S_2(\theta)),\theta) - C_1(s(r_2),\theta) \right].
	\]
	The probability that sender~2 misreports information in state $\theta\in(0,\theta_h)$ is therefore
	\begin{equation}\label{eq:problie}
		\Phi_2(s(\theta),\theta)=\frac{k_1}{u_1(\theta)} C_1(s(\min S_2(\theta)),\theta).
	\end{equation}
	Since $\min S_2(\theta)\geq \ubar r_2(\theta)$ (Lemma~\ref{lemma:support1}), it follows from Lemma~\ref{lemma:swing} that, for every $\theta\in (0,\theta_h)$, $s(\min S_2(\theta))<\bar r_1(\theta)$. Lemma~\ref{lemma:supp4} shows that the set $S_2(\theta)\setminus\{\theta\}$ is not a singleton, and since $\max S_2(\theta)\setminus\{\theta\}\leq s(\theta)$ it must be that $\min S_2(\theta)<s(\theta)$. Therefore by Lemma~\ref{lemma:swing} we have that $s(\min S_2(\theta))\in(\theta,\bar r_1(\theta))$ for $\theta\in(0,\theta_h)$. Finally, by the definition of reach we get that $C_1(\bar r_1(\theta),\theta)=u_1(\theta)/k_1$, and $C_1(r_1,\theta)<u_1(\theta)/k_1$ for every $r_1\in[\theta,\bar r_1(\theta))$. Therefore, it follows that $\Phi_2(s(\theta),\theta)\in(0,1)$ for every $\theta\in (0,\theta_h)$. By using $s(s(r))=r$ and $s(0)=0$ (Lemma~\ref{lemma:swing}), when $\theta=0$ we obtain that $\Phi_2(s(0),0)=1$ only if $\min S_2(0)=\ubar r_2(0)$.
	
	The above argument shows that $\theta\in S_2(\theta)$ and that $\phi_2(\theta)$ has an atom in $r_2=\theta$ of size $\alpha_2(\theta)=1-\Phi_2(s(\theta),\theta)$.	Lemma~\ref{lemma:monot} implies that every pair of on path reports $(r_1,r_2)$ such that $r_j\geq 0$, $j\in\{1,2\}$, must yield $\beta(r_1,r_2)=\p$. Therefore, by reporting truthfully when $\theta\geq 0$, sender~2 obtains a payoff of $W^\omega_2(\theta,\theta)=u_2(\theta)$. It must be that  $\max S_1(\theta)\leq s(\min S_2(\theta))$, as otherwise every report $r_1>s(\min S_2(\theta))$ would be dominated by $s(\min S_2(\theta))$. Since $\phi_1(\theta)$ has no atom in $s(\min S_2(\theta))>\theta$ (Lemma~\ref{lemma:noatom2}), by reporting $r_2=\min S_2(\theta)$ sender~2 (almost) always induces the selection of his preferred alternative $\n$, and gets an expected payoff of $W^\omega_2(\min S_2(\theta),\theta)=-k_2C_2(\min S_2(\theta),\theta)$.

	In equilibrium, each sender must receive the same expected payoff from delivering any report that is in the support of its own strategy. Since by the definition of reach we obtain $C_2(\ubar r_2(\theta),\theta)=-u_2(\theta)/k_2$, it follows that $W^\omega_2(\min S_2(\theta),\theta)=u_2(\theta)=W^\omega_2(\theta,\theta)$ only if $\min S_2(\theta)=\ubar r_2(\theta)$. Therefore, for a $\theta\in[0,\theta_h)$, we have that $S_2(\theta)=[\ubar r_2(\theta),s(\theta)]\cup\{\theta\}$. It also follows that $\max S_1(\theta)=s(\ubar r_2(\theta))$: if $\max S_1(\theta)<s(\ubar r_2(\theta))$, then $\ubar r_2(\theta)<s(\max S_1(\theta))$ and every $r_2<s(\max S_1(\theta))$ would be strictly dominated by $s(\max S_1(\theta))$. Thus, $S_1(\theta)=[\theta,s(\ubar r_2(\theta))]$. Similar arguments apply to the case where $\theta\in(\theta_l,0)$, completing the proof.
\end{proof}

\begin{step}\label{step:atoms}
	In an adversarial equilibrium, for every state $\theta\in (\theta_l,\theta_h)$, strategies $\phi_j(\theta)$ have an atom at $r_j=\theta$ of size $\alpha_j(\theta)$, where
	\begin{displaymath}
		\alpha_1(\theta)=
		\begin{cases}
			\frac{k_2}{-u_2(\theta)} C_2\left( s(\theta),\theta \right) & \quad \text{if } \; \theta\in [0,\theta_h)\\ 
			1-\frac{k_2}{-u_2(\theta)}C_2\left(s(\bar r_1(\theta)),\theta\right) & \quad \text{if } \theta\in(\theta_l,0],
		\end{cases}
	\end{displaymath}
	\begin{displaymath}
		\alpha_2(\theta)=
		\begin{cases}
			1-\frac{k_1}{u_1(\theta)}C_1\left(s(\ubar r_2(\theta)),\theta\right) & \quad \text{if } \; \theta\in [0,\theta_h)\\ 
			\frac{k_1}{u_1(\theta)}C_1\left(s(\theta),\theta\right) & \quad \text{if } \theta\in(\theta_l,0].
		\end{cases}
	\end{displaymath}
\end{step}

\begin{proof}
	Consider an adversarial equilibrium and a state $\theta\in[0,\theta_h)$. The proof of Step~\ref{step:supports} shows that $\phi_2(\theta)$ has an atom in $r_2=\theta$ of size $\alpha_2(\theta)=1-\Phi_2(s(\theta),\theta)$. From equation~\eqref{eq:problie} and given $\min S_2(\theta)=\ubar r_2(\theta)$, we obtain that
	\[
	\alpha_2(\theta)=1-\frac{k_1}{u_1(\theta)} C_1(s(\ubar r_2(\theta)),\theta).
	\]
	By Lemma~\ref{lemma:noatom2}, sender~1's strategy $\phi_1(\theta)$ admits an atom only in $\min S_1(\theta)=\theta$. Therefore, we can use Step~\ref{step:strategies} to write
	\begin{displaymath}
		\begin{split}
			\Phi_1(r_1,\theta)|_{r_1\in S_1(\theta)} & =\alpha_1(\theta)+\int_{\theta}^{r_1}\psi_1(r,\theta)dr \\
			&= \alpha_1(\theta)+\frac{k_2}{-u_2(\theta)}\left[ C_2(s(r_1),\theta)-C_2(s(\theta),\theta) \right].
		\end{split}
	\end{displaymath}
	Since $\max S_1(\theta)=s(\ubar r_2(\theta))$, it must be that $\Phi_1(s(\ubar r_2(\theta)),\theta)=1$. By using $s(s(\ubar r_2(\theta)))=\ubar r_2(\theta)$ (Lemma~\ref{lemma:swing}) and given that from the definition of reach we obtain $C_2(\ubar r_2(\theta),\theta)=-u_2(\theta)/k_2$, we have that
	\begin{displaymath}
		\begin{split}
			\Phi_1(s(\ubar r_2(\theta)),\theta) & =\alpha_1(\theta)+\frac{k_2}{-u_2(\theta)}\left[ C_2(s(s(\ubar r_2(\theta))),\theta)-C_2(s(\theta),\theta)  \right] \\
			&=\alpha_1(\theta)+1-\frac{k_2}{-u_2(\theta)}C_2(s(\theta),\theta)=1,
		\end{split}
	\end{displaymath}
	from which we obtain that
	\[
	\alpha_1(\theta)=\frac{k_2}{-u_2(\theta)}C_2(s(\theta),\theta).
	\]
	A similar procedure can be used for $\theta\in(\theta_l,0)$, completing the proof.
\end{proof}


\begin{Alemma}\label{lemma:interval}
	In an adversarial equilibrium, for every (on path) pair of reports $(r_1,r_2)$ such that $r_2=s(r_1)$, the decision maker's posterior beliefs are
	\[
	p(\theta\,|\,r_1,r_2)>0 \;\text{ if and only if }\; \theta\in[\max\{r_2,\bar r_1^{-1}(r_1)\},\min\{r_1,\ubar r_2^{-1}(r_2)\}].
	\] 
\end{Alemma}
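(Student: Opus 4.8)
The plan is to read the support of $p(\cdot|r_1,r_2)$ directly off the equilibrium strategies of Propositions~\ref{prop:strategies}, \ref{prop:supports}, and \ref{prop:atoms}. Since the two senders randomize independently conditional on $\theta$, for an on path pair $(r_1,r_2)$ one has $p(\theta|r_1,r_2)\propto f(\theta)\,\phi_1(r_1,\theta)\,\phi_2(r_2,\theta)$, and because $f$ has full support this means $p(\theta|r_1,r_2)>0$ exactly on the set of states with $r_1\in S_1(\theta)$ and $r_2\in S_2(\theta)$: the strategy $\phi_j(\theta)$ places an atom on $\theta$ whenever $\theta\in(\theta_1,\theta_2)\setminus\{0\}$ and, by Proposition~\ref{prop:strategies} and the strict monotonicity of $C_i$ in its first argument away from the truth, a strictly positive density at every point of $S_j(\theta)\setminus\{\theta\}$, so up to a $\theta$-null set (irrelevant for the support) ``$r_j\in S_j(\theta)$'' and ``$\phi_j(r_j,\theta)>0$'' agree. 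By Lemma~\ref{lemma:swing}, $s$ is a strictly decreasing involution on $\hat R$ with $s(0)=0$; combined with the fact that (by Lemma~\ref{lemma:monot} and Proposition~\ref{prop:supports}) a pair with $r_1<0<r_2$ is never on path, this lets me work under $0\le r_1\le\bar r_1(0)$ and $\ubar r_2(0)\le r_2=s(r_1)\le 0$, so that $s(r_1)=r_2$ and $s(r_2)=r_1$.

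The core step is a case split on the state. Proposition~\ref{prop:supports} gives $S_1(\theta)=[\theta,s(\ubar r_2(\theta))]$ and $S_2(\theta)=\{\theta\}\cup[\ubar r_2(\theta),s(\theta)]$ for $\theta\in(0,\theta_2)$, and $S_1(\theta)=\{\theta\}\cup[s(\theta),\bar r_1(\theta)]$ and $S_2(\theta)=\{\theta\}\cup[s(\bar r_1(\theta)),\theta]$ for $\theta\in(\theta_1,0)$. Using that $\bar r_1,\ubar r_2$ are increasing and that $s$ is a decreasing involution, one checks that in the first regime $s$ maps $S_1(\theta)$ onto $S_2(\theta)\setminus\{\theta\}$ and in the second $s$ maps $S_2(\theta)$ onto $S_1(\theta)\setminus\{\theta\}$; since $r_2=s(r_1)\le 0<\theta$ in the first regime and $r_1\ge 0>\theta$ in the second, the conditions ``$r_1\in S_1(\theta)$'' and ``$r_2\in S_2(\theta)$'' are then equivalent, and reading off the interval $S_1(\theta)$ they amount to $\theta\le\min\{r_1,\ubar r_2^{-1}(r_2)\}$ when $\theta>0$ and to $\theta\ge\max\{r_2,\bar r_1^{-1}(r_1)\}$ when $\theta<0$. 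For $\theta=0$ one has $r_1\in S_1(0)=[0,\bar r_1(0)]$ and $r_2=s(r_1)\in[\ubar r_2(0),0]=S_2(0)$, so $0$ is in the support. For $\theta\notin(\theta_1,\theta_2)$ — in particular $\theta\in\{\theta_1,\theta_2\}$ — Lemma~\ref{lemma:supportscutoffs} gives $S_j(\theta)=\{\theta\}$, so membership would require $r_1=r_2=\theta$, which is impossible since $r_1\ge 0\ge r_2$ and they are not both zero.

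Writing $\ell:=\max\{r_2,\bar r_1^{-1}(r_1)\}$ and $m:=\min\{r_1,\ubar r_2^{-1}(r_2)\}$, the previous step identifies the posterior support as $\{\theta\in(\theta_1,0):\theta\ge\ell\}\cup\{0\}\cup\{\theta\in(0,\theta_2):\theta\le m\}$. To conclude I would show $\ell\in[\theta_1,0]$ and $m\in[0,\theta_2]$. The bounds $\ell\le 0\le m$ follow from $r_2=s(r_1)\le s(0)=0$, $\bar r_1^{-1}(r_1)\le\bar r_1^{-1}(\bar r_1(0))=0$, $r_1\ge 0$, and $\ubar r_2^{-1}(r_2)\ge\ubar r_2^{-1}(\ubar r_2(0))=0$. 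The bounds $\ell\ge\theta_1$ and $m\le\theta_2$ come from the fixed-point characterizations of the truthful cutoffs (Definition~\ref{def:cutoffs}, with $\theta_1,\theta_2\in\hat R$ by Lemma~\ref{lemma:cutoffs}): from $s(\theta_1)=\bar r_1(\theta_1)$ one gets $\theta_1=s(\bar r_1(\theta_1))$, so if both $r_2<\theta_1$ and $\bar r_1^{-1}(r_1)<\theta_1$ then $r_1<\bar r_1(\theta_1)$, hence $r_2=s(r_1)>s(\bar r_1(\theta_1))=\theta_1$, a contradiction; the inequality $m\le\theta_2$ is obtained in the same way from $s(\theta_2)=\ubar r_2(\theta_2)$. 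Given $\ell\in[\theta_1,0]$ and $m\in[0,\theta_2]$, the three pieces concatenate to the closed interval $[\ell,m]$, which is the asserted set.

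I expect the only real difficulty to be the bookkeeping: there is no conceptual obstacle, but one must track the orientation of the monotone substitutions (the involution $s$ reverses inequalities), handle the boundary states $\theta_1$, $0$, $\theta_2$ with care, and dispose of the measure-zero indifference of sender~1 at $\theta=\theta_1$ flagged in the proof of Lemma~\ref{lemma:supportscutoffs} by the same ``without loss of generality'' convention adopted there.
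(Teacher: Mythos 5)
Your argument is correct and follows essentially the same route as the paper: both read the posterior support off the equilibrium supports of Proposition~\ref{prop:supports} via $p(\theta|r_1,r_2)\propto f(\theta)\phi_1(r_1,\theta)\phi_2(r_2,\theta)$, using Lemmata~\ref{lemma:monot}, \ref{lemma:supportscutoffs}, and \ref{lemma:swing}; the paper treats a general on path pair with auxiliary states $t_1(r_1),t_2(r_2)$ and then specializes to $r_2=s(r_1)$, while you specialize to swing pairs upfront and case-split on the sign of $\theta$ using the involution property of $s$, which is only an organizational difference. The endpoint looseness at $\theta\in\{\theta_1,\theta_2\}$ that you flag is present in the paper's own proof as well and is immaterial at the density level.
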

\begin{proof}
	Consider an AE and a pair of reports $(r_1,r_2)$ such that $\bar r_1(0)\geq r_1>0>r_2\geq\ubar r_2(0)$. Given equilibrium supports in Step~\ref{step:supports}, all such pairs are on path (e.g., for $\theta=0$). Upon observing $(r_1,r_2)$, the decision maker forms posterior beliefs $p(\theta\,|\,r_1,r_2)$. By Lemma~\ref{lemma:monot}, it must be that $p(\theta\,|\,r_1,r_2)=0$ for every $\theta\notin[r_2,r_1]$. By Lemma~\ref{lemma:supportscutoffs}, it must be that $p(\theta\,|\,r_1,r_2)=0$ for every $\theta\notin[\theta_l,\theta_h]$. By Step~\ref{step:supports} we have that $\min S_2(\theta)\geq\ubar r_2(\theta)$ and $\max S_1(\theta)\leq \bar r_1(\theta)$, and therefore $p(\theta\,|\,r_1,r_2)=0$ for every $\theta\notin\left[\bar r_1^{-1}(r_1), \ubar r_2^{-1}(r_2)\right]$, where from equations~\eqref{eq:reach1} and \eqref{eq:reach2} we obtain that
	\[
	\bar r_1^{-1}(r_1)=\min\left\{\theta\in\Theta \,|\, u_1(\theta)=k_1C_1(r_1,\theta)  \right\},
	\]
	\[
	\ubar r_2^{-1}(r_2)=\max\left\{\theta\in\Theta \,|\, -u_2(\theta)=k_2C_2(r_2,\theta)  \right\}.
	\]
	From Step~\ref{step:supports} we also have that, for every $\theta\in[0,\theta_h)$, $\max S_1(\theta)=s(\ubar r_2(\theta))\leq r_1(\theta)$. Therefore, given the report $r_1\in(0,\bar r_1(0)]$, it must be that $p(\theta\,|\,r_1,r_2)=0$ for all $\theta$ such that $s(\ubar r_2(\theta))< r_1$. By Lemma~\ref{lemma:swing} and since $d \ubar r_2(\theta)/d\theta>0$, there is a state $\theta'$ such that $s(\ubar r_2(\theta'))=r_1$. Denote such a state by $t_1(r_1):=\{\theta\in\Theta\,|\,s(\ubar r_2(\theta))=r_1\}$, where $t_1(r_1)>0$ and $d t_1(r_1)/dr_1>0$. Similarly, denote $t_2(r_2):=\{\theta\in\Theta\,|\,s(\bar r_1(\theta))=r_2\}$. Given equilibrium supports, it must be that $p(\theta\,|\,r_1,r_2)=0$ for all $\theta\notin[t_2(r_2),t_1(r_1)]$.
	
	By Lemma~\ref{lemma:swing} and since $s(\ubar r_2(\theta_h))=\theta_h$ (Definition~\ref{def:cutoffs}), we obtain that $t_1(r_1)\leq\theta_h$ for every $r_1\in[\theta_h,\bar r_1(0)]$, and therefore $\min\{r_1,t_1(r_1)\}\leq \theta_h$ for all $r_1\in(0,\bar r_1(0)]$. Similarly, we get that $\max\{r_2,t_2(r_2)\}\geq \theta_l$ for all $r_2\in[\ubar r_2(0),0)$. Therefore, we have that $p(\theta\,|\,r_1,r_2)=0$ for every $\theta\notin[\max\{r_2,\bar r_1^{-1}(r_1), t_2(r_2)\},\min\{r_1,\ubar r_2^{-1}(r_2),t_1(r_1)\}]$, and by Step~\ref{step:supports} we obtain that $p(\theta\,|\,r_1,r_2)\propto f(\theta) \cdot\phi_1(r_1,\theta)\cdot\phi_2(r_2,\theta)>0$ otherwise.
	
	Consider now the case where $r_2=s(r_1)$ (or, by Lemma~\ref{lemma:swing}, $r_1=s(r_2)$). By definition, in state $\theta'=t_1(r_1)$ we have that $s(\ubar r_2(\theta'))=r_1$. Therefore, we get that $s(r_1)=\ubar r_2(\theta')=r_2$ and $\ubar r_2^{-1}(r_2)=\theta'=t_1(r_1)$. Similarly, we obtain that $\bar r_1^{-1}(r_1)=t_2(r_2)$. Therefore, for every pair of reports $(r_1,s(r_1))$ we have that $p(\theta\,|\,r_1,s(r_1))>0$ if and only if $\theta\in\left[\max\left\{r_2,\bar r_1^{-1}(r_1)\right\},\min\left\{r_1,\ubar r_2^{-1}(r_2)\right\}\right]$.
\end{proof}


After establishing the senders' equilibrium supports and strategies, I can now proceed to study the decision maker's posterior beliefs. It is sufficient to examine how posterior beliefs $p$ shape the swing report function $s(r)$. By Lemma~\ref{lemma:swing}, we have that $s(r)\in \left[\ubar r_2(0),\bar r_1(0)\right]$ for every $r\in \left[\ubar r_2(0),\bar r_1(0)\right]$, with $s(r)< 0$ if $r>0$, $s(r)>0$ if $r<0$, and $s(0)=0$. Given the supports and the strategies as in Steps~\ref{step:strategies}, \ref{step:supports}, and \ref{step:atoms}, we obtain that every pair of reports $(r_1,r_2)$ such that $\ubar r_2(0) \leq r_2 <0 <r_1 \leq \bar r_1(0)$ is on path. By Definition~\ref{def:swing} and Lemma~\ref{lemma:swing} we have that, for a pair of reports $(r_1,r_2=s(r_1))$,
\[
U_\dm (r_1,s(r_1))= U_\dm (s(r_2),r_2) = \int_\Theta u_\dm(\theta)p(\theta\,|\,r_1,s(r_1))d\theta=0.
\]
Therefore, we can use $p(r_1,s(r_1)\,|\,\theta)=\phi_1(r_1,\theta)\cdot\phi_2(s(r_1),\theta)$ and previous results to show how posterior beliefs $p$ pin down the swing report function $s(r)$ in an adversarial equilibrium. The next proposition shows how the swing report function depends on the model's parameters. 


\begin{step}\label{step:swing}
	In an adversarial equilibrium, the swing report function $s(r_i)$ is implicitly defined for $i,j\in\{1,2\}$, $i\neq j$, and $r_i\in \left[\ubar r_2(0),\bar r_1(0)\right]$, as
	\begin{equation}\label{eq:swing}
		s(r_i)=\left\{r_j\in \Theta \; \bigg| \; \int_{\max\left\{r_2,\bar r_1^{-1}(r_1)\right\}}^{\min\left\{r_1,\ubar r_2^{-1}(r_2)\right\}} f(\theta)\frac{u_\dm(\theta)}{u_1(\theta) u_2(\theta)}\frac{d C_j(r_j,\theta)}{d r_j}\frac{d C_i(r_i,\theta)}{d r_i}d\theta=0\right\}.
	\end{equation}
\end{step}

\begin{proof}
	Given the equilibrium reporting strategies $\phi_j(r_j\,|\,\theta)=\delta(r_j-\theta)\alpha_j(\theta)+\psi_j(r_j\,|\,\theta)$, $j\in\{1,2\}$ (Steps~\ref{step:strategies}, \ref{step:supports}, and \ref{step:atoms}), the mixed probability distribution $p(r_1,r_2\,|\,\theta)=\phi_1(r_1,\theta)\phi_2(r_2,\theta)$ is
	\begin{displaymath}
		\begin{split}
			p(r_1,r_2\,|\,\theta) =\;&\;\delta(r_1-\theta)\delta(r_2-\theta)\alpha_1(\theta) \alpha_2(\theta)  + \delta(r_1-\theta)\alpha_1(\theta) \psi_2(r_2,\theta)\\
			&\;+\delta(r_2-\theta) \psi_1(r_1,\theta)\alpha_2(\theta) +  \psi_1(r_1,\theta)\psi_2(r_2,\theta).
		\end{split}
	\end{displaymath}
	Consider a pair of reports $(r_1,r_2)$ such that $\bar r_1(0)\geq r_1>0>r_2\geq \ubar r_2(0)$ and $r_2=s(r_1)$ (as by Lemma~\ref{lemma:swing} we have that if $r>0$, then $s(r)<0$). Since $C_i(\theta,\theta)=0$ for every $\theta\in\Theta$ and $i\in\{1,2\}$, we obtain that $\psi_j(s(\theta),\theta)=0$ for $j\in\{1,2\}$, and therefore $p(r_1,s(r_1)\,|\,\theta)=\psi_1(r_1,\theta)\psi_2(s(r_1),\theta)$.

	The swing report $s(r_1)$ is defined in Definition~\ref{def:swing} to be the report $r_2\in \Theta$ such that $U_\dm (r_1,r_2)=\int_\Theta u_\dm(\theta)p(\theta\,|\,r_1,r_2)d\theta=0$, and by Lemma~\ref{lemma:swing} we know that $s(r_1)\in[\ubar r_2(0),0)$. By Lemma~\ref{lemma:interval} we have that $p(\theta\,|\,r_1,s(r_1))>0$ if and only if 
	\[
	\theta\in[\max\{r_2,\bar r_1^{-1}(r_1)\},\min\{r_1,\ubar r_2^{-1}(r_2)\}].
	\] 
	Therefore by using Bayes' rule we can rewrite the condition $U_\dm(r_1,s(r_1))=0$ as $G_s(r_1,s(r_1))=0$, where
	\[
	G_s(r_1,r_2)=\frac{1}{p(r_1,r_2)}\int_{\max\left\{r_2,\bar r_1^{-1}(r_1)\right\}}^{\min\left\{r_1,\ubar r_2^{-1}(r_2)\right\}}u_\dm(\theta) f(\theta) \psi_1(r_1,\theta) \psi_2(r_2,\theta)d\theta.
	\]
	By substituting for the equilibrium strategies $\psi_j(r_j,\theta)$ as described in Step~\ref{step:strategies}, we obtain the implicit definition of the swing report given in equation~\eqref{eq:swing}.
\end{proof}

\begin{proof}[{\bf Proof of Proposition~\ref{prop:adveqm}.}]
	The proof follows directly from Steps~\ref{step:strategies} to \ref{step:swing}. Part i) follows from Lemma~\ref{lemma:supportscutoffs}. Steps~\ref{step:strategies}--\ref{step:atoms} provide a characterization of the senders' reporting strategies and their supports. The CDF $\Phi_j(r_j,\theta)$ is obtained by integrating $\phi_j(r_j,\theta)$ from $\min S_j(\theta)$ or $-\infty$ to $r_j$. Part iii) follows from Step~\ref{step:swing} and the observation that, by construction, posterior beliefs $p$ must satisfy conditions~\eqref{eq:d} and \eqref{eq:c}.
\end{proof}

\begin{Acorollary}\label{cor:prob}
	In an AE, the probability of full revelation and the probability that senders deliver the same report increase as the realized state is further away from zero.
\end{Acorollary}
\begin{proof}
	Given the reporting strategies as in Proposition~\ref{prop:adveqm} (and Steps~\ref{step:strategies}--\ref{step:swing}), the probability that senders deliver the same report in state $\theta$ is $\alpha_1(\theta)\alpha_2(\theta)$. The probability that the decision maker fully learns the state is $\alpha_1(\theta)$ when $\theta\geq 0$ and $\alpha_2(\theta)$ otherwise. The proof follows directly from the following two derivatives. 
	\begin{displaymath}
		\frac{d\alpha_1(\theta)}{d\theta}=
		\begin{cases}
			\frac{k_2}{u_2(\theta)^2}\frac{du_2(\theta)}{d\theta} C_2\left( s(\theta),\theta \right) + \frac{k_2}{-u_2(\theta)}\frac{dC_2\left( s(\theta),\theta \right)}{d\theta}>0 & \quad \text{if } \; \theta\in [0,\theta_h)\\ 
			-\frac{k_2}{u_2(\theta)^2}\frac{d u_2(\theta)}{d\theta}C_2\left(s(\bar r_1(\theta)),\theta\right) - \frac{k_2}{-u_2(\theta)}\frac{dC_2\left(s(\bar{r}_1(\theta)),\theta\right)}{d\theta}<0 & \quad \text{if } \theta\in(\theta_l,0),
		\end{cases}
	\end{displaymath}
	\begin{displaymath}
		\frac{d\alpha_2(\theta)}{d\theta}=
		\begin{cases}
			\frac{k_1}{u_1(\theta)^2}\frac{du_1(\theta)}{d\theta}C_1\left(s(\ubar r_2(\theta)),\theta\right) - \frac{k_1}{u_1(\theta)}\frac{dC_1\left(s({\scriptstyle \ubar{r}}_2(\theta)),\theta\right)}{d\theta}>0 & \quad \text{if } \; \theta\in [0,\theta_h)\\ 
			-\frac{k_1}{u_1(\theta)^2}\frac{du_1(\theta)}{d\theta}C_1\left(s(\theta),\theta\right) + \frac{k_1}{u_1(\theta)}\frac{d C_1(s(\theta),\theta)}{d\theta}<0 & \quad \text{if } \theta\in(\theta_l,0).
		\end{cases}
	\end{displaymath}
\end{proof}

\subsubsection{Proof of Theorem~\ref{th:advthm}}

\begin{step}\label{step:uniqueness}
	Adversarial equilibria are essentially unique.
\end{step}

\begin{proof}
	The solution of equation~(\ref{eq:swing}) is unique and depends only on the model's primitives $u_\dm(\theta)$, $f(\theta)$, $u_i(\theta)$, $\tau_i$, $k_i$, $C_i(r_i,\theta)$, for $i\in\{1,2\}$. Therefore, for every $r\in[\ubar r_2(0),\bar r_1(0)]$, the swing report $s(r)$ is the same in every AE. It follows that the truthful cutoffs $\theta_l$ and $\theta_h$, and the senders' reporting strategies $\phi_j(\theta)$ and supports $S_j(\theta)$, $j\in\{1,2\}$, are also the same in all AE. Thus, all AE are strategy- and outcome-equivalent.
\end{proof}

To study whether there exist adversarial equilibria with unprejudiced beliefs I apply the following definition, which is adapted from \cite{bagwell1991} to accommodate non-degenerate mixed strategies.\footnote{Definition~\ref{def:unprejudiced}, introduced by \cite{vida2021} and used in Section~\ref{sec:fre}, is a weaker version of Definition~\ref{def:unprejudiced2}. Lemma~\ref{lemma:epsilonunprejudiced} applies to unprejudiced beliefs as in both definitions.} 
\begin{definition}\label{def:unprejudiced2}
	Given senders' strategies $\phi_j$, the decision maker's posterior beliefs $p$ are unprejudiced if, for every off path pair of reports $(r_1,r_2)$ such that $\phi_j(r_j,\theta')>0$ for some $j\in\{1,2\}$ and $\theta'\in\Theta$, we have that $p(\theta'' \,|\, r_1,r_2)>0$ if and only if there is a sender $i\in\{1,2\}$ such that $\phi_i(r_i,\theta'')>0$.
\end{definition}

\begin{Alemma}\label{lemma:unprejudicedDE}
	There exist adversarial equilibria with unprejudiced beliefs.
\end{Alemma}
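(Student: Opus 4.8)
The plan is to keep the essentially unique reporting strategies $\phi_j$ characterized in Propositions~\ref{prop:strategies}--\ref{prop:swing} and to exhibit a choice of off path beliefs that is unprejudiced and that makes these strategies a direct equilibrium; since the strategies and all on path beliefs are already pinned down, everything reduces to choosing the off path beliefs well. First I would record which pairs are on path: by Proposition~\ref{prop:supports} and Lemma~\ref{lemma:interval}, every conflicting pair $(r_1,r_2)$ with $\ubar r_2(0)\le r_2<0<r_1\le\bar r_1(0)$ is on path, as are the matching pairs $(\theta,\theta)$ with $\theta\notin(\theta_1,\theta_2)$ and every pair in which one sender reports truthfully while the other misreports inside his own equilibrium support; on all of these, Bayes' rule applies, the induced choice $\beta(\cdot)$ is governed by the swing report function of Proposition~\ref{prop:swing} for conflicting pairs and by Lemma~\ref{lemma:monot} together with \eqref{eq:c} for the rest, and \eqref{eq:m}, \eqref{eq:d}, and \eqref{eq:c} already hold there.

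For an off path pair $(r_1,r_2)$, Definition~\ref{def:unprejudiced2} forces the belief to be supported on $T(r_1,r_2):=\{\theta:\phi_1(r_1,\theta)>0\}\cup\{\theta:\phi_2(r_2,\theta)>0\}$ whenever that set is nonempty, and leaves it free otherwise; equivalently one may take the limiting beliefs of an $\varepsilon$-perturbed game as in the proof of Lemma~\ref{lemma:epsilonunprejudiced}, which are automatically supported on $T(r_1,r_2)$. Within $T(r_1,r_2)$ I would pick the belief that extends the swing ordering: set $s(r)=-\infty$ for $r>\bar r_1(0)$ and $s(r)=+\infty$ for $r<\ubar r_2(0)$, require $\beta(r_1,r_2)=\p$ exactly when $r_1\ge s(r_2)$, and choose $p(\cdot\mid r_1,r_2)$ inside $T(r_1,r_2)$ realizing this while keeping $U_\dm$ weakly increasing in each report. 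The step needing care is that $T(r_1,r_2)$ is rich enough for such a belief to exist: if $r_1\ge s(r_2)$ and $r_1\ge 0$ then $r_1$ is truthfully reported in state $r_1\ge 0$, so $T(r_1,r_2)$ meets $[0,\infty)$, and symmetrically for $r_2\le 0$; the remaining configurations are resolved using the support formulas of Proposition~\ref{prop:supports}, which also show that $T(r_1,r_2)$ contains states of both signs whenever one report lies in the interior of $\hat R$ and the other is extreme, because sender~2 misreports negative reports in positive states and sender~1 misreports positive reports in negative states. Monotonicity of the resulting $U_\dm$ then follows from Lemma~\ref{lemma:swing} (that $s$ is strictly decreasing on $\hat R$) together with \eqref{eq:m}--\eqref{eq:d} on path, and \eqref{eq:c} is preserved because $(\bar r_1(0),\ubar r_2(0))$ and $(0,0)$ are covered by the on path analysis.

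Finally I would verify that no sender has a profitable deviation. By the proofs of Propositions~\ref{prop:supports} and \ref{prop:atoms}, in state $\theta$ sender~$j$'s payoff is constant on $S_j(\theta)$ and equals the payoff from his most extreme in-support report, which by the reach identities of Proposition~\ref{prop:atoms} is already the largest payoff $j$ could obtain by securing his favorite alternative at that report's cost. A deviation to $r_j\notin S_j(\theta)$ is then of one of three kinds: a lie in the wrong direction or beyond $j$'s reach, strictly dominated by truthful reporting via Lemma~\ref{lemma:monot} and \eqref{eq:reach1}--\eqref{eq:reach2}; a report within reach but strictly more extreme than every report in $S_j(\theta)$, which by \eqref{eq:m}--\eqref{eq:d} secures $j$'s favorite alternative but at a cost strictly exceeding that of his most extreme in-support report, hence is strictly worse; or a report falling inward into a gap of the equilibrium support, where the opponent places no probability on the matching swing report, so that $r_j$ secures $j$'s favorite alternative with weakly lower probability than a strictly cheaper in-support report. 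In every case the deviation fails, and since $\beta$ is sequentially rational by construction of $p$, the pair $(\phi_j,p)$ is a direct equilibrium with unprejudiced beliefs. The main difficulty is the second paragraph: simultaneously arranging that the unprejudiced support $T(r_1,r_2)$ admits a belief realizing the extended swing rule, that the resulting $U_\dm$ satisfies \eqref{eq:m} and \eqref{eq:d} globally, and that no off path pair is thereby forced into a belief that would make some deviation profitable.
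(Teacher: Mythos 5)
Your proposal is correct and follows essentially the same route as the paper's proof of Lemma~\ref{lemma:unprejudicedDE}: retain the characterized direct-equilibrium strategies, support off path beliefs on the states consistent with one sender's equilibrium play (your extended swing rule $\beta=\p$ iff $r_1\geq s(r_2)$ reproduces exactly the actions the paper obtains by case-by-case assignment of a single deviator), and rule out deviations with the same reach/cost dominance arguments. The two small imprecisions---Definition~\ref{def:unprejudiced2} in fact requires positive weight on \emph{every} state in $T(r_1,r_2)$, not merely support within it, and $r_1=0$ is never reported truthfully since $\alpha_1(0)=0$---are harmless and are handled exactly as in the paper, by a small perturbation of the concentrated beliefs (equivalently, the $\varepsilon$-perturbed limit beliefs you invoke).
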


\begin{proof}
	Consider an AE and an off path pair of reports $(r_1,r_2)$. By Steps~\ref{step:strategies}, \ref{step:supports}, and \ref{step:atoms}, and by Lemma~\ref{lemma:supportscutoffs}, we obtain that the only pair of reports such that $\phi_j(r_j,\theta)=0$ for all $\theta\in\Theta$ and $j\in\{1,2\}$ is $(0,0)$. For every other off path pair of reports, there is always a sender $i$ such that $\phi_i(r_i,\theta)>0$ for some $\theta\in\Theta$. There are three types of off path pairs of reports that need to be considered: those that violate Lemma~\ref{lemma:monot}, such as when $r_1>r_2$; those that violate Step~\ref{step:supports}, such as when $r_1>s(\ubar r_2(r_2))$; those that violate Lemma~\ref{lemma:supportscutoffs}, such as when $r_1\neq r_2$ for some $(r_1,r_2)\notin (\theta_l,\theta_h)^2$. 
	
	For beliefs to be unprejudiced, Definition~\ref{def:unprejudiced2} requires that for every such off path pair of reports we have that $p(\theta'' \,|\, r_1,r_2)>0$ if and only if there is a sender $i\in\{1,2\}$ such that $\phi_i(r_i,\theta'')>0$. Since $p(\theta'' \,|\, r_1,r_2)$ can be arbitrarily small, I can focus on the decision maker's beliefs that only one sender is deviating. Specifically, I will consider some posterior beliefs $p'$ which, given an off path pair of reports $(r_1,r_2)$ and provided that $\phi_j(r_j,\theta)>0$ for some $\theta\in\Theta$ and $j\in\{1,2\}$, rationalize deviations as originating with certainty from one specific sender $i$. If there is an AE with such posterior beliefs $p'$, then there exists an AE with posterior beliefs $p''$ (e.g., a small perturbation of $p'$) that satisfy Definition~\ref{def:unprejudiced2} (and hence also Definition~\ref{def:unprejudiced}).

	First, if $0\leq r_1<r_2$ (resp.~$r_1<r_2\leq 0$), then set $p'$ such that the decision maker believes that sender $1$ (resp.~$2$) is the deviator. Given the equilibrium strategies, it must be that sender~$2$ ($1$) is reporting truthfully, and therefore $p'$ leads to $\beta(r_1,r_2)=\p$ ($\n$). Consider now the case where $r_1<0<r_2$, and set $p'$ such that the decision maker believes that only sender~1 (or 2) is deviating. Therefore, it must be that sender~2 (1) is reporting truthfully, and therefore $\beta(r_1,r_2)=\p$ ($\n$). 
	Second, consider an off path pair of reports such that, for $x\geq 0$, we have that $r_2>x$ and $r_1\geq s(\ubar r_2 (x))$ (resp.~$r_1<y\leq 0$ and $r_2\leq s(\bar r_1(y))$). If the decision maker believes that sender $1$ ($2$) is the deviator, then it must be that $\theta=r_2$ ($\theta=r_1$) and therefore $\beta(r_1,r_2)=\p$ ($\n$). Finally, consider an off path pair $(r_1,r_2)\notin (\theta_l,\theta_h)^2$ with $r_1\neq r_2$. If both $r_1,r_2 \geq \theta_h$ (resp.~$r_1,r_2 \leq \theta_l$), then, by inferring that only one sender is deviating, the decision maker believes that $\theta\geq \theta_h>0$ ($\theta\leq\theta_l<0$) and selects $\beta(r_1,r_2)=\p$ ($\n$). 
	
	Since $p'$ is consistent with conditions~\eqref{eq:c} and \eqref{eq:d}, and since given $p'$ no sender is better off deviating from the prescribed equilibrium strategies, it follows that there are AE with unprejudiced beliefs as defined in Definitions~\ref{def:unprejudiced} and \ref{def:unprejudiced2}.
\end{proof}

The next corollary confirms that there exist adversarial equilibria supported by unprejudiced beliefs (as in both  Definition~\ref{def:unprejudiced} and \ref{def:unprejudiced2}) that are also $\varepsilon$-robust.\footnote{Since $\varepsilon$-robustness implies unprejudiced beliefs, it would be sufficient to show that there exist adversarial equilibria that are $\varepsilon$-robust. Corollary~\ref{step:robustDE} simply remarks that the two refinements are different.} 

\begin{step}\label{step:robustDE}
	There are adversarial equilibria with unprejudiced beliefs that are also $\varepsilon$-robust.
\end{step}

\begin{proof}
	Consider an $\varepsilon$-perturbed game with sequence $\varepsilon^n$ and full support distributions $\hat G=(\hat G_1, \hat G_2)$ such that $\hat g_1(r_1)\approx 0$ for all $r_1<0$ and $\hat g_2(r_2)\approx 0$ for all $r_2>0$. This means that it is relatively unlikely that the decision maker will misinterpret the report of sender~1 (resp. 2) to be negative (resp. positive). By equation \eqref{eq:epsilonbayes}, the limit beliefs $\hat p_{0^+}$ induced by the strategies of an AE after the decision maker observes a pair of reports $(r_1,r_2)$ such that $0\leq r_1<r_2$ are
	\begin{displaymath}
		\hat p_{0^+}(\theta\,|\,r_1\geq0,r_2>0)
		\approx  f(\theta)\frac{\delta(r_2 - \theta)\alpha_2(\theta)}{f(r_2)\alpha_2(r_2)}. 
	\end{displaymath}
	Therefore, the CDF $\hat P_{0^+}=\int \hat p_{0^+}(\theta\,|\,r_1,r_2)d\theta$ is such that
	\begin{displaymath}
		\hat P_{0^+}(\theta\,|\,r_1\geq 0,r_2>0) \approx
		\begin{cases}
			0 & \quad \text{if } \theta< r_2\\ 
			1 & \quad \text{if } \theta\geq r_2.
		\end{cases}
	\end{displaymath}
	
	As $\varepsilon^n \to 0^+$ and for every off path pair of reports that are both positive, the decision maker is almost sure that the realized state coincides with the report of sender~2. Similarly, we obtain that $\hat P_{0^+}(r_1<0,r_2\leq 0)\approx 0$ for all $\theta<r_1$ and $\approx 1$ otherwise, and by Lemma~\ref{lemma:epsilonunprejudiced} we have that $\hat p_{0^+}(\theta\,|\,r_1<0,r_2>0)>0$ only for $\theta\in\{r_1,r_2\}$. Therefore, the limit beliefs $\hat p_{0^+}$ are arbitrarily close to the posterior beliefs $p'$ in the proof of Lemma~\ref{lemma:unprejudicedDE}, and therefore can support an AE. Since an AE is also a PBE, by Lemma~\ref{lemma:epsilonunprejudiced} we obtain that some adversarial equilibria with unprejudiced beliefs are $\varepsilon$-robust.
\end{proof}

\begin{step}\label{step:existence}
	An adversarial equilibrium always exists.
\end{step}

\begin{proof}
	Given strategies $\phi_j(r_j,\theta)=\delta(r_j-\theta)\alpha_j(\theta)+\psi_j(r_j,\theta)$ as in Steps~\ref{step:strategies} and \ref{step:atoms}, with support $S_j(\theta)$ as in Step~\ref{step:supports}, posterior beliefs $p(\theta\,|\,r_1,r_2)$ are such that the swing report function $s(r)$ is as in Step~\ref{step:swing}. Given $s(r)$, strategies $\phi_j(r_j,\theta)$ are optimal by construction, and therefore no sender $j\in\{1,2\}$ is better off deviating from $\phi_j(r_j,\theta)$. Therefore, for every primitive of the model satisfying the conditions outlined in Section~\ref{sec:model}, there must exist an adversarial equilibrium as defined by Definition~\ref{def:directeqa}.
\end{proof}

\begin{proof}[{\bf Proof of Theorem~\ref{th:advthm}.}]
	The proof follows directly from Steps~\ref{step:uniqueness}--\ref{step:existence}. Part iii) is implied by Steps~\ref{step:uniqueness} and \ref{step:robustDE}.
\end{proof}

\section{Supplementary Appendix}\label{sec:suppapp}
\subsection{Example and Extensions}\label{app:example}

\symmetric*

\begin{proof}
	The proof follows directly from Step~\ref{step:swing}: consider a symmetric environment and suppose that $s(r)=-r$. Given a report $r\in(0,\bar r_1(0))$, the interval of integration in~\eqref{eq:swing} has $\max\{-r,\bar r_1^{-1}(r)\}=-\min\{r,\ubar r_2^{-1}(-r)\}$. Since the integrand in~\eqref{eq:swing} is symmetric around zero, we obtain that $G_s(r,-r)=0$, confirming that indeed $s(r)=-r$.
\end{proof}

\coalition*

\begin{proof}
	For a proof, see the appendix in \cite{vaccari2022efficient}.
\end{proof}

\silence*

\begin{proof}
	Consider the model's extension where senders can withhold information at a cost $c_j>0$, $j\in\{1,2\}$. Suppose there exists a REE where information withholding takes place on the equilibrium path. In a REE, given the senders' strategies, the decision maker selects alternative $\p$ when $\theta\geq 0$, and alternative $\n$ otherwise. Since both silence and misreporting are costly, the equilibrium must be such that sender~2 reports truthfully when $\theta\geq 0$, and sender~1 reports truthfully when $\theta<0$. As a result, information withholding is either performed by sender~1 in some $\theta\geq 0$ and/or by sender~2 in some $\theta<0$. 
	
	Suppose there is a REE where $r_2=\theta$ for all $\theta\geq 0$, while $r_1=\varnothing$ for some $\theta'\geq 0$ (the proof is similar for the other case). To sustain this REE, beliefs must assign $\beta(\theta',\theta')=\n$ to all $\theta'\geq 0$ such that $\rho_1(\theta')=\varnothing$. An argument similar to that in the proof of Proposition~\ref{prop:nonREE} shows that this cannot be a REE: sender~1 can deviate by reporting truthfully when the strategy prescribes information withholding. From the pair of reports $(\theta',\theta')$, the decision maker infers that sender~1 has deviated from the equilibrium strategy because sender~2 delivers $r_2=\theta'$ only in state $\theta'$ whereas sender~1 never delivers $\theta'$ on path. After learning from sender~2 that $\theta'\geq 0$, the decision maker must select $\p$, contradicting that $\beta(\theta',\theta')=\n$. As a result, there are no unprejudiced REE when silence is costly for both senders.
	
	Consider now a model's extension where sender~2 can withhold information at no cost, $c_2=0$, and focus on the REE with the following strategies: sender~1 plays the same reporting strategy as in Figure~\ref{fig:fre} (Section~\ref{sec:fre}), whereas sender~2 always stays silent (the proof for the case $c_1=0$ is similar). The beliefs of the decision maker are such that, no matter what sender~2 reports, she selects $\p$ when $r_1\geq \bar r_1(0)$, and selects $\n$ otherwise. Sender~2 cannot gain by deviating from his withholding strategy, for the decision maker would anyway ignore his reports.\footnote{Costless silence effectively restores the possibility of babbling (cfr. Lemma~\ref{lemma:babbling}).} The off path beliefs supporting this REE do not suffer from the problems outlined in Section~\ref{sec:fre}. To see this, consider the out-of-equilibrium contingency where the decision maker observes the pair of reports $(r_1',\varnothing)$ for some $r_1'\in[0,\bar r_1(0))$. Since sender~2 always stays silent, such an off path pair of reports cannot be originating from a double deviation. The decision maker would understand that sender~1 is the deviator but, differently from the FRE considered in Section~\ref{sec:fre}, now she cannot learn the realized state from sender~2. The decision maker can reply to any report of sender~1 that is lower than $\bar r_1(0)$ by selecting $\n$, under the belief that such a report may have been delivered in some negative state. Individual deviations by sender~2 are always detectable, and the state's sign is always revealed by sender~1's equilibrium reporting strategy. As a result, this REE is unprejudiced.
\end{proof}


\uncertain*

\begin{proof}
	In every equilibrium of  $\Gamma'$, misreporting occurs in some state outside $\left(\ubar t,\bar t\right)$. Suppose by way of contradiction that $\rho_j(\theta)=\theta$ for all $\theta\notin\left(\ubar t,\bar t\right)$ and $j\in\{1,2\}$. By Lemma~\ref{lemma:monot}, senders play monotonic reporting strategies\footnote{Since Lemma~1 follows from \eqref{eq:m}, it naturally applies to this variation of the baseline model as well.} in equilibrium, i.e., $\sup S_2(\theta)\leq \theta \leq \inf S_1(\theta)$. Therefore, it must be that $\beta\left(\bar t, \bar t\right)=\p$ and, to discourage deviations, beliefs must be such that $\beta\left(\bar t,\ubar t\right)=\p$. Since $\bar t < \bar r_1(\ubar t)$, sender~1 can profitably deviate by reporting $r_1=\bar t$ when $\theta=\ubar t$, contradicting the supposition.
	
	Next, consider a REE of $\Gamma'$. By definition of REE, it must be that for every $\theta\in\Theta$ and $r_j\in S_j(\theta)$ the decision maker selects $\beta(r_1,r_2)=\p$ if $\theta\geq\tau_{dm}$, and selects $\beta(r_1,r_2)=\n$ otherwise. When $\theta\geq\bar t$, the decision maker selects $\p$ on the equilibrium path, as $\bar t\geq \tau_{dm}$ surely. Consequently, when $\theta\geq\bar t$, truth-telling is strictly dominant for sender~2, who instead prefers action $\n$. For a similar reason, sender~1 reports truthfully in every state $\theta<\ubar t$. Since in every equilibrium there must be some misreporting, it has to be the case that sender~1 misreports in some state $\theta\geq \bar t$, sender~2 misreports in some $\theta<\ubar t$, or both.
	
	Suppose that, in this REE, sender~1 misreports in some state $\theta'\geq \bar t$. As argued before, sender~2 reports truthfully in that state, i.e., $\rho_2(\theta')=\theta'\neq \rho_1(\theta')$. By Lemma~\ref{lemma:monot}, in equilibrium sender~2 never delivers reports that are strictly higher than the realized state. That is, $\sup S_2(\theta)\leq \theta$. As a result, the report $r_2=\theta'$ is never delivered by sender~2 when the state is lower than $\theta'$. Since sender~2 reports truthfully in every $\theta\geq \bar t$, the report $r_2=\theta'$ is on path only when the state is $\theta'$. By contrast, the pair of reports $(\theta',\theta')$ is off path because sender~1 misreports in state $\theta'$. To sustain this REE, beliefs must be such that $\beta(\theta',\theta')=\n$, for otherwise sender~1 would prefer to not misreport in state $\theta'$. However, these off path beliefs have the same issues discussed in Section~\ref{sec:fre}: if we require the decision maker to conjecture deviations as originating from one sender only (whenever possible), then upon observing the pair $(\theta',\theta')$ she should infer that sender~1 performed the deviation. For otherwise, the decision maker must believe that sender~2 purposefully deviated from the prescribed equilibrium strategies by delivering a strictly dominated report. Given that sender~2 truthfully reports $\theta'$ only when the state is $\theta'$, the decision maker learns the true state after observing the off path pair of reports $(\theta',\theta')$. Consequently, she selects action $\p$ because $\theta'\geq\bar t$, contradicting that $\beta(\theta',\theta')=\n$. A similar argument applies to REE where sender~2 misreports in some state $\theta''< \ubar t$. As a result, if there is a REE of $\Gamma'$, then it is not unprejudiced.
\end{proof}

\subsection{An Example of Adversarial Equilibrium}\label{sec:symmexample}

The following example illustrates players' strategies in an adversarial equilibrium. Consider the following payoff structure: when the decision maker selects the positive alternative, sender~1 obtains a payoff of 1 and sender~2 obtains a payoff of -1; when selecting the positive alternative, the decision maker obtains a payoff of 1 if the state is positive, and a payoff of -1 if the state is negative. All players get a payoff of zero when the decision maker selects the negative alternative. Senders have no common interest: their thresholds are $\tau_1=-\infty$ and $\tau_2=+\infty$. Senders incur misreporting costs according to the function 
\[
k_j C_j(r_j,\theta)= |r_j-\theta|.
\]
From \eqref{eq:reach1} and \eqref{eq:reach2}, the upper reach of sender~1 in state $\theta$ is $\bar r_1(\theta)=\theta+1$, and the lower reach of sender~2 in state $\theta$ is $\ubar r_2(\theta)=\theta-1$. The state space is the real line, and the state is distributed according to a full support density function which is symmetric around zero.

Before proceeding, recall that in an adversarial equilibrium senders play monotonic reporting strategies: sender~1 delivers reports that are higher than or equal to the realized state; sender~2 delivers reports that are lower than or equal to the realized state (Lemma~\ref{lemma:monot}). As noted before, when both senders claim that the state is positive (resp. negative), the decision maker expects the state to be positive (resp. negative).

Suppose that, in case of conflicting reports, the decision maker assigns the same weight to the reports of the two senders. Specifically, when senders' reports are such that $-r_2\leq 0 \leq r_1$, the decision maker behaves as if she expects the state to be the average of the two reports. If this average is greater than zero, then she selects the positive alternative; otherwise, she selects the negative alternative. Alternatively, we can say that the decision maker follows the advice delivered by the most ``extreme'' report, that is, the report that is larger in absolute value.

Given the decision maker's behaviour, senders always report truthfully in sufficiently extreme states. For example, suppose that the realized state is $\theta=1/2$. Because of report monotonicity, sender~1 delivers reports that are higher than or equal to $1/2$. To persuade the decision maker in selecting the negative alternative, sender~2 must deliver a report that is lower than $-1/2$. Such a report is also lower than sender~2's reach, $\ubar r_2(1/2)=-1/2$, and thus he cannot profitably persuade the decision maker. As a result, sender~2 saves on costs by reporting truthfully, and sender~1 replies by following suit. This holds true in every state higher than $1/2$ and, by symmetry, in every state lower than $-1/2$. Full revelation via truthful reporting always takes place in sufficiently extreme states.

Suppose now that the realized state takes the more moderate value of $\theta=1/4$. To appreciate senders' behavior in this case, it is important to first understand which reports are never delivered in an adversarial equilibrium. For example, sender~2 never delivers reports that are not ``strong enough'' to persuade the decision maker. Recall that sender~1 never claims that the state is lower than its true value, and so $r_1\geq 1/4$. Therefore, sender~2 never misreports between $-1/4$ and $1/4$, as doing so would be costly but fruitless. Moreover, sender~2 does not deliver reports that are exceedingly low, as they would be too expensive. Specifically, sender~2 never delivers reports that are lower than his own reach, $\ubar r_2(1/4)=-3/4$. As a result, sender~1 does not need to deliver reports that are higher than $3/4$, even though his reach is $\bar r_1(1/4)=5/4$. 

We can focus now on what senders do deliver in an adversarial equilibrium, and how often. When $\theta=1/4$, sender~2 has to deliver at least $r_2<-1/4$ to persuade the decision maker. Differently than before, such a report is now within sender~2's reach, as $\ubar r_2(1/4)=-3/4$. To prevent sender~2 from persuading the decision maker, sender~1 must misreport as well. However, now senders' cannot play pure strategies: when sender~1 reports truthfully, sender~2 wants to misreport; when sender~2 misreports, sender~1 can misreport to the point that sender~2 is better off by reporting truthfully. However, when sender~2 reports truthfully, sender~1 wants to report truthfully as well, taking us back to square one. With probability $1/2$ each sender reports truthfully, and misreport otherwise. When misreporting, sender~1 is equally likely to deliver any report between $1/4$ and $3/4$; sender~2 is equally likely to deliver any report between $-3/4$ and $-1/4$.

The strategies in this adversarial equilibrium are as follows: in extreme states, when $\theta\notin(-1/2,1/2)$, senders always report truthfully; in moderate states, when $\theta\in(-1/2,1/2)$, each sender reports truthfully with probability $2|\theta|$; otherwise, they misreport. Conditional on misreporting when the state is positive, sender~1 is equally likely to deliver any report in the set $(\theta,1-\theta]$, and sender~2 is equally likely to deliver any report in the set $[\theta-1,-\theta]$. When the state is negative, sender~1 is equally likely to deliver any report in the set $[-\theta,\theta+1]$, and sender~2 is equally likely to deliver any report in the set $[-\theta-1,\theta)$. Given senders' symmetric strategies, when reports have opposed signs the decision maker behaves as if she expects the state to be the average of the two reports; when reports have the same sign, the decision maker learns that the state is equal to the report that is closer to zero.


\addcontentsline{toc}{section}{References}
\bibliographystyle{apacite}
\bibliography{biblio_compsig}
\end{document}